   \def\@citecolor{blue}%
   \def\@urlcolor{blue}%
   \def\@linkcolor{blue}%
\def\orcidID#1{\smash{\href{http://orcid.org/#1}{\protect\raisebox{-1.25pt}{\protect\includegraphics{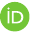}}}}}
\newcommand{\acc}{\mbox{\sc acc}}
\newcommand{\rej}{\mbox{\sc rej}}
\newcommand{\init}{\iota}
\newcommand{\infi}{{\it inf}}
\newcommand{\comp}{{\it comp}}
\newcommand{\zug}[1]{\langle #1 \rangle}
\newcommand{\lang}{L}
\newcommand{\A}{{\cal A}}
\newcommand{\R}{{\cal R}}
\newcommand{\T}{{\cal T}}
\newcommand{\U}{{\cal U}}
\DeclareMathOperator{\DBWreal}{DBW}
\DeclareMathOperator{\NoDBWreal}{NoDBW}
\DeclareMathOperator{\DBWsep}{SepDBW}
\DeclareMathOperator{\NoDBWsep}{NoSepDBW}
\newcommand{\Real}{\mathrm{Real}}
\newcommand{\NoReal}{\mathrm{NoReal}}
\newcommand{\Sep}{\mathrm{Sep}}
\newcommand{\NoSep}{\mathrm{NoSep}}
\newcommand{\Lacc}{\ensuremath{L_{\textnormal{acc}}}}
\newcommand{\Llegal}{\ensuremath{L_{\textnormal{struct}}}}
\newcommand{\dgamw}{D$\gamma$W}
\newif\ifarxiv
\begin{document}
\title{Certifying Inexpressibility\thanks{\ifarxiv This is the full version of an article with the same title that appears in the FoSSaCS 2021 conference proceedings.\else The full version of this article is available from \cite{kupferman2021certifying}.\fi{} Orna Kupferman is supported in part by the Israel Science Foundation, grant No. 2357/19. Salomon Sickert is supported in part by the Deutsche Forschungsgemeinschaft (DFG) under project numbers 436811179 and 317422601 (\enquote{Verified Model Checkers}), and in part funded by the European Research Council (ERC) under the European Union's Horizon 2020 research and innovation programme under grant agreement No. 787367 (PaVeS).}}
%
%

\author{Orna Kupferman\inst{1} \orcidID{0000-0003-4699-6117} \and
Salomon Sickert\inst{1,2} (\Letter) \orcidID{0000-0002-0280-8981}}
\institute{School of Computer Science and Engineering, \\ The Hebrew University, Jerusalem, Israel. \\
\email{orna@cs.huji.ac.il}, \email{salomon.sickert@mail.huji.ac.il}
\and Technische Universit\"at M\"unchen, Munich, Germany. \\
\email{s.sickert@tum.de}}

\authorrunning{Orna Kupferman and Salomon Sickert}
\maketitle              
\vspace{-1.5em}
\begin{abstract}
Different classes of automata on infinite words have different expressive power. Deciding whether a given language $L \subseteq \Sigma^\omega$ can be expressed by an automaton of a desired class can be reduced to deciding a game between Prover and Refuter: in each turn of the game, Refuter provides a letter in $\Sigma$, and Prover responds with an annotation of the current state of the run (for example, in the case of B\"uchi automata, whether the state is accepting or rejecting, and in the case of parity automata, what the color of the state is). Prover wins if the sequence of annotations she generates is correct: it is an accepting run iff the word generated by Refuter is in $L$.
We show how a winning strategy for Refuter can serve as a simple and easy-to-understand certificate to inexpressibility, and how it induces additional forms of certificates.
Our framework handles all classes of deterministic automata, including ones with structural restrictions like weak automata. In addition, it can be used for refuting {\em separation\/} of two languages by an automaton of the desired class, and for finding automata that {\em approximate\/} $L$  and belong to the desired class.
\keywords{Automata on infinite words  \and Expressive power \and Games.}
\end{abstract}

\section{Introduction}
Finite {\em automata on infinite objects\/} were first introduced in the 60's, and were the key to the solution of several fundamental decision problems in mathematics and logic \cite{Buc62,McN66,Rab69}. Today, automata on infinite objects are used for specification, verification, and synthesis of nonterminating systems. The automata-theoretic approach reduces questions about systems and their specifications to questions about automata \cite{Kur94,VW94}, and is at the heart of many algorithms and tools.
Industrial-strength property-specification languages such as the IEEE 1850 Standard for Property Specification Language (PSL) \cite{EF06} include regular expressions and/or automata, making specification and verification tools that are based on automata even more essential and popular.

A run $r$ of an automaton on infinite words is an infinite sequence of states, and acceptance is determined with respect to the set of states that $r$ visits infinitely often. For example, in {\em B\"uchi\/} automata, some of the states are designated as accepting states, denoted by $\alpha$, and a run is accepting iff it visits states from the accepting set $\alpha$ infinitely often \cite{Buc62}.
Dually, in {\em co-B\"uchi\/} automata, a run is accepting if it visits the set $\alpha$ only finitely often.
Then, in {\em parity\/} automata, the acceptance condition maps each state  to a color in some set $C=\{j,\ldots,k\}$,  for $j \in \{0,1\}$ and some {\em index\/} $k \geq 0$, and a run is accepting if the maximal color it visits infinitely often is odd.

The different classes of automata have different {\em expressive power}. For example, while deterministic parity automata can recognize all $\omega$-regular languages, deterministic B\"uchi automata cannot  \cite{Lan69}.
We use DBW, DCW, and DPW to denote a deterministic B\"uchi, co-B\"uchi, and parity word automaton, respectively, or (this would be clear from the context) the set of languages recognizable by the automata in the corresponding class.
There has been extensive research on expressiveness of automata on infinite words \cite{Tho90,Kup18}. In particular, researchers have studied two natural expressiveness hierarchies induced by different classes of deterministic automata. The first hierarchy is the {\em Mostowski Hierarchy}, induced by the index of parity automata \cite{Mos84,Wag79}.
Formally, let DPW[$0,k$] denote a DPW with $C=\{0,\ldots,k\}$, and similarly for DPW[$1,k$] and $C=\{1,\ldots,k\}$.  Clearly, DPW[$0,k$]~$\subseteq$~DPW[$0,k+1$], and similarly  DPW[$1,k$]~$\subseteq$~DPW[$1,k+1$]. The hierarchy is infinite and strict. Moreover, DPW[$0,k$] complements DPW[$1,k+1$], and for every $k \geq 0$, there are languages $L_k$ and $L'_k$ such that $L_k \in$~DPW[$0,k$] $\setminus$~DPW[$1,k+1$] and $L'_k \in$~DPW[$1,k+1$]~$\setminus$~DPW[$0,k$]. At the bottom of this hierarchy, we have DBW and DCW. Indeed, DBW=DPW[$0,1$] and DCW=DPW[$1,2$].

While the Mostowski Hierarchy refines DPWs, the
second hierarchy, which we term the {\em depth hierarchy}, refines deterministic {\em weak\/} automata (DWWs). Weak automata can be
viewed as a special case of B\"uchi or co-B\"uchi automata in which
every strongly connected component in the graph induced by the
structure of the automaton is either contained in $\alpha$ or is disjoint from $\alpha$, where $\alpha$ is depending on the acceptance condition the set of accepting or rejecting states.
The structure of weak automata captures the alternation between greatest and least fixed points in many
temporal logics, and they were introduced in this context in \cite{MSS86}. DWWs have
been used to represent vectors of real numbers \cite{BJW01}, and they have many appealing theoretical and practical properties \cite{Lod01,KMM06}.
In terms of expressive power, DWW~=~DCW~$\cap$~DBW.

The depth hierarchy is induced by the depth of alternation between accepting and rejecting components in DWWs. For this, we view a DWW as a DPW in which the colors visited along a run can only increase.
Accordingly, each run eventually gets trapped in a single color, and is accepting iff this color is odd. We use DWW[$0,k$] and DWW[$1,k$] to denote weak-DPW[$0,k$] and weak-DPW[$1,k$], respectively. The picture obtained for the depth hierarchy is identical to that of the Mostowski hierarchy, with DWW[$j,k$] replacing DPW[$j,k$] \cite{Wag79}.
At the bottom of the depth hierarchy we have {\em co-safety\/}  and {\em safety\/} languages \cite{AS87}.
Indeed, co-safety languages are DWW[$0,1$] and safety are DWW[$1,2$].

Beyond the theoretical interest in expressiveness hierarchies, their study is motivated by the fact many algorithms, like synthesis and probabilistic model checking, need to operate on deterministic automata \cite{BCJ18,BAFK18}. The lower the automata are in the expressiveness hierarchy, the simpler are algorithms for reasoning about them. Simplicity goes beyond complexity, which typically depends on the parity index \cite{EJ91}, and involves important practical considerations like minimization and canonicity (exists only for DWWs \cite{Lod01}), circumvention of Safra's determinization \cite{KV05c}, and symbolic implementations \cite{SMV18}. Of special interest is the characterization of DBWs. For example, it is shown in \cite{KV05b} that given a {\em linear temporal logic\/} formula $\psi$, there is an {\em alternation-free $\mu$-calculus\/} formula equivalent to $\forall\psi$ iff $\psi$ can be recognized by a DBW. Further research studies {\em typeness\/} for deterministic automata, examining the ability to define a weaker acceptance condition on top of a given automaton \cite{KPB94,KMM06}.

Our goal in this paper is to provide a simple and easy-to-understand explanation to inexpressibility results.
The need to accompany results of decision procedures by an explanation (often termed ``certificate'') is not new, and includes certification of a ``correct'' decision of a model checker   \cite{KV05d,ACOW18}, reachability certificates in complex multi-agent systems \cite{AL20}, and explainable reactive synthesis \cite{BFT20}. To the best of our knowledge, our work is the first to provide certification to inexpressibility results.

The underlying idea is simple: Consider a language $L$ and a class $\gamma$ of deterministic automata. We consider a turn-based two-player game in which one player (Refuter) provides letters in $\Sigma$, and the second player (Prover) responds with letters from a set $A$ of annotations that describe states in a deterministic automaton. For example, when we consider a DBW, then $A=\{\acc,\rej\}$, and when we consider a DPW[$0,k$], then $A=\{0,\ldots,k\}$. Thus, during the interaction, Refuter generates a word $x \in \Sigma^\omega$ and Prover responds with a word $y \in A^\omega$. Prover wins if for all words $x \in \Sigma^\omega$, we have that $x \in L$ iff $y$ is accepting according to $\gamma$.  Clearly, if there is a deterministic $\gamma$ automaton for $L$, then Prover can win by following its run on $x$. Dually, a finite-state winning strategy for Prover induces a deterministic $\gamma$ automaton for $L$. The game-based approach is not new, and has been used for deciding the membership of given $\omega$-regular languages in different classes of deterministic automata
\cite{KV05c}. Further, the game-based formulation is used in descriptive set theory to classify sets into hierarchies, see for example \cite[Chapters 4 and 5]{PP04} for an introduction that focuses on $\omega$-regular languages. Our contribution is a study of strategies for Refuter. Indeed, since the above described game is determined \cite{BL69} and the strategies are finite-state, Refuter has a winning strategy iff no deterministic $\gamma$ automaton for $L$ exists, and this winning strategy can serve as a certificate for inexpressibility.

\begin{figure}[t]
\begin{center}
\includegraphics{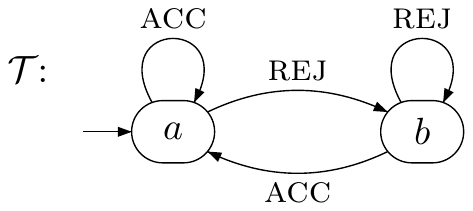}
\caption{A refuter for DBW-recognizability of ``only finitely many $a$'s''.}
\label{ref fg1}
\vspace{-1.5em}
\end{center}
\end{figure}

\begin{example}
Consider the language $L_{\neg \infty a} \subseteq \{a,b\}^\omega$ of all words with only finitely many $a$'s. It is well known that $L$ cannot be recognized by a DBW \cite{Lan69}. In \Cref{ref fg1} we describe what we believe to be the neatest proof of this fact. The figure describes a transducer $\R$ with inputs in $\{\mbox{\sc acc,rej}\}$ and outputs in $\{a,b\}$ -- the winning strategy of Refuter in the above described game. The way to interpret $\R$ is as follows.
In each round of the game, Prover tells Refuter whether the run of her DBW for $L_{\neg \infty a}$ is in an accepting or a rejecting state, and Refuter uses $\R$ in order to  respond with the next letter in the input word. For example, if Prover starts with $\acc$, namely declaring that the initial state of her DBW is accepting, then Refuter responds with $a$, and if Prover continues with $\rej$, namely declaring that the state reachable with $a$ is rejecting, then Refuter responds with $b$. If Prover continues with $\rej$ forever, then Prover continues with $b$ forever. Thus, together Prover and Refuter generate two words: $y \in \{\mbox{\sc acc,rej}\}^\omega$ and $x \in \{a,b\}^\omega$. Prover wins whenever $x \in L_{\neg \infty a}$ iff $y$ contains infinitely many {\sc acc}'s. If Prover indeed has a DBW for $L_{\neg \infty a}$, then she can follow its transition function and win the game. By following the refuter $\R$, however, Refuter can always fool Prover and generate a word $x$ such that $x \in L_{\neg \infty a}$ iff $y$ contains only finitely many {\sc acc}'s. \hfill $\blacksquare$
\end{example}

We first define refuters for DBW-recognizability, and study their construction and size for languages given by deterministic or nondeterministic automata.
Our refuters serve as a first inexpressibility certificate.
We continue and argue that each DBW-refuter for a language $L$ induces three words $x \in \Sigma^*$ and $x_1,x_2 \in \Sigma^*$, such that $x \cdot (x_1 + x_2)^* \cdot x_1^\omega \subseteq L$ and $x \cdot (x_1^* \cdot x_2)^\omega \cap L = \emptyset$. The triple $\zug{x,x_1,x_2}$ is an additional certificate for $L$ not being in DBW. Indeed, we show that  a language $L$ is not in DBW iff it has a certificate as above. For example,  the language $L_{\neg \infty a}$ has a certificate $\zug{\epsilon,b,a}$. In fact, we show that Landweber's proof for $L_{\neg \infty a}$ can be used as is for all languages not in DBW, with $x_1$ replacing $b$, $x_2$ replacing $a$, and adding $x$ as a prefix.

We then generalize our results on DBW-refutation and certification in two orthogonal directions. The first is an extension to richer classes of deterministic automata, in particular all classes in the two hierarchies discussed above, as well as all deterministic Emerson-Lei automata (DELWs) \cite{EL87}.
For the depth hierarchy, we add to the winning condition of the game a {\em structural restriction}. For example, in a weak automaton, Prover loses if the sequence $y \in A^\omega$ of annotations she generates includes infinitely many alternations between $\acc$ and $\rej$. We show how structural restrictions can be easily expressed in our framework.

The second direction is an extension of the recognizability question to the questions of {\em separation} and {\em approximation}: We say that a language $L \subseteq \Sigma^\omega$ is a {\em separator\/} for two languages $L_1,L_2 \subseteq \Sigma^\omega$ if $L_1 \subseteq L$ and $L \cap L_2 = \emptyset$.
Studies of separation include a search for regular separators of general languages \cite{CLMMKS18}, as well as separation of regular languages by weaker classes of languages, e.g., FO-definable languages \cite{PZ16} or piecewise testable languages \cite{CMM13}. In the context of $\omega$-regular languages, \cite{AS87} presents an algorithm computing the smallest safety language containing a given language $L_1$, thus finding a safety separator for $L_1$ and $L_2$. As far as we know, besides this result there has been no systematic study of separation of $\omega$-regular languages by deterministic automata.

In addition to the interest in separators, we use them in the context of recognizability in two ways. First, a third type of certificate that we suggest for DBW-refutation of a language $L$ are ``simple'' languages $L_1$ and $L_2$ such that
$L_1 \subseteq L$, $L \cap L_2 =\emptyset$, and $\zug{L_1,L_2}$ are not DBW-separable.
Second, we use separability in order to approximate languages that are not in DBW.
Consider such a language $L \subseteq \Sigma^\omega$. A user may be willing to approximate $L$ in order to obtain DBW-recognizability. Specifically, we assume that there are languages $I_{\downarrow} \subseteq L$ and $I_{\uparrow} \subseteq \Sigma^\omega \setminus L$ of words that the user is willing to under- and over-approximate $L$ with. Thus, the user searches for a language that is a separator for $L \setminus I_\downarrow$ and $\Sigma^\omega \setminus (L \cup I_\uparrow)$. We study DBW-separability and DBW-approximation, namely separability and approximation by languages in DBW. In particular, we are interested in finding ``small'' approximating languages $I_\downarrow$ and $I_\uparrow$ with which $L$ has a DBW-approximation, and we show how certificates that refute DBW-separation can direct the search to for successful $I_\downarrow$ and $I_\uparrow$. Essentially, as in {\em counterexample guided abstraction-refinement\/} (CEGAR) for model checking \cite{CGJLV03}, we use certificates for non-DBW-separability in order to suggest interesting {\em radius languages}. While in CEGAR the refined system excludes the counterexample, in our setting the approximation of $L$ excludes the certificate. As has been the case with recognizability, we extend our results to all classes of deterministic automata.

\section{Preliminaries}

\subsection{Transducers and Realizability}

Consider two finite alphabets $\Sigma$ and $A$. It is convenient to think about $\Sigma$ as the \enquote{main} alphabet, and about $A$ as an alphabet of annotations.
For two words $x=x_0 \cdot x_1 \cdot x_2 \cdots \in \Sigma^\omega$ and $y = y_0 \cdot y_1 \cdot y_2 \cdots \in A^\omega$, we define $x \oplus y$ as the word in $(\Sigma \times A)^\omega$ obtained by merging $x$ and $y$. Thus, $x\oplus y = (x_0,y_0)  \cdot (x_1,y_1) \cdot (x_2,y_2) \cdots$.

A {\em $(\Sigma/A)$-transducer\/} models a finite-state system that responds with letters in $A$ while interacting with an environment that generates letters in $\Sigma$. Formally, a $(\Sigma/A)$-transducer is $\T=\zug{\Sigma,A,\init,S,s_0,\rho,\tau}$, where $\init \in \{{\it sys,env}\}$ indicates who initiates the interaction -- the system or the environment, $S$ is a set of states, $s_0 \in S$ is an initial state, $\rho:S \times \Sigma \rightarrow S$ is a transition function, and $\tau:S \rightarrow A$ is a labelling function on the states.
Consider an input word $x=x_0 \cdot x_1 \cdot x_2 \cdots \in \Sigma^\omega$. The {\em run\/} of $\T$ on $x$ is the sequence $s_0,s_1,s_2 \ldots$ such that for all $j \geq 0$, we have that $s_{j+1} = \rho(s_j,x_j)$. The {\em annotation of $x$ by $\T$}, denoted $\T(x)$, depends on $\iota$. If $\init={\it sys}$, then $\T(x) = \tau(s_0) \cdot \tau(s_1) \cdot \tau(s_2) \cdots \in A^\omega$. Note that the first letter in $A$ is the output of $\T$ in $s_0$. This reflects the fact that the system initiates the interaction. If $\init={\it env}$, then $\T(x) = \tau(s_1) \cdot \tau(s_2) \cdot \tau(s_3) \cdots \in A^\omega$. Note that now, the output in $s_0$ is ignored, reflecting the fact that the environment initiates the interaction.

Consider a language $L \subseteq (\Sigma \times A)^\omega$. Let $\comp (L)$ denote the complement of $L$. Thus, $\comp (L) = (\Sigma \times A)^\omega \setminus L$.
We say that a language $L \subseteq (\Sigma \times A)^\omega$ is {\em $(\Sigma/A)$-realizable by the system\/} if there is a $(\Sigma/A)$-transducer $\T$ with $\init={\it sys}$ such that for every word $x \in \Sigma^\omega$, we have that $x \oplus \T(x) \in L$. Then, $L$ is {\em $(A/\Sigma)$-realizable by the environment\/} if there is an $(A/\Sigma)$-transducer $\T$ with $i={\it env}$ such that for every word $y \in A^\omega$, we have that $\T(y) \oplus y \in L$.
When the language $L$ is regular, realizability reduces to deciding a game with a regular winning condition. Then, by determinacy of games and due to the existence of finite-memory winning strategies \cite{BL69}, we have the following.

\begin{proposition}\label{determinacy}
For every $\omega$-regular language $L \subseteq (\Sigma \times A)^\omega$, exactly one of the following holds.
\begin{enumerate}
\item
$L$ is $(\Sigma/A)$-realizable by the system.
\item
$\comp (L)$ is $(A/\Sigma)$-realizable by the environment.
\end{enumerate}
\end{proposition}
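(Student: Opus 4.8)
The plan is to exhibit Proposition~\ref{determinacy} as an instance of B\"uchi--Landweber determinacy~\cite{BL69} for the turn-based game $G_L$ that underlies $(\Sigma/A)$-realizability, and then to spend the actual effort on matching finite-memory strategies with transducers of exactly the shape demanded by the two items, and on the ``at most one'' direction.

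First I would set up $G_L$: it is played in rounds $0,1,2,\ldots$, where in round $j$ the system first picks a letter $a_j \in A$ and the environment then picks a letter $\sigma_j \in \Sigma$. An infinite play yields $x = \sigma_0 \sigma_1 \sigma_2 \cdots \in \Sigma^\omega$ and $y = a_0 a_1 a_2 \cdots \in A^\omega$, and the system is declared the winner of the play iff $x \oplus y \in L$. Since $L$ is $\omega$-regular, the winning condition for the system is $\omega$-regular --- it is the preimage of $L$ under the (regular) map that interleaves the play $a_0 \sigma_0 a_1 \sigma_1 \cdots$ into $x \oplus y$ --- so $G_L$ is a regular game; by~\cite{BL69} it is determined and the winner has a finite-memory winning strategy.

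Next I would unwind what such strategies look like and match them to the two items. A finite-memory strategy for the system (the player who moves first in each round) is a finite automaton over the input alphabet $\Sigma$ that emits $a_0$ before reading anything and, after reading $\sigma_0 \cdots \sigma_{j-1}$, emits $a_j$; this is literally a $(\Sigma/A)$-transducer $\T$ with $\init = {\it sys}$, and the statement that the strategy is winning is exactly the statement that $x \oplus \T(x) \in L$ for every $x \in \Sigma^\omega$, which is item~1. Symmetrically, a finite-memory strategy for the environment reads the system's letters and, having seen $a_0 \cdots a_j$, emits $\sigma_j$; tracking the off-by-one in the semantics of transducers with $\init = {\it env}$ --- the label of $s_0$ is discarded, and $\T(y) = \tau(s_1)\tau(s_2)\cdots$, so the $j$-th emitted letter is determined by $y_0 \cdots y_j$ --- shows that such a strategy is precisely an $(A/\Sigma)$-transducer with $\init = {\it env}$, and ``winning'' now reads ``$\T(y) \oplus y \notin L$ for every $y \in A^\omega$'', i.e.\ $\comp(L)$ is $(A/\Sigma)$-realizable by the environment, which is item~2. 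Determinacy of $G_L$ thus yields that at least one of the two items holds.

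Finally, for the ``at most one'' part I would compose two hypothetical witnesses. Given a system transducer $\T_1$ for item~1 and an environment transducer $\T_2$ for item~2, the first-move conventions fit together --- $\T_1$ emits $a_0$ with nothing read and $\T_2$ consumes it before emitting $\sigma_0$ --- so running them against each other produces a unique pair $x \in \Sigma^\omega$, $y \in A^\omega$ with $\T_1(x) = y$ and $\T_2(y) = x$. Item~1 applied to this $x$ gives $x \oplus y \in L$, while item~2 applied to this $y$ gives $x \oplus y \in \comp(L)$; since $L \cap \comp(L) = \emptyset$ this is a contradiction, so the two items are mutually exclusive. I expect the only real friction to be the bookkeeping in these last two paragraphs: determinacy itself is quoted as a black box, but one must carefully line up the two different first-mover conventions and the Moore-style output of the transducers so that game strategies translate cleanly into transducers of the two required types, and a careless off-by-one there would silently break the correspondence.
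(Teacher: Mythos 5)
Your proposal is correct and follows the same route as the paper, which simply invokes B\"uchi--Landweber determinacy with finite-memory strategies for the regular game underlying realizability; you merely spell out the strategy-to-transducer correspondence and the mutual-exclusivity argument (by composing the two hypothetical transducers) that the paper leaves implicit. The off-by-one bookkeeping for the $\init={\it sys}$ versus $\init={\it env}$ conventions is handled correctly.
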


\subsection{Automata}
\label{subsec:automata}

A {\em deterministic word automaton} over a finite alphabet $\Sigma$ is $\A=\zug{\Sigma, Q, q_0, \delta, \alpha}$, where $Q$ is a set of states, $q_0 \in Q$ is an initial state, $\delta : Q\times \Sigma \rightarrow Q$ is a transition function, and $\alpha$ is an acceptance condition.
We extend $\delta$ to words in $\Sigma^*$ in the expected way, thus for $q\in Q$, $w \in \Sigma^*$, and letter $\sigma \in \Sigma$, we have that $\delta(q,\epsilon)=q$ and $\delta(q,w \sigma)=\delta(\delta(q,w),\sigma)$.
A {\em run} of $\A$ on an infinite word $\sigma_0,\sigma_1,\dots \in \Sigma^\omega$ is the sequence of states $r = q_0, q_1, \dots$, where for every position $i\geq 0$, we have that $q_{i+1}= \delta(q_i, \sigma_i)$. We use ${\it inf}(r)$ to denote the set of states that $r$ visits infinitely often. Thus, ${\it inf}(r) = \{q: \text{ } q_i=q \text{ for infinitely many }i \geq 0\}$.

The acceptance condition $\alpha$ refers to $\infi(r)$ and determines whether the run $r$ is accepting. For example, in the {\em B\"uchi}, acceptance condition, we have that $\alpha\subseteq Q$, and a run is accepting iff it visits states in $\alpha$ infinitely often; that is, $\alpha\cap {\it inf}(r)\neq\emptyset$. Dually, in {\em co-B\"uchi}, $\alpha\subseteq Q$, and a run is accepting iff it visits states in $\alpha$ only finitely often; that is, $\alpha\cap {\it inf}(r) = \emptyset$. The language of $\A$, denoted $\lang(\A)$, is then the set of words $w$ such that the run of $\A$ on $w$ is accepting.

A parity condition is $\alpha:Q \rightarrow \{0,\ldots,k\}$, for $k \geq 0$, termed the {\em index\/} of $\alpha$. A run $r$ satisfies $\alpha$ iff the maximal color $i \in \{0,\ldots,k\}$ such that $\alpha^{-1}(i)\cap {\it inf}(r)\neq\emptyset$ is odd. That is, $r$ is accepting iff the maximal color that $r$ visits infinitely often is odd. Then, a Rabin condition is $\alpha=\{\zug{G_1,B_1},\ldots,\zug{G_k,B_k}\}$, with $G_i,B_i \subseteq Q$, for all $0 \leq i \leq k$. A run $r$ satisfies $\alpha$ iff there is $1 \leq i \leq k$ such that ${\it inf}(r) \cap G_i \neq\emptyset$ and ${\it inf}(r) \cap B_i = \emptyset$. Thus, there is a pair $\zug{G_i,B_i}$ such that $r$ visits states in $G_i$ infinitely often and visits states in $B_i$ only finitely often.

All the acceptance conditions above can be viewed as special cases of the {\em Emerson-Lei acceptance condition\/} (EL-condition, for short) \cite{EL87}, which we define below. Let $\mathbb{M}$ be a finite set of marks. Given an infinite sequence $\pi = M_0 \cdot M_1 \cdots \in (2^{\mathbb{M}})^\omega$ of subsets of marks, let $\infi(\pi)$ be the set of marks that appear infinitely often in sets in $\pi$. Thus, $\infi(\pi)=\{m \in \mathbb{M}$ : there exist infinitely many $i \geq 0$ such that  $m \in M_i\}$. An EL-condition is a Boolean assertion over atoms in $\mathbb{M}$. For simplicity, we consider assertions in positive normal form, where negation is applied only to atoms. Intuitively, marks that appear positively should repeat infinitely often and marks that appear negatively should repeat only finitely often. Formally,
a deterministic EL-automaton is $\A=\zug{\Sigma, Q, q_0, \delta, \mathbb{M}, \tau, \theta}$, where $\tau \colon Q \to 2^{\mathbb{M}}$ maps each state to a set of marks, and $\theta$ is an EL-condition over $\mathbb{M}$. A run $r$ of a $\A$ is accepting if $\infi(\tau(r))$ satisfies $\theta$.

For example, a B\"uchi condition $\alpha \subseteq Q$ can be viewed as an EL-condition with $\mathbb{M}=\{\acc\}$ and $\tau(q)=\{\acc\}$ for $q \in \alpha$ and $\tau(q)=\emptyset$ for $q \not \in \alpha$. Then, the assertion $\theta=\acc$ is satisfied by sequences $\pi$ induced by runs $r$ with $\infi(r) \cap \alpha \neq \emptyset$. Dually, the assertion $\theta=\neg \rej$ with $\mathbb{M}=\{\rej\}$ is satisfied by sequences $\pi$ induced by runs $r$ with $\infi(r) \cap \alpha = \emptyset$, and thus corresponds to a co-B\"uchi condition. In the case of a parity condition $\alpha : Q \rightarrow \{0,\ldots,k\}$, it is not hard to see that $\alpha$ is equivalent to an EL-condition in which $\mathbb{M} = \{0, 1, \ldots, k\}$,  for every state $q \in Q$, we have that $\tau(q)=\{\alpha(q)\}$, and
\ifarxiv $\theta = \theta_k$ expresses the parity condition, where $\theta_k$ is inductively defined as:
\[\theta_k = \begin{cases}
\neg 0 & \text{if } k = 0, \\
\neg k \wedge \theta_{k - 1} & \text{if } k \text{ is even,} \\
\phantom{\neg} k \vee \theta_{k - 1} & \text{If $k>0$ and $k$ is odd.}
\end{cases}\]
\else
$\theta$ expresses the parity condition.
\fi
Lastly, a Rabin condition $\alpha = \{\zug{G_1,B_1},\ldots,\zug{G_k,B_k}\}$ is equivalent to an EL-condition with $\mathbb{M} = \{G_1,B_1, \dots,G_k,B_k\}$ and $\tau(q)=\{m \in \mathbb{M}: q \in m\}$. Note that now, the mapping $\tau$ is not to singletons, and each state is marked by all sets in $\alpha$ in which it is a member. Then, $\theta=\bigvee_{1 \leq i \leq k} (G_i \wedge \neg B_i)$.

We use DBW, DCW, DPW, DRW, DELW to denote deterministic B\"uchi, co-B\"uchi, parity, Rabin, and EL word automata, respectively.
For parity automata, we also use DPW[$0,k$] and DPW[$1,k$], for $k \geq 0$, to denote DPWs in which the colours are in $\{0,\ldots,k\}$ and $\{1,\ldots,k\}$, respectively. For Rabin automata, we use DRW[$k$], for $k \geq 0$, to denote DRWs that have at most $k$ elements in $\alpha$. Finally, we use DELW[$\theta$], to denote DELWs with EL-condition $\theta$.
We sometimes use the above acronyms in order to refer to the set of languages that are recognizable by the corresponding class of automata. For example, we say that a language $L$ is in DBW if $L$ is \emph{DBW-recognizable}, thus there is a DBW $\A$ such that $L=L(\A)$.
Note that DBW = DPW[$0,1$], DCW = DPW[$1,2$], and DRW[$1$] = DPW[$0,2$]. In fact, in terms of expressiveness, DRW[$k$] = DPW[$0,2k$] \cite{Saf92,Lod99b}.

Consider a directed graph $G=\zug{V,E}$. A {\em strongly connected set\/} of $G$ (SCS) is a set $C \subseteq V$ of vertices such that for every two vertices $v,v' \in C$, there is a path from $v$ to $v'$. An SCS $C$ is {\em maximal\/} if it cannot be extended to a larger SCS. Formally, for every nonempty $C' \subseteq V \setminus C$, we have that $C \cup C'$ is not an SCS. The maximal strongly connected sets are also termed {\em strongly connected components\/} (SCC). An automaton $\A=\zug{\Sigma,Q,Q_0,\delta,\alpha}$ induces a directed graph $G_\A=\zug{Q,E}$ in which $\zug{q,q'} \in E$ iff there is a letter $\sigma$ such that $q' \in \delta(q,\sigma)$. When we talk about the SCSs and SCCs of $\A$, we refer to those of $G_\A$. Consider a run $r$ of an automaton $\A$. It is not hard to see that the set ${\it inf}(r)$ is an SCS. Indeed, since every two states $q$ and $q'$ in ${\it inf}(r)$ are visited infinitely often, the state $q'$ must be reachable from $q$.

\ifarxiv
A DBW $\A=\zug{\Sigma, Q, q_0, \delta, \alpha}$ is \emph{weak} (DWW) if every SCC $C$ of $\A$ is accepting, namely $C \subseteq \alpha$, or rejecting, namely $C \cap \alpha =\emptyset$. Thus, each run of $\A$ eventually visits either states in $\alpha$ or only states not in $\alpha$. It is easy to see that every DWW can be viewed as a DBW and as a DCW. In order to refer to the depth of the SCCs in $\A$, we also refer to $\A$ also as a DPW. Indeed, a DPW $\A=\zug{\Sigma, Q, q_0, \delta, \alpha}$ is \emph{weak} if for every transition $q' = \delta(q, \sigma)$ we have $\alpha(q') \geq \alpha(q)$, i.e., $\alpha$ is monotonically increasing along a run. We use DWW[$0,k$] and DWW[$1,k$] to denote weak DPW[$0,k$] and weak DPW[$1,k$], respectively. Finally, note that for each safety $\omega$-regular language $L$, there exists a DWW[$1,2$] that recognises $L$ and all DWW[$1,2$] recognise a safety language. Dually, co-safety languages correspond to DWW[$0,1$].
\fi

\section{Refuting DBW-Recognizability}

Let $A=\{\acc,\rej\}$. We use $\infty \acc$ to denote the subset $\{a_0 \cdot a_1 \cdot a_2 \cdots \in A^\omega : \mbox{ there are infinitely many $j \geq 0$ with } a_j = \acc\}$ and $\neg \infty \acc=\comp (\infty \acc) = \{a_0 \cdot a_1 \cdot a_2 \cdots \in A^\omega : \mbox{ there are only finitely many $j \geq 0$ with } a_j = \acc\}$.

A DBW $\A=\zug{\Sigma, Q, q_0, \delta, \alpha}$ can be viewed as a $(\Sigma/A)$-transducer $\T_\A=\langle \Sigma,A,{\it sys}$, $Q$, $q_0,\delta,\tau\rangle$, where for every state $q \in Q$, we have that $\tau(q)=\acc$ if $q \in \alpha$, and $\tau(q)=\rej$ otherwise.
Then, for every word $x \in \Sigma^\omega$, we have that $x \in L(\A)$ iff $\T_\A(x) \in \infty{\acc}$.

For a language $L \subseteq \Sigma^\omega$, we define the language $\DBWreal(L) \subseteq (\Sigma \times A)^\omega$ of words with correct annotations. Thus,
\[\DBWreal(L)=\{x \oplus y : x \in L \mbox{ iff } y \in \infty{\acc}\}.\]
Note that $\comp (\DBWreal(L))$ is the language
\[\NoDBWreal(L)=\{x \oplus y : (x \in L \mbox{ and } y \not \in \infty{\acc}) \mbox{ or }
 (x \not \in L \mbox{ and }  y  \in \infty{\acc})\}.\]
A {\em DBW-refuter for $L$\/} is an $(A/\Sigma)$-transducer with $\init={\it env}$ realizing $\NoDBWreal(L)$.

\begin{example}
\label{rom refuter}
For every language $R \subseteq \Sigma^*$ of finite words, the language $R^\omega \subseteq \Sigma^\omega$ consists of infinite concatenations of words in $R$. It was recently shown that $R^\omega$ may not be in DBW \cite{LK20}. The language used in \cite{LK20} is $R=\$+(0\cdot \{0,1,\$\}^*\cdot 1)$. In \Cref{ref romega} below we describe a DBW-refuter for $R^\omega$.
\begin{figure}[h]
\begin{center}
\vspace{-1em}
\includegraphics[scale=1]{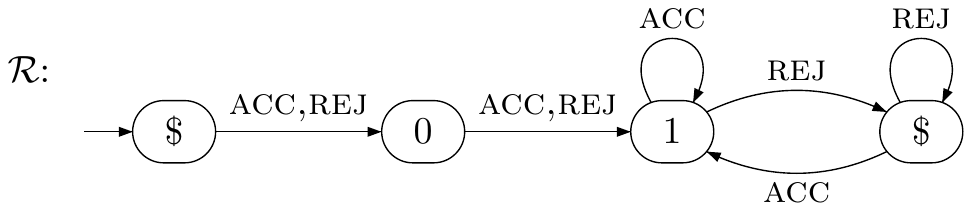}
\caption{A DBW-refuter for $(\$+(0\cdot \{0,1,\$\}^*\cdot 1))^\omega$.}
\vspace{-2em}
\label{ref romega}
\end{center}
\end{figure}

Following $\R$, Refuter starts by generating a prefix $0 \cdot 1$ and then responds to $\acc$ with $1$ and responds with $\$$ to $\rej$. Accordingly, if Prover generates a rejecting run, Prover generates a word in $0 \cdot 1 \cdot (1 + \$)^* \cdot \$^\omega$, which is in $R^\omega$. Also, if Prover generates an accepting run, Prover generates a word in $0 \cdot 1 \cdot (1^+ \cdot \$^*)^\omega$, which has a single $0$ and infinitely many $1$'s, and is therefore not in $R^\omega$.
\hfill $\blacksquare$ \end{example}

By \Cref{determinacy}, we have the following.
\begin{proposition}\label{dbw:recognisability}
Consider a language $L \subseteq \Sigma^\omega$. Let $A=\{\acc,\rej\}$. Exactly one of the following holds:
\begin{itemize}
\item
$L$ is in DBW, in which case the language $\DBWreal(L)$ is $(\Sigma/A)$-realizable by the system, and a finite-memory winning strategy for the system induces a DBW for $L$.
\item
$L$ is not in DBW, in which case the language $\NoDBWreal(L)$ is $(A/\Sigma)$-realizable by the environment, and a finite-memory winning strategy for the environment induces a DBW-refuter for $L$.
\end{itemize}
\end{proposition}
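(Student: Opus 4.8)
The plan is to derive the dichotomy as a direct instance of \Cref{determinacy}, applied to the language $L' = \DBWreal(L) \subseteq (\Sigma \times A)^\omega$, and then to translate each of the two realizability alternatives into the corresponding automata-theoretic statement. First I would record two things: that $\comp(\DBWreal(L)) = \NoDBWreal(L)$ (already observed above), and that $\DBWreal(L)$ is $\omega$-regular whenever $L$ is --- indeed, taking a DPW for $L$ and a deterministic automaton over $A$ for $\infty\acc$, their product is a deterministic automaton whose acceptance condition is a Boolean combination of the two (e.g.\ a DELW), which is $\omega$-regular. Hence \Cref{determinacy} applies and yields that \emph{exactly one} of ``$\DBWreal(L)$ is $(\Sigma/A)$-realizable by the system'' and ``$\NoDBWreal(L)$ is $(A/\Sigma)$-realizable by the environment'' holds.

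The core step is the equivalence: $\DBWreal(L)$ is $(\Sigma/A)$-realizable by the system if and only if $L \in$ DBW, together with the matching statement for finite-memory strategies. For the ``if'' direction, given a DBW $\A$ with $L(\A)=L$, view it as the $(\Sigma/A)$-transducer $\T_\A$ with $\init={\it sys}$ as in the paragraph preceding the proposition; for every $x \in \Sigma^\omega$ we have $x \in L$ iff $x \in L(\A)$ iff $\T_\A(x) \in \infty\acc$, i.e.\ $x \oplus \T_\A(x) \in \DBWreal(L)$, so $\T_\A$ realizes $\DBWreal(L)$. For the ``only if'' direction --- which simultaneously proves that a finite-memory winning strategy for the system induces a DBW for $L$ --- note that such a strategy is exactly a $(\Sigma/A)$-transducer $\T = \zug{\Sigma, A, {\it sys}, S, s_0, \rho, \tau}$ with $x \oplus \T(x) \in \DBWreal(L)$ for all $x$; define the DBW $\A = \zug{\Sigma, S, s_0, \rho, \alpha}$ with $\alpha = \{s \in S : \tau(s) = \acc\}$. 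Since $\init = {\it sys}$, the annotation word $\T(x) = \tau(s_0)\tau(s_1)\cdots$ is aligned with the run $s_0, s_1, \dots$ of $\A$ on $x$, so that run visits $\alpha$ infinitely often iff $\T(x) \in \infty\acc$ iff $x \in L$; hence $L(\A) = L$ and $L \in$ DBW.

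It remains to assemble the two bullets. A DBW-refuter for $L$ was \emph{defined} as an $(A/\Sigma)$-transducer with $\init={\it env}$ realizing $\NoDBWreal(L)$, and a finite-memory winning strategy for the environment is precisely such a transducer, so the environment alternative of \Cref{determinacy} is literally the existence of a DBW-refuter. Combining this with the equivalence of the previous paragraph (which identifies the system alternative with $L \in$ DBW, and therefore the environment alternative with $L \notin$ DBW), and using the ``exactly one'' of \Cref{determinacy} for mutual exclusivity and exhaustiveness, gives the proposition.

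I do not expect a genuine obstacle here; the only points requiring care are (a) the well-definedness premise that $\DBWreal(L)$ is $\omega$-regular, which is what makes \Cref{determinacy} (with its finite-memory guarantee) applicable --- without it one still gets Borel determinacy but can lose finite memory --- and (b) checking that the index alignment between a transducer's output word and an automaton's run introduces no off-by-one in the $\init = {\it sys}$ case (it does not, since the first output letter $\tau(s_0)$ corresponds to the first run state $s_0$). Both are bookkeeping rather than mathematical difficulty.
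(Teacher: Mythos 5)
Your proposal is correct and follows essentially the same route as the paper: the paper derives the proposition directly from \Cref{determinacy} applied to $\DBWreal(L)$, relying on the immediately preceding observation that a DBW $\A$ and the $(\Sigma/A)$-transducer $\T_\A$ are interchangeable (which gives both directions of the system-side equivalence), while the environment side holds by the definition of a DBW-refuter. Your added care about $\omega$-regularity of $\DBWreal(L)$ and the index alignment for $\init={\it sys}$ is exactly the bookkeeping the paper leaves implicit.
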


\subsection{Complexity}
In this section we analyze the size of refuters. We start with the case where the language $L$ is given by a DPW.

\begin{theorem}
\label{dbw by dpw yes or refute}
Consider a DPW $\A$ with $n$ states. Let $L=L(\A)$. One of the following holds.
\begin{enumerate}
\item
There is a DBW for $L$ with $n$ states.
 \item
There is a DBW-refuter for $L$ with $2n$ states.
 \end{enumerate}
\end{theorem}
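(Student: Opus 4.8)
The plan is to use the standard characterization of DBW-recognizability in terms of the structure of a DPW (the Büchi-typeness criterion), and to read off either a DBW or a refuter from the result of that analysis. First I would recall the criterion: given a DPW $\A$ with $L=L(\A)$, the language $L$ is in DBW iff for every SCC $C$ of $\A$, if $C$ contains a cycle whose maximal color is odd (an \enquote{accepting cycle}), then $C$ is \enquote{Büchi-good} in the sense that we may recolor the whole reachable part below it so that exactly the states lying on accepting cycles are marked, without changing the language — equivalently, iff there is no reachable \enquote{bad pattern}: a state $q$ together with a cycle through $q$ whose maximal color is even and \emph{dominates} (is larger than) some odd color appearing on another cycle through $q$. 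This is a local, polynomial check on $\A$ that does not blow up the state space, which is why the DBW in case (1) has exactly $n$ states: if the check succeeds, one simply relabels the existing states of $\A$ (mark a state accepting iff it lies on an accepting cycle within its SCC, after the standard simplification), keeping $Q$, $q_0$, and $\delta$ untouched.

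For case (2), suppose the check fails, witnessed by a reachable state $q$ and two cycles $C_{\mathrm{acc}}$, $C_{\mathrm{rej}}$ through $q$ such that the maximal color on $C_{\mathrm{rej}}$ is even and strictly larger than the maximal color on $C_{\mathrm{acc}}$, which is odd. Let $u \in \Sigma^*$ be a word leading from $q_0$ to $q$, let $v_{\mathrm{acc}}$ spell $C_{\mathrm{acc}}$ and $v_{\mathrm{rej}}$ spell $C_{\mathrm{rej}}$. Then $u \cdot v_{\mathrm{acc}}^\omega \in L$ (the run is trapped in $q$'s SCC with max infinite color odd), while $u\cdot(v_{\mathrm{acc}}^* v_{\mathrm{rej}})^\omega \cap L = \emptyset$ — more precisely, any word of the form $u \cdot w$ where $w$ is an infinite shuffle that visits $C_{\mathrm{rej}}$ infinitely often has a run whose max infinite color is even, hence rejecting. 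I would build the refuter $\R$ (an $(A/\Sigma)$-transducer, $\init = \mathit{env}$) as follows: it first outputs $u$ (a path of at most $n-1$ states, since $q$ is reached), and then enters a \enquote{core} gadget that replays $C_{\mathrm{acc}}$ while Prover keeps saying $\rej$, and, as soon as Prover says $\acc$, forces a full traversal of $C_{\mathrm{rej}}$ before returning to replay $C_{\mathrm{acc}}$ again. Thus if Prover's annotation $y$ has finitely many $\acc$'s, Refuter's word eventually stays in $u \cdot (\text{prefix})\cdot v_{\mathrm{acc}}^\omega \subseteq L$, so $y \notin \infty\acc$ while $x \in L$; and if $y$ has infinitely many $\acc$'s, then $v_{\mathrm{rej}}$ is traversed infinitely often, putting $x \in u\cdot(v_{\mathrm{acc}}^* v_{\mathrm{rej}})^\omega$, disjoint from $L$, so $y \in \infty\acc$ while $x \notin L$. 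Either way $x \oplus y \in \NoDBWreal(L)$, so $\R$ realizes $\NoDBWreal(L)$ and is a DBW-refuter.

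The size bookkeeping is the last step and is where I expect the $2n$ bound to come from: the prefix path costs at most $n-1$ states; the two cycles together pass through at most all $n$ states, but because Refuter must \enquote{remember} whether it is currently in the obligatory $v_{\mathrm{rej}}$-detour or back in the $v_{\mathrm{acc}}$-loop, each state on the core may need to be duplicated, giving roughly $2n$. I would be a little careful to choose $C_{\mathrm{acc}}$ and $C_{\mathrm{rej}}$ as \emph{simple} cycles sharing only $q$ (which is possible by taking minimal witnesses), so the counts are clean, and then argue the transducer can be laid out on $\le 2n$ states; minor overlaps only help. The main obstacle, and the part needing the most care, is exactly this structural characterization of DBW-recognizability for a DPW and the extraction of the bad pattern with a \emph{reachable} witness of bounded size — once the pattern (or its absence) is in hand, both the $n$-state DBW and the $2n$-state refuter follow by direct construction, with correctness a routine check on which color dominates along the forced runs. (This also matches the general certificate $\zug{x,x_1,x_2}$ announced in the introduction, with $x = u$, $x_1 = v_{\mathrm{acc}}$, $x_2 = v_{\mathrm{rej}}$.)
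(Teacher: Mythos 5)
Your case (1) is the same as the paper's: DPWs are B\"uchi type \cite{KPB94}, so when $L$ is in DBW one only relabels the states of $\A$. For case (2) you take a genuinely different route, and that is where the proof has a real gap: it does not deliver the $2n$ bound. The paper never extracts a structural ``bad pattern'' from $\A$. It builds a DRW for $\NoDBWreal(L)$ directly on the state space $Q\times\{\acc,\rej\}$ (the product of $\A$ with a two-state automaton for $\infty{\acc}$, taking the union of the Rabin pairs of automata for $\{x\oplus y: x\in L,\ y\notin\infty{\acc}\}$ and $\{x\oplus y: x\notin L,\ y\in\infty{\acc}\}$), and then invokes the fact that a realizable Rabin condition is realized by a memoryless strategy \cite{EJ88}, i.e., by a transducer on the $2n$ game positions. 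The bound $2n$ is exact and comes for free; no structural analysis of $\A$ is needed.

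Your explicit construction costs up to $n-1$ states for the prefix $u$ plus roughly $2(|v_{\mathrm{acc}}|+|v_{\mathrm{rej}}|)$ states for the core (position in the current block times one bit -- and note the bit that is actually needed is ``has Prover emitted $\acc$ during the current block'', as in \Cref{ref from cert}, not ``which block am I in''). Even under your optimistic assumptions this is about $3n$. Worse, the assumption that $C_{\mathrm{acc}}$ and $C_{\mathrm{rej}}$ can be chosen as simple cycles sharing only $q$ is false in general: take a path-shaped SCC $v_1\leftrightarrow v_2\leftrightarrow\cdots\leftrightarrow v_n$ with colors $1,0,\ldots,0,2$. The only simple cycle with odd maximal color is $v_1 v_2 v_1$ and the only simple cycle with dominating even color is $v_{n-1}v_n v_{n-1}$; they are disjoint, so any pair of witnessing cycles through a common state must contain a non-simple cycle of length $\Theta(n)$, pushing the core alone well past $2n$. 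Finally, the characterization you start from (no reachable even cycle dominating an odd cycle through a common state) is essentially Landweber's theorem; the paper deliberately does not assume it -- \Cref{remark sccs} re-derives it from the game-based certificates -- so you would need to cite or prove it separately. The qualitative dichotomy in your argument is sound and matches the certificate $\zug{x,x_1,x_2}$ of \Cref{lem:not-dbw-cert}, but the quantitative claim of the theorem is not established.
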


\begin{proof}
If $L$ is in DBW, then, as DPWs are B\"uchi type \cite{KPB94}, a DBW for $L$ can be defined on top of the structure of $\A$, and so it has $n$ states.
If $L$ is not in DBW, then by \Cref{dbw:recognisability}, there is a DBW-refuter for $L$, namely  a $(\{\acc,\rej\}/\Sigma)$-transducer that realizes $\NoDBWreal(L)$. We show we can define a DRW $\U$ with $2n$ states for
$\NoDBWreal(L)$. The result then follows from the fact a realizable DRW is realized by a transducer of the same size as the DRW \cite{EJ88}.

We construct $\U$ by taking the union of the acceptance conditions of a DRW $\U_1$ for $\{x \oplus y : x \in L \mbox{ and } y \not \in \infty{\acc}\}$ and a DRW $\U_2$ for $\{x \oplus y : x \not \in L \mbox{ and }  y  \in \infty{\acc}\}$. We obtain both DRWs by taking the product of $\A$, extended to the alphabet $\Sigma \times \{\acc,\rej\}$, with a $2$-state automaton for $\infty{\acc}$, again extended to the alphabet $\Sigma \times \{\acc,\rej\}$.

We describe the construction in detail. Let $\A=\zug{\Sigma,Q,q_0,\delta,\alpha}$. Then, the state space of $\U_1$ is $Q \times \{\acc,\rej\}$ and its transition on a letter $\zug{\sigma,a}$ follows $\delta$ when it reads $\sigma$, with $a$ determining whether $\U_1$ moves to the $\acc$ or $\rej$ copy. Let $\alpha_1$ be the Rabin condition equivalent to $\alpha$. We obtain the acceptance condition of $\U_1$ by replacing each pair $\zug{G,B}$ in $\alpha_1$ by $\zug{G \times \{\rej\},B \times \{\rej\} \cup Q \times \{\acc\}}$. It is not hard to see that a run of $\U_1$ satisfies the latter pair iff its projection on $Q$ satisfies the pair $\zug{G,B}$ and its projection on $\{\acc,\rej\}$ has only finitely many $\acc$. The construction of $\U_2$ is similar, with $\alpha_2$ being a Rabin condition that complements $\alpha$, and then replacing each pair $\zug{G,B}$ in $\alpha_2$ by $\zug{G \times \{\acc\}, B \times \{\acc,\rej\})}$. Since $\U_1$ and $\U_2$ have the same state space, and we only have to take the union of the pairs in their acceptance conditions, the $2n$ bound follows.
\hfill \qed
\end{proof}

Now, when $L$ is given by an NBW, an exponential bound follows from the exponential blow up in determinization \cite{Saf88}. If we are also given an NBW for $\comp(L)$, the complexity can be tightened. Formally, we have the following.

\begin{theorem}
\label{dbw yes or refute}
Given NBWs with $n$ and $m$ states, for $L$ and $\comp (L)$, respectively, one of the following holds.
\begin{enumerate}
\item
There is a DBW for $L$ with $\min\{(1.65n)^n, 3^m\}$ states.
 \item
There is a DBW-refuter for $L$ with $\min\{2 \cdot (1.65n)^n, 2 \cdot (1.65m)^m\}$ states.
 \end{enumerate}
\end{theorem}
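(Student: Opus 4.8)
The plan is to reduce to \Cref{dbw by dpw yes or refute} via determinization, and to split the argument according to whether $L$ is DBW-recognizable. The only nonroutine ingredient is obtaining the bound $3^m$ rather than $(1.65m)^m$ in the first item: I would exploit the fact that $L \in \mathrm{DBW}$ iff $\comp(L) \in \mathrm{DCW}$, which is precisely the semantic precondition under which the cheap breakpoint determinization of an NBW is sound. Everything else follows by combining \Cref{dbw by dpw yes or refute} with standard determinization bounds and the observation that deterministic parity and co-B\"uchi automata complement with no blow-up (for a DPW one shifts all colors by one; a DCW becomes a DBW on the same state space).

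Suppose first that $L \in \mathrm{DBW}$. Determinizing the $n$-state NBW for $L$ by the tightest known construction yields a DPW for $L$ with $(1.65n)^n$ states, and since DPWs are B\"uchi type \cite{KPB94}, a DBW for $L$ can be defined on top of its structure, giving $(1.65n)^n$ states. For the other bound, note that $L \in \mathrm{DBW}$ forces $\comp(L) \in \mathrm{DCW}$, so the given $m$-state NBW recognizes a co-B\"uchi language; the breakpoint (Miyano--Hayashi-style) determinization of an NBW into a deterministic automaton with $3^m$ states, equipped with a co-B\"uchi acceptance condition, is therefore correct and produces a DCW for $\comp(L)$ with $3^m$ states, which we complement for free into a DBW for $L$ with $3^m$ states. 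Taking the smaller of the two automata establishes the first item.

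Now suppose $L \notin \mathrm{DBW}$. Determinize the $n$-state NBW for $L$ into a DPW with $(1.65n)^n$ states and apply \Cref{dbw by dpw yes or refute}; since the first alternative there is ruled out, we obtain a DBW-refuter for $L$ with $2 \cdot (1.65n)^n$ states. Symmetrically, determinize the $m$-state NBW for $\comp(L)$ into a DPW for $\comp(L)$ with $(1.65m)^m$ states, complement it at no cost to get a DPW for $L$ with $(1.65m)^m$ states, and apply \Cref{dbw by dpw yes or refute} once more to get a DBW-refuter with $2 \cdot (1.65m)^m$ states. Taking the smaller one establishes the second item. Since the two cases are exhaustive, the theorem follows.

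The point requiring care is the $3^m$ estimate: one must recognize that DBW-recognizability of $L$ is equivalent to DCW-recognizability of $\comp(L)$, and that the $3^m$ breakpoint construction computes the intended DCW only on co-B\"uchi-recognizable languages (it is unsound on general $\omega$-regular languages). Absent this observation one is limited to the weaker $(1.65m)^m$ bound coming from general parity determinization. The remaining steps — invoking \Cref{dbw by dpw yes or refute}, the B\"uchi-typeness of DPWs, and cost-free complementation of deterministic parity and co-B\"uchi automata — are routine.
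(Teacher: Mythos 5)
Your proposal is correct and follows essentially the same route as the paper: for the positive case, a DBW on top of the $(1.65n)^n$-state determinization of the NBW for $L$ (via B\"uchi-typeness of DPWs) or the dual of the $3^m$-state DCW for $\comp(L)$ obtained by breakpoint determinization, which is sound exactly because $L\in\mathrm{DBW}$ iff $\comp(L)\in\mathrm{DCW}$; for the negative case, the refuter construction of \Cref{dbw by dpw yes or refute} applied to a DPW for either $L$ or $\comp(L)$. The only cosmetic difference is that you invoke \Cref{dbw by dpw yes or refute} as a black box (after free complementation of the DPW for $\comp(L)$), while the paper re-runs its construction directly on either DPW.
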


\begin{proof}
If $L$ is in DBW, then a DBW for $L$ can be defined on top of a DPW for $L$, which has at most $(1.65n)^n$ states \cite{Sch09}, or by dualizing a DCW for $\comp (L)$. Since the translation of an NBW with $m$ states to a DCW, when it exists, results in a DCW with $3^m$ states \cite{BK09}, we are done.
If $L$ is not in DBW, then we proceed as in the proof of \Cref{dbw by dpw yes or refute}, defining $\U$ on the top of a DPW for either $L$ or $\comp(L)$.
\hfill \qed \end{proof}

\subsection{Certifying DBW-Refutation}

Consider a DBW-refuter $\R=\zug{\{\acc,\rej\},\Sigma,{\it env},S,s_0,\rho,\tau}$. We say that a path $s_0,\ldots,s_m$ in $\R$ is an $\rej^+$-path if it contains at least one transition and all the transitions along it are labeled by $\rej$; thus, for all $0 \leq j <m$, we have that $s_{j+1}=\rho(s_j,\rej)$. Then, a path $s_0,\ldots,s_m$ in $\R$ is an $\acc$-path if it contains at least one transition and its first transition is labeled by $\acc$. Thus, $s_1 = \rho(s_0,\acc)$.

\begin{lemma}\label{lem: lan struct}
Consider a DBW-refuter $\R=\zug{\{\acc,\rej\},\Sigma,{\it env},S,s_0,\rho,\tau}$. Then there exists a state $s \in S$, a (possibly empty) path $p = s_0,s_1, \dots s_m$, a $\rej^+$-cycle $p_1 = s^1_0,s^1_{1} \dots s^{1}_{m_1}$, and an $\acc$-cycle  $p_2 = s^2_0,s^2_{1} \dots s^{2}_{m_2}$, such that $s_m=s^1_0=s^{1}_{m_1}=s^2_0=s^{2}_{m_2}=s$.
\end{lemma}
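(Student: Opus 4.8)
The plan is to prove the statement as a \emph{purely structural} fact about the finite directed graph $G_\R$ underlying the transducer $\R$, whose edges are labelled by $\acc$ or $\rej$; realizability of $\NoDBWreal(L)$ by $\R$ plays no role, and the argument only uses that every state has exactly one $\acc$-successor and one $\rej$-successor. Concretely, I will isolate a \emph{sink} strongly connected component $C$ of $G_\R$ that is reachable from the initial state $s_0$, and then exhibit $s$, $p_1$ and $p_2$ inside $C$, with $p$ an arbitrary connecting path from $s_0$.

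First I would locate such a $C$. Consider the condensation of $G_\R$ into its SCCs; since $S$ is finite, this is a finite acyclic graph, so among the SCCs reachable (in the condensation) from the SCC of $s_0$ there is one, say $C$, that is maximal in the reachability order. Maximality forces $C$ to have no edge leaving it to another SCC, since such an edge would expose a strictly later SCC still reachable from $s_0$. Hence every state of $C$ has both its $\acc$-successor and its $\rej$-successor inside $C$. (Operationally: starting from $s_0$, keep following an edge into a strictly later SCC as long as one exists; finiteness of the condensation guarantees this terminates at a sink SCC reachable from $s_0$.)

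Next I would extract $s$ and the two cycles. Pick any $c_0 \in C$ and iterate the $\rej$-transition: the states $c_0, \rho(c_0,\rej), \rho(c_0,\rej^2), \ldots$ all lie in the finite set $C$, so $\rho(c_0,\rej^a) = \rho(c_0,\rej^b)$ for some $0 \le a < b$. Taking $s := \rho(c_0,\rej^a)$ yields $\rho(s,\rej^{b-a}) = s$ with $b-a \ge 1$, i.e.\ the $\rej^+$-cycle $p_1 = s, \rho(s,\rej), \ldots, \rho(s,\rej^{b-a}) = s$. For the $\acc$-cycle, note $s \in C$ implies $\rho(s,\acc) \in C$; since $C$ is strongly connected there is a path inside $C$ from $\rho(s,\acc)$ back to $s$, and prefixing it with the edge $s \xrightarrow{\acc} \rho(s,\acc)$ produces a cycle $p_2$ at $s$ whose first transition is labelled $\acc$. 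Finally, since $C$ is reachable from $s_0$ and $s \in C$, there is a (possibly empty) path $p$ from $s_0$ to $s$. All of $p$, $p_1$, $p_2$ meet at $s$, as required; the degenerate cases ($s_0 = s$ makes $p$ empty; $\rho(s,\acc) = s$ makes $p_2$ the single $\acc$-self-loop at $s$) are consistent with the claimed shapes.

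The only genuinely load-bearing ingredient — the one I would flag as the main point — is the existence of a sink SCC reachable from $s_0$, which relies essentially on finiteness of $S$ (via finiteness of the condensation DAG); for an infinite-state transducer the statement can fail. Everything else is routine pigeonhole and path concatenation.
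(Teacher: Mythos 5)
Your proof is correct and follows essentially the same route as the paper's: the paper also picks a reachable \emph{ergodic} (sink) SCC $C$, obtains the $\rej^+$-cycle by reading $\rej^\omega$ from a state of $C$ and finding a repeated state, and gets the $\acc$-cycle from responsiveness plus strong connectivity of $C$. Your version just spells out the condensation-DAG argument and the pigeonhole step more explicitly, and correctly observes that only totality of the transition function is used.
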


\begin{proof}
Let $s_{i} \in S$ be a reachable state that belongs to an ergodic component in the graph of $\R$ (that is, $s_{i} \in C$, for a set $C$ of strongly connected states that can reach only states in $C$).
Since $\R$ is responsive, in the sense it can read in each round both $\acc$ and $\rej$, we can read from $s_i$ the input sequence $\rej^\omega$. Hence, $\R$ has a $\rej^+$-path $s_i,\ldots,s_l,\ldots,s_k$ with $s_l=s_k$, for $l < k$. It is easy to see that the claim holds with $s=s_l$. In particular, since $\R$ is responsive and $C$ is strongly connected, there exists an $\acc$-cycle from $s_l$ to itself.
\hfill \qed \end{proof}

\ifarxiv
\begin{figure}[hbt]
  \begin{center}
  	\includegraphics{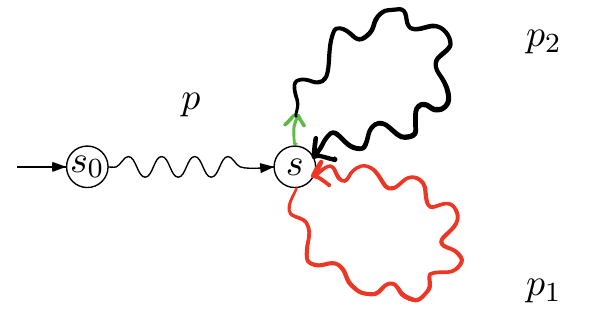}
  	\vspace{-3mm}
  \end{center}
  \caption{The structure from \Cref{lem: lan struct} that exists in every DBW-refuter.}
\end{figure}
\fi

\begin{theorem}\label{lem:not-dbw-cert}
An $\omega$-regular language $L$ is not in DBW iff there exist three finite words $x \in \Sigma^*$ and $x_1, x_2 \in \Sigma^+$, such that
\ifarxiv
\[x \cdot (x_1 + x_2)^* \cdot x_1^\omega \subseteq L \quad \text{ and } \quad x \cdot (x_1^* \cdot x_2)^\omega \cap L = \emptyset.\]
\else
$x \cdot (x_1 + x_2)^* \cdot x_1^\omega \subseteq L$ and $ x \cdot (x_1^* \cdot x_2)^\omega \cap L = \emptyset.$
\fi
\end{theorem}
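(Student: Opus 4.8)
The plan is to prove the two implications separately. The direction ``if the three words exist then $L \notin {}$DBW'' is essentially Landweber's classical pumping argument, phrased with $x_1$ in the role of ``$b$'' and $x_2$ in the role of ``$a$''. The direction ``if $L \notin {}$DBW then the three words exist'' reads the words off the structure of a DBW-refuter, using \Cref{lem: lan struct}.

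For the forward direction, suppose $L$ is not in DBW. By \Cref{dbw:recognisability} there is a DBW-refuter $\R = \zug{\{\acc,\rej\},\Sigma,\mathit{env},S,s_0,\rho,\tau}$; unwinding the definition of $\NoDBWreal(L)$, realizability by the environment says exactly that for every input $y \in \{\acc,\rej\}^\omega$ we have $\R(y) \in L$ iff $y$ has only finitely many $\acc$'s. Apply \Cref{lem: lan struct} to obtain a state $s$, a path $p = s_0,\dots,s_m$ from $s_0$ to $s$, a $\rej^+$-cycle $p_1$ of length $m_1$ at $s$, and an $\acc$-cycle $p_2$ of length $m_2$ at $s$. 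Reading off the $\Sigma$-labels emitted along these, starting from the state after $s_0$ (since $\init = \mathit{env}$ discards the label of $s_0$), let $x \in \Sigma^*$ be emitted along $p$, let $x_1 \in \Sigma^+$ be emitted along one traversal of $p_1$, and let $x_2 \in \Sigma^+$ be emitted along one traversal of $p_2$; here $x_1,x_2$ are nonempty because the cycles have at least one transition. Let $y_p$, $y_1 = \rej^{m_1}$, $y_2$ be the input words driving $\R$ along $p$, $p_1$, $p_2$ respectively, where $y_2$ begins with $\acc$ and $y_1$ consists solely of $\rej$'s. To get $x\cdot(x_1+x_2)^*\cdot x_1^\omega \subseteq L$: given such a word $w$, fix a decomposition of its finite middle part into $x_1$- and $x_2$-blocks and feed $\R$ the input $y_p$, then the matching sequence of $y_1$- and $y_2$-blocks, then $\rej^\omega$; by determinism $\R$ follows $p$ and the two cycles, so $\R(y)=w$, while $y$ has only finitely many $\acc$'s (they appear only in $y_p$ and in the finitely many $y_2$-blocks), hence $w = \R(y)\in L$. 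To get $x\cdot(x_1^*\cdot x_2)^\omega \cap L = \emptyset$: given $w = x\cdot x_1^{k_1}x_2\cdot x_1^{k_2}x_2\cdots$, feed $\R$ the input $y_p$ followed by $\rej^{k_1 m_1}\,y_2\,\rej^{k_2 m_1}\,y_2\cdots$; again $\R(y)=w$, but now $y$ has infinitely many $\acc$'s (one per $y_2$-block), so $w = \R(y)\notin L$.

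For the backward direction, assume the three words exist and, towards a contradiction, that some DBW $\A = \zug{\Sigma,Q,q_0,\delta,\alpha}$ recognizes $L$. Put $q = \delta(q_0,x)$. Build inductively finite words $w_0 = \epsilon \sqsubseteq w_1 \sqsubseteq w_2 \sqsubseteq \cdots$ (prefix order) with $w_k \in (x_1^*\,x_2)^k$, maintaining that the run of $\A$ from $q$ on $w_k$ visits $\alpha$ inside each of $k$ disjoint nonempty infixes. For the step, since $x\cdot w_k\cdot x_1^\omega \in x\cdot(x_1+x_2)^*\cdot x_1^\omega \subseteq L$, the run of $\A$ from $\delta(q,w_k)$ on $x_1^\omega$ is accepting and therefore hits $\alpha$ at some position $\geq 1$; pick $n_{k+1}\geq 1$ large enough to cover that position and set $w_{k+1} = w_k\cdot x_1^{n_{k+1}}\cdot x_2$. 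The limit word $w \in (x_1^*\,x_2)^\omega$ then makes the run of $\A$ from $q$ on $w$ visit $\alpha$ infinitely often, so the run of $\A$ from $q_0$ on $x\cdot w$ is accepting and $x\cdot w \in L(\A)=L$, contradicting $x\cdot(x_1^*\cdot x_2)^\omega \cap L = \emptyset$.

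I expect the main obstacle to be the bookkeeping in the forward direction rather than any conceptual difficulty: one must be careful that the $\Sigma$-output is read off from $s_1$ onward (because $\init = \mathit{env}$), that $x_1,x_2$ are genuinely nonempty, and that the $\acc$-counting comes out finite in the first construction and infinite in the second, since this is precisely what pushes $w$ into or out of $L$. Absent \Cref{lem: lan struct}, extracting the path-plus-two-cycles shape from an arbitrary finite-memory refuter would be the real crux; with that lemma in hand, the verification is essentially mechanical. The backward direction is routine, its only subtlety being that the $\alpha$-visit must land strictly inside the freshly appended $x_1$-block, which is exactly where the hypothesis $x\cdot(x_1+x_2)^*\cdot x_1^\omega \subseteq L$ enters.
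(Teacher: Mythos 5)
Your proposal is correct and follows essentially the same route as the paper: the forward direction extracts $x,x_1,x_2$ from the path/$\rej^+$-cycle/$\acc$-cycle structure of a refuter via \Cref{lem: lan struct} and verifies both inclusions by tracking the $\acc$-count of the driving input, and the backward direction is Landweber's pumping argument. The only (harmless) cosmetic difference is at the end of the pumping argument, where you pass directly to the limit word in $x\cdot(x_1^*\cdot x_2)^\omega$ whose run visits $\alpha$ infinitely often, whereas the paper first applies a pigeonhole on states to produce an ultimately periodic counterexample.
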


\begin{proof}
Assume first that $L$ is not in DBW. Then, by \Cref{dbw yes or refute}, there exists a DBW-refuter $\R$ for it. Let $p = s_0,s_1, \dots s_m$, $p_1 = s^1_0,s^1_{1}, \dots, s^{1}_{m_1}$, and $p_2 = s^2_0,s^2_{1}, \dots, s^{2}_{m_2}$, be the path, $\rej^+$-cycle, and  $\acc$-cycle that are guaranteed to exist by \Cref{lem: lan struct}. Let $x,x_1$, and $x_2$ be the outputs that $\R$ generates along them. Formally, $x=\tau(s_1)\cdot \tau(s_2)\cdots\tau(s_{m})$, $x_1=\tau(s^1_{1}) \cdot \tau(s^1_{2})\cdots \tau(s^{1}_{m_1})$, and
$x_2=\tau(s^2_1) \cdot \tau(s^2_{1})\cdots \tau(s^{2}_{m_2})$. Note that as the environment initiates the interaction, the first letter in the words $x$, $x_1$, and $x_2$, are the outputs in the second states in $p$, $p_1$, and $p_2$.
\ifarxiv
We prove that $x,x_1$, and $x_2$ satisfy the two conditions in the theorem.

Let $y \in \{\acc,\rej\}^*$, and $y_1,y_2\in \{\acc,\rej\}^+$ be the input sequences read along $p,p_1$, and $p_2$, respectively. Thus, $y=a_0,a_1,\ldots,a_{m-1}$ is such that for all $0 \leq j < m$, we have that $s_{j+1}=\rho(s_j,a_j)$, and similarly for $y_1$ and $y_2$ with $p_1$ and $p_2$.

Consider a word $w \in x\cdot (x_1 + x_2)^* \cdot x_1^\omega$. Let $a \in y \cdot (y_1 + y_2)^* \cdot y_1^\omega$ be such that $\R(a)=w$. Note we can obtain $a$ from $w$ by replacing each subword $x$ by $y$, $x_1$ by $y_1$, and $x_2$ by $y_2$. Since $p_1$ is a $\rej^+$-cycle, we have that $a \in (\acc+\rej)^* \cdot \rej^\omega$, and so $a \in \neg \infty \acc$. Since $\R$ is a refuter for $L$, it follows that $\R(a) \in L$. Hence, $x \cdot (x_1 + x_2)^* \cdot x_1^\omega \subseteq L$.

For this direction it remains to show that $x \cdot (x_1^* \cdot x_2)^\omega \cap L = \emptyset$. Consider a word $w \in x \cdot (x_1^* \cdot x_2)^\omega$, and let $a \in y \cdot (y_1^* \cdot y_2)^\omega$ be such that $\R(a)=w$. Since $p_1$ is an $\acc$-cycle, we have that $a \in (\rej^* \acc)^\omega$, and so $a \in \infty \acc$. Since $\R$ is a refuter for $L$, it follows that $\R(a) \notin L$. Hence, $x \cdot (x_1^* \cdot x_2)^\omega \cap L = \emptyset$, and we are done.
\else
The final step, i.e., that $x,x_1$, and $x_2$ satisfy the two conditions of the theorem, can be found in the full version of this article \cite{kupferman2021certifying}.
\fi


For the other direction, we adjust Landweber's proof \cite{Lan69} for the non-DBW-recogniz\-ability of $\neg \infty a$ to $L$. Essentially, $\neg \infty a$ can be viewed as a special case of $x \cdot (x_1 + x_2)^* \cdot x_1^\omega$, with $x=\epsilon$, $x_1=b$, and $x_2=a$. Assume by way of contradiction that there is a DBW $\A$ with $\lang(\A)=L$.
Let $\A=\zug{\Sigma,Q,q_0,\delta,\alpha}$.
Consider the infinite word $w_0 = x \cdot x_1^\omega$. Since $w_0 \in x \cdot (x_1 + x_2)^* \cdot x_1^\omega$, and so $w \in L$, the run of $\A$ on $w_0$ is accepting.
Thus, there is $i_1 \geq 0$ such that $\A$ visits $\alpha$ when it reads the $x_1$ suffix of $x \cdot x_1^{i_1}$.
Consider now the infinite word $w_1 = x \cdot x_1^{i_1} \cdot x_2 \cdot x_1^\omega$.
Since $w_1$ is also in $L$, the run of $\A$ on $w_1$ is accepting.
Thus, there is $i_2 \geq 0$ such that $\A$ visits $\alpha$ when it reads the $x_1$ suffix of $x \cdot x_1^{i_1} \cdot x_2 \cdot x_1^{i_2}$.
In a similar fashion we can continue to find indices $i_1,i_2,\ldots$ such for all $j \geq 1$, we have that
$\A$ visits $\alpha$ when it reads the $x_1$ suffix of $x \cdot x_1^{i_1} \cdot x_2 \cdot x_1^{i_2} \cdot x_2 \cdots x_2 \cdot x_1^{i_j}$.
\ifarxiv
Since $Q$ is finite, there are iterations $j$ and $k$, such that $1 \leq j < k \leq
|Q|+1$ and there is a state $q$ such that $q=\delta(q_0,x \cdot x_1^{i_1} \cdot x_2 \cdot x_1^{i_2} \cdot x_2 \cdots x_2 \cdot x_1^{i_j})=
\delta(q_0,x \cdot x_1^{i_1} \cdot x_2 \cdot x_1^{i_2} \cdot x_2 \cdots x_2 \cdot x_1^{i_k})$.
Since $j < k$, the extension $x_2 \cdot x_1^{i_{j+1}}\cdots x_1^{i_{k-1}}\cdot x_2 \cdot x_1^{i_k}$ is not empty and at least one state in $\alpha$ is visited when $\A$ loops in $q$ while reading it.
It follows that the run of $\A$ on the word
\[w=x \cdot x_1^{i_1} \cdot x_2 \cdot x_1^{i_2} \cdot x_2 \cdots x_2 \cdot
x_1^{i_j} \cdot (x_2 \cdot x_1^{i_{j+1}}\cdots x_1^{i_{k-1}} \cdot x_2 \cdot x_1^{i_k})^\omega\]
is accepting. But $w \in x \cdot (x_1^* \cdot x_2)^\omega$, so it is
not in $L$, and we have reached a contradiction.
\else
Since $Q$ is finite, we can construct a word $w \in x \cdot (x_1^* \cdot x_2)^\omega$ that is accepted, but we assumed that $x \cdot (x_1^* \cdot x_2)^\omega \cap L = \emptyset$, and thus we have reached a contradiction. The details of this step are given in \cite{kupferman2021certifying}.
\fi
\hfill \qed \end{proof}

\ifarxiv
\begin{remark}
\Cref{lem:not-dbw-cert}, as well as the yet to be presented \Cref{lem:not-dpw-0-cert,thm:not-dww-i-j-cert} are special cases of \cite[Lemma 14]{Wag79}. However, our alternative proof relies on \Cref{determinacy} and the analysis of the resulting refuter, while the proof of \cite{Wag79} examines the structure of a deterministic Muller automaton. Due to the game-based setting we can easily extend our approach to refuting separability of languages (\Cref{section:separability}), which requires substantial modifications of the approach from \cite{Wag79}.
\end{remark}
\fi
We refer to a triple $\zug{x,x_1,x_2}$ of words that satisfy the conditions in \Cref{lem:not-dbw-cert} as a {\em certificate\/} to the non-DBW-recognizability of $L$.

\begin{example}
\label{ex certificate}
In \Cref{rom refuter}, we described a DBW-refuter for $L=(\$+(0\cdot \{0,1,\$\}^*\cdot 1))^\omega$. A certificate to its non-DBW-recognizability is $\zug{x,x_1,x_2}$, with $x=01$, $x_1=\$$, and $x_2=1$. Indeed, $01 \cdot (\$ + 1)^* \cdot \$^\omega \subseteq L$ and $01 \cdot (\$^* \cdot 1)^\omega \cap L = \emptyset$. \hfill $\blacksquare$
\end{example}

Note that obtaining certificates according to the proof of \Cref{lem:not-dbw-cert} may not give us the shortest certificate. For example, for $L$ in \Cref{ex certificate}, the proof would give us $x=01\$$, $x_1=\$$, and $x_2=1\$,$ with $01\$ \cdot (\$ + 1\$)^* \cdot \$^\omega \subseteq L$ and $01\$ \cdot (\$^* \cdot 1\$)^\omega \cap L = \emptyset$. The problem of generating smallest certificates is related to the problem of finding smallest witnesses to DBW non-emptiness \cite{KS06} and is harder. Formally, defining the length of a certificate $\zug{x,x_1,x_2}$ as $|x|+|x_1|+|x_2|$, we have the following\ifarxiv\else~(see proof in \cite{kupferman2021certifying})\fi:

\begin{theorem}\label{shortest}
Consider a DPW $\A$ and a threshold $l \geq 1$. The problem of deciding whether there is a certificate of length at most $l$ for non-DBW-recognizability of $L(\A)$ is NP-complete, for $l$ given in unary or binary.
\end{theorem}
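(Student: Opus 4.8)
The plan is to prove the two directions separately, and to handle the unary-versus-binary issue by a single structural observation: a certificate, if it exists at all, may be taken to have length polynomial in $\A$.

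\emph{Membership in NP.} The starting point is that a non-DBW language always has a \emph{short} certificate. Let $\A$ have $n$ states. If $L(\A)\notin$~DBW, then by \Cref{dbw by dpw yes or refute} there is a DBW-refuter $\R$ for $L(\A)$ with at most $2n$ states, and the certificate $\zug{x,x_1,x_2}$ extracted from $\R$ in the proof of \Cref{lem:not-dbw-cert} is read along the path $p$, the $\rej^+$-cycle $p_1$, and the $\acc$-cycle $p_2$ of \Cref{lem: lan struct}; since each of these is a (simple) path, resp.\ cycle, of $\R$, we get $|x|+|x_1|+|x_2|\le cn$ for a fixed constant $c$. Hence \enquote{there is a certificate of length $\le l$} is equivalent to \enquote{there is a certificate of length $\le\min\{l,cn\}$}: one direction is trivial, and for the other, the existence of any certificate already implies $L(\A)\notin$~DBW by \Cref{lem:not-dbw-cert}, so a certificate of length $\le cn$ exists. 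The NP procedure therefore guesses a triple $\zug{x,x_1,x_2}$ with $|x|+|x_1|+|x_2|\le\min\{l,cn\}\le cn$ (polynomial in $|\A|$ regardless of how $l$ is written) and verifies in polynomial time that $x\cdot(x_1+x_2)^*\cdot x_1^\omega\subseteq L(\A)$ and $x\cdot(x_1^*\cdot x_2)^\omega\cap L(\A)=\emptyset$. Both are inclusion/disjointness tests between a non-deterministic automaton of size $O(|x|+|x_1|+|x_2|)$ for the respective $\omega$-regular language and the deterministic $\A$ (or its complement DPW), so they reduce to non-emptiness of a product that carries a conjunction of a parity and a Büchi condition with a bounded number of colours, which is polynomial. (Alternatively, for the first inclusion one computes the set $Q'$ of states reachable from $\delta(q_0,x)$ under the macro-transitions $\delta(\cdot,x_1),\delta(\cdot,x_2)$ and checks, for each $p\in Q'$, that the lasso-shaped run of $\A$ from $p$ on $x_1^\omega$ is accepting; the second condition is handled symmetrically.)

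\emph{NP-hardness.} I would reduce from directed \textsc{Hamiltonian Cycle}. Given a digraph $G=\zug{V,E}$ with $|V|=n$, build a DPW $\A_G$ over the alphabet $E\cup\{\#\}$, where $\#$ is fresh, with $L(\A_G)$ the set of words $w$ such that (i) deleting the $\#$'s yields an infinite edge sequence that is a valid walk in $G$ visiting every vertex of $V$ infinitely often, and (ii) $w$ contains only finitely many $\#$'s. A deterministic automaton for this language keeps the current vertex, runs a cyclic pointer over $V$ for the generalized-Büchi requirement (i), and treats $\#$ as a co-Büchi mark for (ii); it is a DPW (indeed a one-pair Rabin automaton) with $O(n^2)$ states, and the construction is polynomial. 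Set $l=n+1$. If $h=h_1\cdots h_n$ with $h_i\in E$ is (the edge word of) a Hamiltonian cycle, then $\zug{\epsilon,h,\#}$ is a certificate: every word in $\epsilon\cdot(h+\#)^*\cdot h^\omega$ has $h^\omega$ as edge projection — a valid walk through all of $V$ infinitely often — and finitely many $\#$'s, so it lies in $L(\A_G)$; and every word in $\epsilon\cdot(h^*\cdot\#)^\omega$ has infinitely many $\#$'s, so it lies outside $L(\A_G)$; thus $|x|+|x_1|+|x_2|=n+1$. Conversely, for any certificate $\zug{x,x_1,x_2}$, taking the empty word from $(x_1+x_2)^*$ gives $x\cdot x_1^\omega\in L(\A_G)$; since $x_1^\omega$ must contain finitely many $\#$'s, $x_1\in E^+$, and since the edge projection of $x\cdot x_1^\omega$ visits all of $V$ infinitely often, the underlying closed walk of $x_1$ — which visits $|x_1|$ vertices counted with multiplicity — must cover all $n$ vertices, so $|x_1|\ge n$. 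As $x_2$ is non-empty, $|x|+|x_1|+|x_2|\ge n+1$, with equality forcing $x=\epsilon$ and $|x_1|=n$, i.e.\ the $n$ edges of $x_1$ visit $n$ distinct vertices forming a closed walk, which is a Hamiltonian cycle of $G$. So $G$ has a Hamiltonian cycle iff $L(\A_G)$ admits a certificate of length $\le n+1$. Since $l=n+1$ is polynomial, it can be presented in unary or binary, so hardness holds for both encodings.

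The routine parts are the two polynomial-time verification sub-procedures and the bookkeeping in the $O(n^2)$-state automaton $\A_G$. The step I expect to be the main obstacle is the tight lower-bound analysis of shortest certificates of $\A_G$, i.e.\ ruling out \emph{any} certificate of total length below $n+1$: the key leverage is that the first condition of a certificate forces $x_1^\omega\in L(\A_G)$, hence forces $x_1$ to be a $\#$-free edge word whose underlying closed walk spans $V$, and that length exactly $n$ is realizable precisely by Hamiltonian cycles. A pleasant side effect of using \Cref{lem:not-dbw-cert} is that we never have to verify separately that $L(\A_G)\notin$~DBW: exhibiting any triple satisfying the two inclusions already certifies non-DBW-recognizability.
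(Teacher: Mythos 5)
Your proof is correct and follows essentially the same route as the paper: NP membership via the $O(n)$ bound on certificate length extracted from the linear-size refuter together with polynomial-time containment/emptiness checks, and NP-hardness by reduction from directed Hamiltonian cycle with threshold $l=n+1$ and a certificate of the form $\zug{\epsilon,\text{cycle word},\text{spoiler letter}}$. The only differences are cosmetic: you use a fresh symbol $\#$ and a reject-on-mismatch automaton where the paper uses an extra edge letter and a reset-to-error state, and you obtain $|x_1|\geq n$ by counting vertices on the closed walk rather than distinct transition labels.
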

\ifarxiv
\begin{proof}
We start with membership in NP. Let $n$ be the number of states in $\A$. By \Cref{dbw by dpw yes or refute} and the construction in \Cref{lem:not-dbw-cert} we can bound the length of a certificate to be at most $6n$, since these are constructed from simple paths. Given a witness certificate $\zug{x,x_1,x_2}$ of length at most $l$ (the latter can be checked in polynomial time, regardless of how $l$ is given), checking the conditions in \Cref{lem:not-dbw-cert} involves checking $x \cdot (x_1 + x_2)^* \cdot x_1^\omega \subseteq L(\A)$, namely containment of a DCW of size linear in the certificate in the language of a DPW, which can be done in polynomial time, and checking $x \cdot (x_1^* \cdot x_2)^\omega \cap L(\A) = \emptyset$, namely emptiness of the intersection with a DBW, which again can be done in polynomial time.

For the NP-hardness, we describe a reduction from the Hamiltonian-cycle problem on directed graphs.
Formally, given a directed graph $G = \zug{V,E}$, we describe a DPW that is not in DBW and which has a certificate of length $|V|+1$ iff $G$ has a Hamiltonian cycle, namely a cycle that visits each vertex in $V$ exactly once. The proof elaborates on the NP-hardness proof of the problem of finding a shortest witness to DBW non-emptiness \cite{KS06}.

Let $V = \{1, \ldots, n\}$, and assume that $n \geq 2$ and $E$ is not empty.
We define a DPW $\A = \zug{E, (V \times V) \cup \{\zug{1,1}_{\text{err}}\}, \{\zug{1,1}\}, \delta, \alpha}$, where $\alpha(\zug{n,n}) = 1$, $\alpha(\zug{1,1}_{\text{err}}) = 2$, $\alpha(q) = 0$ for all other states $q$, and
\begin{align*}
\delta(\zug{i,j},(k,h)) & = \begin{cases}
	\zug{h, (j\text{ mod }n) + 1} & \text{if } i = k = j, \\
	\zug{h,j}                     & \text{if } i = k \neq j, \\
	\zug{1,1}_{\text{err}}		  & \text{otherwise.}
\end{cases} \\
\delta(\zug{1,1}_{err},(k,h)) & = \begin{cases}
	\zug{h,2} \hspace{5.75em}        & \text{if } k = 1, \\
	\zug{1,1}_{\text{err}}		  & \text{otherwise.}
\end{cases}
\end{align*}

Intuitively, $\A$ interprets a word $w \in E^\omega$, as an infinite path starting in vertex $1$, and it verifies that the path is valid on $G$. Whenever $\A$ encounters an edge that does not match the current state, which is tracked in the first component of the state space, it resets and moves to $\zug{1,1}_{\text{err}}$. The second component of a state $\zug{i,j}$ is the vertex the path owes a visit in order to visit all vertices infinitely often. It is easy to see that $w \in L(\A)$ iff there is a suffix $w'$ of $w$ that describes a valid path in $G$ that visits every vertex infinitely often. Notice that $L(\A)$ is not DBW-recognizable and that $\A$ is polynomial in the size of $G$.

Clearly, the reduction is polynomial, we now prove its correctness. Assume first that $G$ has a Hamiltonian cycle $c$. Then, from the word $w$ read along $c$ from vertex 1, we construct the certificate $\zug{\epsilon, w, (2,1)}$ showing non-DBW-recognizabilty. Indeed, the certificate is correct, since $(w + (2,1))^* \cdot w^\omega \subseteq L(\A)$ and $(w^* \cdot (2,1))^\omega \cap L(\A) = \emptyset$. This certificate has size $n+1$.

For the other direction, assume that $\zug{x, x_1, x_2}$ is a certificate of size (at most) $n+1$. Then, $x \cdot x_1^\omega \in L(\A)$ and as $x_2$ is not empty, it must be that $|x| + |x_1| \leq n$. Let $r$ be the corresponding accepting run and thus $r$ visits $\zug{n,n}$ infinitely often. By the definition of $\delta$, the run $r$ also visits the states $\zug{i,i}$, for all $1 \leq i \leq n$. Since the transitions to each of these states are labelled differently, $x_1$ must contain at least $n$ different letters. Hence, $|x_1|$ must be $n$ and thus $G$ has a Hamiltonian cycle.
\end{proof}
\fi

\begin{remark}
\label{remark sccs}{\bf [Relation with existing characterizations]}\
By \cite{Lan69}, the language of a DPW $\A=\zug{\Sigma,Q,q_0,\delta,\alpha}$ is in DBW iff for every accepting SCS $C \subseteq Q$ and SCS $C' \supseteq C$, we have that $C'$ is accepting. The proof of Landweber relies on a complicated analysis of the structural properties of $\A$. As we elaborate \ifarxiv below, \else in the full version \cite{kupferman2021certifying}, \fi \Cref{lem:not-dbw-cert}, which relies instead on determinacy of games, suggests an alternative proof. Similarly, \cite{Wag79} examines the structure of a deterministic Muller automaton, and
\Cref{lem:not-dbw-cert} can be viewed as a special case of Lemma 14 there, with a proof based on the game setting. %

\ifarxiv
We use certificates in order to prove that a DPW $\A=\zug{\Sigma,Q,q_0,\delta,\alpha}$ is in DBW iff for every accepting SCS $C \subseteq Q$ and SCS $C' \supseteq C$, we have that $C'$ is accepting.
First, an accepting SCS $C \subseteq Q$ and a rejecting SCS $C' \supseteq C$ induce a certificate $\zug{x,x_1,x_2}$. Indeed, taking a state $s \in C$, we can define $x$ to be a word that leads from $q_0$ to $s$, $x_1$ to be a word that traverses $C$, and $x_2$ a word that traverses $C'$.
Then, the set of states traversed infinitely often in a run on a word in $x \cdot (x_1 + x_2)^* \cdot x_1^\omega$ is $C$, and the set of states traversed infinitely often in a run on a word in $x \cdot (x_1^* \cdot x_2)^\omega$ is $C'$. For the other direction, a certificate $\zug{x,x_1,x_2}$ induces an accepting SCS $C \subseteq Q$ and a rejecting SCS $C' \supseteq C$ as follows.
Consider a graph $G=\zug{Q,E}$, where $E(s,s')$ iff $\delta(s,x_1)=s'$ or $\delta(s,x_2)=s'$. We consider an ergodic SCC that is reachable from $\delta(q_0,x)$ in $G$. In this ergodic SCC, we can traverse both words in $x \cdot (x_1 + x_2)^* \cdot x_1^\omega$ along an accepting cycle $C$, and words in $x \cdot (x_1^* \cdot x_2)^\omega$ along a rejecting cycle, whose union with $C$ can serve as $C'$.
\fi
\hfill $\blacksquare$
\end{remark}

Being an $(A/\Sigma)$-transducer, every DBW-refuter $\R$ is responsive and may generate many different words in $\Sigma^\omega$. Below we show that we can leave $\R$ responsive and yet let it generate only words induced by a certificate. Formally, we have the following.

\begin{lemma}
\label{ref from cert}
Given a certificate $\zug{x,x_1,x_2}$ to non-DBW-recognizability of a language $L \subseteq \Sigma^\omega$, we can define a refuter $\R$ for $L$ such that for every $y \in A^\omega$, if $y \models \infty \acc$, then $\R(y) \in x \cdot (x_1^* \cdot x_2)^\omega$, and if $y \models \neg \infty \acc$, then $\R(y) \in
x \cdot (x_1 + x_2)^* \cdot x_1^\omega$.
\end{lemma}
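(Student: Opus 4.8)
The plan is to build $\R$ directly from the words $x, x_1, x_2$, using a small amount of memory to track where we are in the "skeleton" pattern and reacting to the annotation letters Prover supplies. First I would describe the state space as a disjoint union of three chains of states: an initial chain that spells out $x$, a chain that spells out $x_1$, and a chain that spells out $x_2$, together with designated "hub" states at the ends of the $x_1$- and $x_2$-chains (which can be identified, since both start fresh from the same branching point after a block finishes). Concretely, after $\R$ has output the prefix $x$ it reaches a hub state $s$; from $s$, reading $\acc$ it commits to outputting a full copy of $x_2$ and then returns to $s$; reading $\rej$ it commits to outputting a full copy of $x_1$ and then returns to $s$. Crucially, once $\R$ is in the middle of outputting a block (a copy of $x_1$ or of $x_2$), it ignores the annotation letters until the block is finished — it only consults the annotation when it is back at the hub $s$. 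Since $x_1, x_2 \in \Sigma^+$ this is well-defined, $\R$ is finite-state, it is responsive (every state has outgoing transitions for both $\acc$ and $\rej$), and $\init = {\it env}$, so the annotation letter read at the hub is used to decide the next block.

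Next I would verify the two implications on the generated word $\R(y)$. Fix $y \in A^\omega$. If $y \models \neg\infty\acc$, then after finitely many rounds Prover plays $\rej$ forever; up to that point $\R$ has output $x$ followed by some finite concatenation of blocks drawn from $\{x_1, x_2\}$ (each $\acc$ read at a hub triggers an $x_2$-block, each $\rej$ a $x_1$-block, and annotations inside blocks are irrelevant), so the finite prefix output so far lies in $x \cdot (x_1 + x_2)^*$; and from the point where only $\rej$ is read, every visit to the hub triggers an $x_1$-block, so the tail is $x_1^\omega$. Hence $\R(y) \in x \cdot (x_1 + x_2)^* \cdot x_1^\omega$. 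Symmetrically, if $y \models \infty\acc$, then between any two visits to the hub $\R$ outputs exactly one block, and infinitely often the letter read at the hub is $\acc$, triggering an $x_2$-block; the $\rej$'s read at hubs in between trigger $x_1$-blocks, so the suffix after the initial $x$ has the shape $(x_1^* \cdot x_2)^\omega$, giving $\R(y) \in x \cdot (x_1^* \cdot x_2)^\omega$. One technical point to handle carefully here is the bookkeeping of the very first letter: since $\init = {\it env}$, the definition in the preliminaries ignores $\tau(s_0)$, so I would set up the hub transitions so that the $i$-th annotation letter governs the $i$-th block in the intended way; this is a routine off-by-one adjustment.

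Finally I would observe that this $\R$ is indeed a DBW-refuter for $L$, i.e.\ it realizes $\NoDBWreal(L)$: if $y \models \neg\infty\acc$ then $\R(y) \in x \cdot (x_1 + x_2)^* \cdot x_1^\omega \subseteq L$, so $\R(y) \oplus y$ has $\R(y) \in L$ and $y \notin \infty\acc$, which lies in $\NoDBWreal(L)$; and if $y \models \infty\acc$ then $\R(y) \in x \cdot (x_1^* \cdot x_2)^\omega$, which by the certificate property is disjoint from $L$, so again $\R(y) \oplus y \in \NoDBWreal(L)$. Thus the same construction simultaneously proves $\R$ is a refuter and establishes the two containments claimed in the lemma. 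I expect the main obstacle to be purely presentational rather than mathematical: getting the hub/block machinery and the $\init = {\it env}$ indexing convention stated cleanly enough that the "annotations inside a block are ignored" invariant is manifestly correct; the underlying argument is a direct unwinding of the definitions together with the certificate property from \Cref{lem:not-dbw-cert}.
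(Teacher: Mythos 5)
There is a genuine gap in your construction, and it is exactly at the point you flag as ``crucial'': the decision to let $\R$ ignore the annotation letters while it is in the middle of outputting a block, consulting only the single letter read at the hub. The hypothesis $y \models \infty\acc$ guarantees infinitely many $\acc$'s in $y$, but it does \emph{not} guarantee that infinitely many of them occur at hub positions; Prover is free to place every $\acc$ at a position strictly inside a block. Concretely, if $|x_1| \geq 2$, Prover can answer $\rej$ at every hub and $\acc$ somewhere inside every $x_1$-block. Then $y \in \infty\acc$, yet your $\R$ never switches to $x_2$ and outputs $\R(y) = x \cdot x_1^\omega$, which is not in $x \cdot (x_1^* \cdot x_2)^\omega$ and, worse, lies in $L$ by the certificate property. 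So $\R(y) \oplus y \notin \NoDBWreal(L)$ and $\R$ is not even a refuter. Your step ``infinitely often the letter read at the hub is $\acc$'' is precisely the unjustified claim.

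The fix is the paper's construction: instead of sampling the annotation only at the hub, $\R$ carries a one-bit flag through each block recording whether \emph{any} $\acc$ has been read since the block began, and at the end of the block it outputs $x_2$ next iff that flag is set (resetting the flag for the new block). Since each block is finite, infinitely many $\acc$'s in $y$ must be spread over infinitely many blocks, so infinitely many $x_2$-blocks are emitted and $\R(y) \in x \cdot (x_1^* \cdot x_2)^\omega$; conversely, finitely many $\acc$'s means that eventually every block's flag stays unset and the tail is $x_1^\omega$. Everything else in your write-up (the chain-of-states layout, the initial $x$-prefix ignoring inputs, the $\init = {\it env}$ indexing remark, and the final observation that the two containments together with \Cref{lem:not-dbw-cert} make $\R$ a refuter) matches the paper and goes through once this memory bit is added.
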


\begin{proof}
Intuitively, $\R$ first ignores the inputs and outputs $x$. It then repeatedly outputs either $x_1$ or $x_2$, according to the following policy: in the first iteration, $\R$ outputs $x_1$. If during the output of $x_1$ all inputs are $\rej$, then $\R$ outputs $x_1$ also in the next iteration. If an input $\acc$ has been detected, thus the prover tries to accept the constructed word, the refuter outputs $x_2$ in the next iteration, again keeping track of an $\acc$ input. If no $\acc$ has been input, $\R$ switches back to outputting $x_1$.
\ifarxiv

Formally, let $\zug{x, x_1, x_2}$ be a certificate with $x = x^1 \cdots x^n$, $x_1 = x^1_1 \cdots x^{n_1}_1$, and $x_2 = x^1_2 \cdots x_2^{n_2}$. We define $\R= \zug{\{\acc, \rej\}, \Sigma, {\it env}, S, s_0, \rho, \tau}$ with the components $S$, $\rho$, and $\tau$ defined as follows:
\begin{itemize}
	\item $S = \{s_0, s_1, \dots, s_n, (s^1_1,a), \dots, (s^{n_1}_1,a), (s^1_2,a), \dots, (s^{n_2}_2,a) : a \in \{\acc, \rej\}\}$
	\item $\rho(s, a) = \begin{cases}
	s_1               & \text{if } s = s_0 \text{ and } n > 0, \\
	s_{i+1}           & \text{if } s = s_i \text{ and } n > i > 0, \\
	(s^1_1, \rej)     & \text{if } s = s_n, \\
	(s^1_1, \rej)     & \text{if } s \in \{(s^{n_1}_1, \rej), (s^{n_2}_2, \rej)\}\text{ and } a = \rej, \\
	(s^1_2, \rej)     & \text{if } s \in \{(s^{n_1}_1, \rej), (s^{n_2}_2, \rej)\}\text{ and } a = \acc, \\
	(s^1_2, \rej)     & \text{if } s \in \{(s^{n_1}_1, \acc), (s^{n_2}_2, \acc)\} \\
	(s^{i+1}_1, \rej) & \text{if } s = (s^i_1, \rej) \text{ and } n_1 > i > 0 \text{ and } a = \rej, \\
	(s^{i+1}_1, \acc) & \text{if } s = (s^i_1, \rej) \text{ and } n_1 > i > 0 \text{ and } a = \acc, \\
	(s^{i+1}_1, \acc) & \text{if } s = (s^i_1, \acc) \text{ and } n_1 > i > 0, \\
	(s^{i+1}_2, \rej) & \text{if } s = (s^i_2, \rej) \text{ and } n_2 > i > 0 \text{ and } a = \rej, \\
	(s^{i+1}_2, \acc) & \text{if } s = (s^i_2, \rej) \text{ and } n_2 > i > 0 \text{ and } a = \acc, \\
	(s^{i+1}_2, \acc) & \text{if } s = (s^i_2, \acc) \text{ and } n_2 > i > 0.
	\end{cases}$
	\item $\tau(s_i) = x^i$ and $\tau((s^i_j,a)) = x^i_j$. \hfill \qed
\end{itemize}
\else
The formal definition of $\R$ can be found in \cite{kupferman2021certifying}. \hfill \qed
\fi
\end{proof}

By \Cref{lem:not-dbw-cert}, every language not in DBW has a certificate $\zug{x,x_1,x_2}$.
As we argue below, these certificates are linear in the number of states of the refuters.

\begin{lemma}
\label{cert given refuter}
Let $\R$ be a DBW-refuter for $L \subseteq \Sigma^\omega$ with $n$ states. Then, $L$ has a certificate of the form $\zug{x,x_1,x_2}$ such that $|x|+|x_1|+|x_2| \leq 2\cdot n$.
\end{lemma}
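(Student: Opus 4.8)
The plan is to sharpen the choices made in the proof of \Cref{lem: lan struct}: instead of merely asserting that $\R$ contains a path $p$ into some state $s$, a $\rej^+$-cycle at $s$, and an $\acc$-cycle at $s$, I would exhibit such a triple whose total number of transitions is at most $2n$. Feeding this triple into the first direction of the proof of \Cref{lem:not-dbw-cert} — whose argument uses only that $p$ is a path to $s$, that $p_1$ is a $\rej^+$-cycle at $s$, and that $p_2$ is an $\acc$-cycle at $s$ — then produces a certificate $\zug{x,x_1,x_2}$ in which $x$, $x_1$, $x_2$ are precisely the outputs of $\R$ along $p$, $p_1$, $p_2$. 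Hence $|x|+|x_1|+|x_2|$ equals the combined number of transitions on these three objects, and the bound $\le 2n$ is exactly what the refined construction delivers.

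Concretely, I would fix a reachable \emph{ergodic} SCC $C$ of the graph of $\R$, say with $c := |C| \le n$ states, and let $\pi_0$ be a shortest path from $s_0$ into $C$, landing at the first $C$-state $t$ it meets. Being shortest, $\pi_0$ is simple and visits no state of $C$ except $t$, so it has at most $n-c$ transitions. Using that $\R$ is responsive and $C$ is closed under transitions, I would then follow $\rej$-transitions from $t$ inside $C$, obtaining $t = u_0, u_1, u_2, \dots$ with a first repetition $u_k = u_l$ ($l < k$); then $u_0, \dots, u_{k-1}$ are pairwise distinct, so $k \le c$. I set $s := u_l$, take the path $p$ to be $\pi_0$ extended by $u_0 \to u_1 \to \dots \to u_l$, and take the $\rej^+$-cycle $p_1$ to be $u_l \to u_{l+1} \to \dots \to u_k = u_l$. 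The path contributes $|\pi_0| + l$ transitions and $p_1$ contributes $k-l$, so together they contribute $|\pi_0| + k \le (n-c) + c = n$ transitions, i.e.\ $|x| + |x_1| \le n$ (with $|x_1| \ge 1$ since $l<k$). For the $\acc$-cycle, responsiveness gives an $\acc$-transition $u_l \to s'$, and $s' \in C$ because $C$ is closed; since $C$ is strongly connected there is a simple path from $s'$ back to $u_l$ inside $C$ using at most $c-1$ transitions (or $0$ if $s' = u_l$). The resulting $\acc$-cycle $p_2$ has at most $c \le n$ transitions, so $|x_2| \le n$ (with $|x_2| \ge 1$). Adding up, $|x| + |x_1| + |x_2| \le 2n$, and by \Cref{lem:not-dbw-cert} the triple $\zug{x,x_1,x_2}$ is a certificate to the non-DBW-recognizability of $L$.

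The step I expect to be the crux is the joint length accounting for the path and the $\rej^+$-cycle: a naive split would bound the path by $n$ and each cycle by $n$, giving only $3n$. The point is to route the path into $C$ cheaply — visiting $C$ only at its endpoint, hence paying at most $n-c$ for the part outside $C$ — and then to let a single $\rej$-walk inside $C$ simultaneously finish the path to $s$ \emph{and} close up into the $\rej^+$-cycle, so that the states of $C$ it touches are charged only once, contributing at most $c$. The remaining ingredients — existence of a reachable ergodic SCC, responsiveness supplying the needed $\rej$- and $\acc$-transitions inside $C$, and the fact that the construction in the proof of \Cref{lem:not-dbw-cert} applies verbatim to our triple rather than only to the one produced by \Cref{lem: lan struct} — are routine.
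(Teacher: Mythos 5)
Your proof is correct, and it reaches the lemma by the same overall strategy as the paper---refine the flower structure of \Cref{lem: lan struct} so that the three paths are short, then feed the triple into the first direction of \Cref{lem:not-dbw-cert}---but the length accounting is genuinely different. The paper's proof argues that $p$, $p_1$, and $p_2$ are simple (hence each of length at most $n$) and that they can be arranged so that every edge lies on at most two of the three paths, from which the $2n$ bound is concluded; the bookkeeping behind that edge-sharing claim is left rather implicit. You instead charge transitions to states: the shortest path into a reachable ergodic SCC $C$ costs at most $n-|C|$ because it meets $C$ only at its endpoint, a single $\rej$-walk inside $C$ simultaneously completes $p$ and closes up into the $\rej^+$-cycle $p_1$ at a joint cost of at most $|C|$, and the $\acc$-cycle costs at most $|C|$. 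This is self-contained, avoids any claim about how the three paths overlap, and in fact yields the slightly sharper bound $|x|+|x_1|+|x_2| \leq n + |C|$. The one point you flag as routine is indeed the only thing left to say: the first direction of \Cref{lem:not-dbw-cert} uses nothing about the triple beyond $p$ being a path from $s_0$ to $s$, $p_1$ a $\rej^+$-cycle at $s$, and $p_2$ an $\acc$-cycle at $s$, so your $\zug{x,x_1,x_2}$ is an admissible certificate.
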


\begin{proof}
The paths $p$, $p_1$, and $p_2$ that induce $x$, $x_1$ and $x_2$ in the proof of \Cref{lem:not-dbw-cert} are simple, and so they are all of length at most $n$. Also, while these paths may share edges, we can define them so that each edge appears in at most two paths. Indeed, if an edge appears in all three path, we can shorten $p$. Hence,
$|x|+|x_1|+|x_2| \leq 2\cdot n$, and we are done.
\hfill \qed \end{proof}

\begin{theorem}
Consider a language $L \subseteq \Sigma^\omega$ not in DBW. The length of a certificate for the non-DBW-recognizability of $L$ is linear in a DPW for $L$ and is exponential in an NBW for $L$. These bounds are tight.
\end{theorem}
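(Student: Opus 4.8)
I structure the argument around the four constituent claims: the two upper bounds (linear in a DPW, exponential in an NBW) and their two matching lower bounds.

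\medskip
\noindent\textbf{Upper bounds.} Both are obtained by chaining the size estimates already established. If $L$ is given by a DPW with $n$ states, then since $L\notin$~DBW, \Cref{dbw by dpw yes or refute} yields a DBW-refuter for $L$ with $2n$ states, and \Cref{cert given refuter} turns it into a certificate of length at most $2\cdot 2n=4n$, which is linear in $n$. If instead $L$ is given by an NBW with $n$ states, \Cref{dbw yes or refute} yields a DBW-refuter with at most $2\cdot(1.65n)^n$ states, and \Cref{cert given refuter} gives a certificate of length at most $4\cdot(1.65n)^n$, exponential in $n$. These two steps are routine once the cited results are in place.

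\medskip
\noindent\textbf{Tightness of the linear bound.} Over $\Sigma=\{a,b\}$, let $L_k$ be the set of words containing only finitely many occurrences of the factor $a^k$. Track the length of the current maximal $a$-run, capped at $k$, by an automaton with states $0,\dots,k$; its run visits the state $k$ infinitely often iff $a^k$ occurs infinitely often, so $L_k$ is recognised by a DCW (hence a DPW) with $k+1$ states in which the maximal accepting SCS is $\{0,\dots,k-1\}$, properly contained in the rejecting SCS $\{0,\dots,k\}$; by \Cref{remark sccs}, $L_k\notin$~DBW. I then lower-bound certificate length. Let $\zug{x,x_1,x_2}$ be a certificate and put $p=x_1x_2$. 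Since $x\cdot(x_1x_2)^\omega\in x\cdot(x_1^*x_2)^\omega$, the word $x\cdot p^\omega$ avoids $L_k$, hence contains infinitely many occurrences of $a^k$, so $a^k$ is a factor of $p^\omega$. If $p$ contains a letter $b$, the longest $a$-run of $p^\omega$ has length at most $|p|-1$, forcing $|p|\ge k+1$; otherwise $p\in a^*$, so $x_1\in a^+$ and $x_1^\omega=a^\omega$, whence $x\cdot x_1^\omega$ has a suffix $a^\omega$ and lies outside $L_k$, contradicting $x\cdot(x_1+x_2)^*\cdot x_1^\omega\subseteq L_k$. Either way $|x|+|x_1|+|x_2|\ge|p|\ge k+1$, and since $\zug{\epsilon,b,a^k}$ is a certificate of length $k+1$, the dependence on the number of states of the DPW is exactly linear.

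\medskip
\noindent\textbf{Tightness of the exponential bound (the main obstacle).} Here I must produce a family $\{L_n\}$ with an $O(n)$-state NBW for which every certificate has length $2^{\Omega(n)}$; combined with the exponential upper bound this shows the dependence on the NBW cannot be improved to polynomial. Note that the family above does not help, as there the NBW and the shortest certificate are both of size $\Theta(k)$, so a genuinely different construction is needed. The plan is to keep $L_n$ existentially simple -- so that it has a small NBW -- while forcing the recurrent ``accepting'' behaviour out of which a recurrent ``rejecting'' behaviour is reachable (which is exactly the structure a certificate exposes, cf.\ the proof of \Cref{lem:not-dbw-cert}) to be \emph{rigid}, i.e.\ to involve an ultimately periodic word of exponential period. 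Concretely I would take $L_n=R_n^\omega$ for a finite-word language $R_n$ admitting an $O(n)$-state NFA (this immediately yields an $O(n)$-state NBW for $L_n$), chosen so that (i) $R_n^\omega\notin$~DBW, using the mechanism of \cite{LK20}, and (ii) any ultimately periodic word $w$ with $w\in R_n^\omega$ that tolerates finitely many ``bad'' insertions -- as $x_1^\omega$ in a certificate must -- has period $2^{\Omega(n)}$, so that the component $x_1$ of any certificate is already exponentially long; an $n$-bit modular counter is the natural device for enforcing (ii). The nontrivial part -- and the step I expect to dominate the proof -- is designing $R_n$ so that all three requirements hold \emph{simultaneously}: $R_n^\omega$ not in DBW, $R_n$ still with a polynomial NFA, and no short triple $\zug{x,x_1,x_2}$ satisfying the two inclusions of \Cref{lem:not-dbw-cert}. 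The heart of the last point is a rigidity lemma showing that every $x_1$ with $x_1^\omega\in R_n^\omega$ (robustly under insertions) must spell out a complete counter cycle, which pins down $|x_1|$ at $2^{\Omega(n)}$.
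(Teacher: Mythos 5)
Your decomposition into four claims is the right one, and two and a half of them are in order: the upper bounds are the same routine chaining of \Cref{dbw by dpw yes or refute}, \Cref{dbw yes or refute} and \Cref{cert given refuter} that the paper uses, and your linear-tightness family $L_k$ (finitely many occurrences of the factor $a^k$) is a correct and complete lower bound for the DPW case --- indeed more explicit than anything the paper offers for that half of ``tight''. The genuine gap is the exponential lower bound, which you yourself flag as the main obstacle: you never construct $R_n$, and the ``rigidity lemma'' asserting that every admissible $x_1$ must spell out a full cycle of an $n$-bit counter is precisely the hard content, not a verification to be deferred. The $R^\omega$ route is moreover in tension with itself: the mechanism of \cite{LK20} that places $R_n^\omega$ outside DBW tends to come with \emph{short} certificates (cf.\ \Cref{ex certificate}, where the certificate for $(\$+(0\cdot \{0,1,\$\}^*\cdot 1))^\omega$ has length $4$), so you would have to exclude \emph{every} short triple $\zug{x,x_1,x_2}$ --- including ones whose long component is $x$ or $x_2$, or which ignore your counter gadget altogether --- while keeping a polynomial NFA for $R_n$. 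Nothing in your sketch does this, so one of the four claims of the theorem remains unproven.

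The paper closes this gap by a different and much lighter reduction: it imports the classical exponential lower bound on shortest non-universality witnesses for NFWs \cite{MS72}. Take $L_n \subseteq \{0,1\}^*$ with a polynomial-size NFW whose shortest non-member is exponential, and set $L'_n = (L_n \cdot \$ \cdot (0^* \cdot 1)^\omega) + ((0+1)^* \cdot \$ \cdot (0+1)^* \cdot 0^\omega)$. This has a small NBW and is not in DBW, yet every word of $L'_n$ contains exactly one $\$$, which forces the single $\$$ of a certificate into $x$ (so $x = u\$v$ with $u,v,x_1,x_2$ over $\{0,1\}$); and since $w \cdot \$ \cdot (0+1)^\omega \subseteq L'_n$ for every $w \in L_n$, the condition $x \cdot (x_1^* \cdot x_2)^\omega \cap L'_n = \emptyset$ of \Cref{lem:not-dbw-cert} forces $u \notin L_n$, whence $|x|$ is exponential in $n$. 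The exponential length is thus pushed into the prefix $x$ rather than into the period $x_1$, which sidesteps any rigidity analysis of periodic words. I recommend replacing your $R_n^\omega$ plan with this reduction.
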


\begin{proof}
The upper bounds follow from \Cref{dbw by dpw yes or refute} and \Cref{cert given refuter}, and the exponential determinization of NBWs. The lower bound in the NBW case follows from the exponential lower bound on the size of shortest non-universality witnesses for non-deterministic finite word automata (NFW) \cite{MS72}. We sketch the reduction: Let $L_n \subseteq \{0,1\}^*$ be a language such that the shortest witness for non-universality of $L_n$ is exponential in $n$, but $L_n$ has a polynomial sized NFW. We then define $L'_n = (L_n \cdot \$ \cdot (0^* \cdot 1)^\omega) + ((0+1)^* \cdot \$ \cdot (0+1)^* \cdot 0^\omega)$. It is clear that $L'_n$ has a NBW polynomial in $n$ and is not DBW-recognizable. Note that for every word $w \in L_n$, we have $w \cdot \$ \cdot (0+1)^\omega \subseteq L_n'$. Thus, in order to satisfy \Cref{lem:not-dbw-cert}, every certificate $\zug{x, x_1, x_2}$ needs to have $w \cdot \$ $ as prefix of $x$, for some $w \notin L_n$. Hence, it is exponential in the size of the NBW.
\hfill \qed
\end{proof}

\begin{remark}{\bf [LTL]}\
When the language $L$ is given by an LTL formula $\varphi$, then $\DBWreal(\varphi) = \varphi \leftrightarrow \textbf{GF} \acc$ and thus an off-the-shelf LTL synthesis tool can be used to extract a DBW-refuter, if one exists. As for complexity, a doubly-exponential upper bound on the size of a DPW for $\NoDBWreal(L)$, and then also on the size of DBW-refuters and certificates, follows from the double-exponential translation of LTL formulas to DPWs \cite{VW94,Saf88}. The length of certificates, however, and then, by \Cref{ref from cert}, also the size of a minimal refuter, is related to the {\em diameter\/} of the DPW for $\NoDBWreal(L)$, and we leave its tight bound open.\hfill $\blacksquare$
\end{remark}

\section{Separability and Approximations}
\label{section:separability}

Consider three languages $L_1, L_2, L \subseteq \Sigma^\omega$. We say that $L$ is a {\em separator\/} for $\zug{L_1,L_2}$ if $L_1 \subseteq L$ and $L_2 \cap L = \emptyset$.
We say that a pair of languages $\zug{L_1,L_2}$ is {\em DBW-separable\/} iff there exists a language  $L$ in DBW such that  $L$ is a separator for $\zug{L_1,L_2}$.

\begin{example}
Let $\Sigma = \{a,b\}$, $L_1 = (a + b)^* \cdot b^\omega$, and $L_2 = (a + b)^* \cdot a^\omega$. By \cite{Lan69}, $L_1$ and $L_2$ are not in DBW. They are, however, DBW-separable. A witness for this is $L = (a^* \cdot b)^\omega$. Indeed, $L_1 \subseteq L$, $L \cap L_2 = \emptyset$, and $L$ is DBW-recognizable.
\hfill $\blacksquare$ \end{example}

Consider a language $L \subseteq \Sigma^\omega$, and suppose we know that $L$ is not in DBW. A user may be willing to approximate $L$ in order to obtain DBW-recognizability. Specifically, we assume that there is a language $I \subseteq \Sigma^\omega$ of words that the user is {\em indifferent\/} about. Formally, the user is satisfied with a language in DBW that agrees with $L$ on all words that are not in $I$. Formally, we say that a language $L'$ {\em approximates $L$ with radius $I$} if $L \setminus I \subseteq L' \subseteq L \cup I$. It is easy to see that, equivalently, $L'$ is a separator for $\zug{L \setminus I, \comp(L \cup I)}$. Note that the above formulation embodies the case where the user has in mind different over- and under-approximation radiuses, thus separating $\zug{L \setminus I_{\downarrow}, \comp(L \cup I_{\uparrow})}$ for possibly different $I_{\downarrow}$ and $I_{\uparrow}$. Indeed, by defining $I=(I_\downarrow \cap L) \cup (I_\uparrow \setminus L)$, we get $\zug{L \setminus I, \comp(L \cup I)}=\zug{L \setminus I_{\downarrow}, \comp(L) \setminus I_{\uparrow})}$.

It follows that by studying DBW-separability, we also study DBW-approx\-imation, namely approximation by a language that is in DBW, possibly with different over- and under-approximation radiuses.

\ifarxiv
\begin{figure}[bt]
  \begin{center}
  	\includegraphics{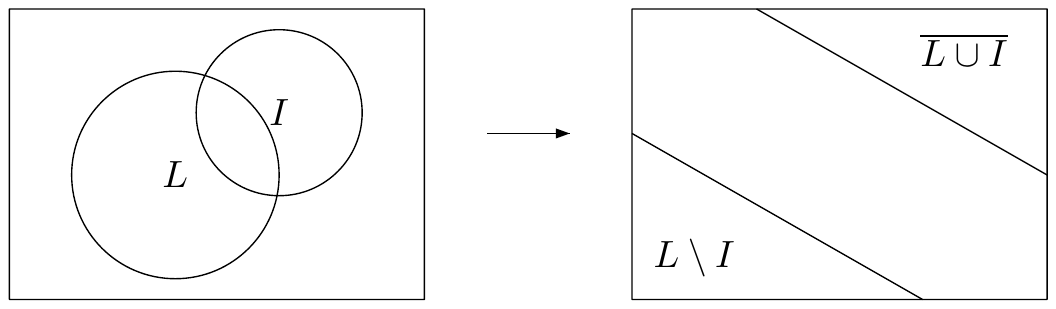}
  \end{center}
  \caption{Reduction of approximation to separability.}
  \label{sep fig}
\end{figure}
\fi

\begin{remark}{\bf [From recognizability to separation]}\
\label{core lang}
It is easy to see that DBW-sep\-arability generalizes DBW-recognizability, as $L$ is in DBW iff $\zug{L, \comp (L)}$ is DBW-separable.
Given $L \subseteq \Sigma^\omega$, we say that a pair of languages $\zug{L_1,L_2}$ is a {\em no-DBW-witness\/} for $L$ if $L$ is a separator for $\zug{L_1,L_2}$ and  $\zug{L_1,L_2}$ is not DBW-separable. Note that the latter indeed implies that $L$ is not in DBW.

A simple no-DBW witness for $L$ can be obtained as follows. Let $\R$ be a DBW refuter for $L$. Then, we define $L_1 = \{\R(y) : y \in  \neg \infty \acc\}$ and $L_2 = \{\R(y) : y \in \infty \acc\}$.  By the definition of DBW-refuters, we have $L_1 \subseteq L$ and $L_2 \cap L =\emptyset$, and so $\zug{L_1,L_2}$ is a no-DBW witness for $L$. It is simple, in the sense that when we describe $L_1$ and $L_2$ by a tree obtained by pruning the $\Sigma^*$-tree, then each node has at most two children -- these that correspond to the responses of $\R$ to $\acc$ and $\rej$. \hfill $\blacksquare$
\end{remark}

\subsection{Refuting Separability}
For a pair of languages $\zug{L_1,L_2}$, we define the language $\DBWsep(L) \subseteq (\Sigma \times A)^\omega$ of words with correct annotations for separation. Thus,
\[\DBWsep(L_1,L_2) =\{x \oplus y : (x \in L_1 \rightarrow y \in \infty{\acc}) \wedge (x \in L_2 \rightarrow y \not \in \infty{\acc}) \}.\]
Note that $\comp (\DBWsep(L_1,L_2))$ is then the language
\[\NoDBWsep(L_1,L_2)=\{x \oplus y : (x \in L_1 \wedge y \not \in \infty{\acc}) \vee (x \in L_2 \wedge y \in \infty{\acc}) \}.\]

A {\em DBW-sep-refuter for $\zug{L_1,L_2}$\/} is an $(A/\Sigma)$-transducer with $\init={\it env}$ that realizes $\NoDBWsep(L_1,L_2)$.

\begin{example}
\label{example separator refuter}
Consider the language $L_{\neg \infty a}=(a+b)^* \cdot b^\omega$, which is not DBW. Let $I=a^* \cdot b^\omega + b^* \cdot a^\omega$, thus we are indifferent about words with only one alternation between $a$ and $b$. In \Cref{sep ref} we describe a DBW-sep refuter for $\zug{L_{\neg \infty a} \setminus I, \comp(L_{\neg \infty a} \cup I)}$. Note that the refuter generates only words in $a \cdot b \cdot a \cdot(a+b)^\omega$, whose intersection with $I$ is empty. Consequently, the refutation is similar to the DBW-refutation of $L_{\neg \infty a}$.
\hfill $\blacksquare$ \end{example}
\begin{figure}[bt]
  \begin{center}
  	\includegraphics[scale=0.8]{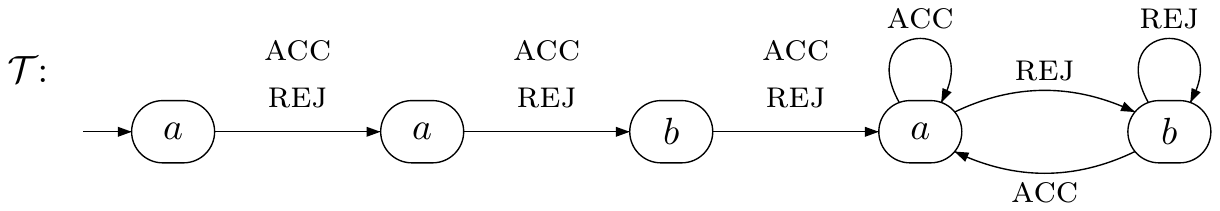}
  \end{center}
  \vspace{-3mm}
  \caption{A DBW-sep refuter for $\zug{L_{\neg \infty a} \setminus I, \comp(L_{\neg \infty a} \cup I)}$.}
  \label{sep ref}
  \vspace{-3mm}
\end{figure}
By \Cref{determinacy}, we have the following extension of \Cref{dbw:recognisability}.
\begin{proposition}\label{dbw:sep recognisability}
Consider two languages $L_1,L_2 \subseteq \Sigma^\omega$. Let $A=\{\acc,\rej\}$. Exactly one of the following holds:
\begin{itemize}
\item
$\zug{L_1,L_2}$ is DBW-separable, in which case the language $\DBWsep(L_1,L_2)$ is $(\Sigma/A)$-realizable by the system, and a finite-memory winning strategy for the system induces a DBW for a language $L$ that separates $L_1$ and $L_2$.
\item
$\zug{L_1,L_2}$ is not DBW-separable, in which case the language $\NoDBWsep(L)$ is $(A/\Sigma)$-realizable by the environment, and a finite-memory winning strategy for the environment induces a DBW-sep-refuter for $\zug{L_1,L_2}$.
\end{itemize}
\end{proposition}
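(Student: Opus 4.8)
The plan is to obtain this proposition from \Cref{determinacy} in exactly the way \Cref{dbw:recognisability} was obtained, the only extra work being to match the realizability of $\DBWsep(L_1,L_2)$ (resp.\ of its complement $\NoDBWsep(L_1,L_2)$) with DBW-separability (resp.\ with the existence of a DBW-sep-refuter). First I would note that, for $\omega$-regular $L_1$ and $L_2$, the language $\DBWsep(L_1,L_2)$ is $\omega$-regular: it is the intersection of $\comp(\{x \oplus y : x \in L_1\}) \cup \{x \oplus y : y \in \infty \acc\}$ with $\comp(\{x \oplus y : x \in L_2\}) \cup \{x \oplus y : y \in \neg \infty \acc\}$, i.e.\ a Boolean combination of $\omega$-regular languages over $\Sigma \times A$, and its complement is $\NoDBWsep(L_1,L_2)$. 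Applying \Cref{determinacy} to $\DBWsep(L_1,L_2)$ then yields that exactly one of ``$\DBWsep(L_1,L_2)$ is $(\Sigma/A)$-realizable by the system'' and ``$\NoDBWsep(L_1,L_2)$ is $(A/\Sigma)$-realizable by the environment'' holds, with a finite-memory witnessing transducer in either case.

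It remains to translate these two alternatives. For the first one I would argue both inclusions. Given a $(\Sigma/A)$-transducer $\T$ with $\init = {\it sys}$ realizing $\DBWsep(L_1,L_2)$, view $\T$ as a DBW $\A_\T$ by taking as accepting states exactly those labelled $\acc$ by $\tau$; since $\init = {\it sys}$, the run of $\A_\T$ on $x$ and the run underlying $\T(x)$ coincide, so $\T(x) \in \infty \acc$ iff $x \in L(\A_\T)$. Unfolding the requirement $x \oplus \T(x) \in \DBWsep(L_1,L_2)$ for all $x$ then reads $L_1 \subseteq L(\A_\T)$ and $L_2 \cap L(\A_\T) = \emptyset$, so $L(\A_\T)$ is a DBW separator for $\zug{L_1,L_2}$; in particular a finite-memory system strategy induces such a DBW. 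Conversely, if a DBW $\A$ recognizes a separator $L$ of $\zug{L_1,L_2}$, the transducer $\T_\A$ of \Cref{subsec:automata} (with $\init = {\it sys}$) satisfies $\T_\A(x) \in \infty \acc$ iff $x \in L$, and $L_1 \subseteq L$ together with $L_2 \cap L = \emptyset$ force $x \oplus \T_\A(x) \in \DBWsep(L_1,L_2)$ for every $x$; hence DBW-separability implies $(\Sigma/A)$-realizability by the system. The second alternative needs no translation: a finite-memory $(A/\Sigma)$-transducer with $\init = {\it env}$ realizing $\NoDBWsep(L_1,L_2)$ is, verbatim, a DBW-sep-refuter, and by the equivalence just established its existence is the negation of DBW-separability. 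Combining the two equivalences with the dichotomy from \Cref{determinacy} gives the claimed ``exactly one of''.

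I do not expect a genuine obstacle; the argument is a routine adaptation of the proof of \Cref{dbw:recognisability}. The single point requiring care is that $\DBWsep$ now encodes two \emph{one-sided} implications ($x \in L_1 \to y \in \infty \acc$ and $x \in L_2 \to y \notin \infty \acc$) rather than the biconditional used for recognizability, so one must check that these two implications are precisely the two halves of ``$L(\A_\T)$ separates $\zug{L_1,L_2}$''; and, as in \Cref{dbw:recognisability}, one must keep the $\init = {\it sys}$ / $\init = {\it env}$ conventions straight so that the transducer-to-automaton passage and the complementation $\NoDBWsep(L_1,L_2) = \comp(\DBWsep(L_1,L_2))$ line up with the two items of \Cref{determinacy}.
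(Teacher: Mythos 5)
Your proposal is correct and follows the same route as the paper, which states this proposition as an immediate consequence of \Cref{determinacy} in direct analogy with \Cref{dbw:recognisability}; your extra care about the two one-sided implications encoding the two halves of separation is exactly the right (and only) point to check. Nothing to add.
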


As for complexity, the construction of the game for $\DBWsep(L_1,L_2)$ is similar to the one described in \Cref{dbw by dpw yes or refute}. Here, however, the input to the problem includes two DPWs. Also, the positive case, namely the construction of the separator does not follow from known results.

\begin{theorem}
\label{dbw-sep by dpw yes or refute}
Consider DPWs $\A_1$ and $\A_2$ with $n_1$ and $n_2$ states, respectively. Let $L_1=L(\A_1)$ and $L_2=L(\A_2)$. One of the following holds.
\begin{enumerate}
\item
There is a DBW $\A$ with $2 \cdot n_1 \cdot n_2$ states such that $L(\A)$ DBW-separates $\zug{L_1,L_2}$.
\item
There is a DBW-sep-refuter for $\zug{L_1,L_2}$ with $2 \cdot n_1 \cdot n_2$ states.
\end{enumerate}
\end{theorem}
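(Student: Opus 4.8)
The strategy is to mirror the proof of \Cref{dbw by dpw yes or refute}, but working with the product of the two input automata rather than a single one. The key observation is that both the separator-construction case and the refuter case can be extracted from a single deterministic Rabin automaton $\U$ over $\Sigma \times \{\acc,\rej\}$ of size $2 \cdot n_1 \cdot n_2$, namely a DRW for $\NoDBWsep(L_1,L_2)$ (equivalently, a DSW / Streett-like DRW for its complement $\DBWsep(L_1,L_2)$). Concretely, I would first take the product $\A_{12}$ of $\A_1$ and $\A_2$, giving a deterministic automaton with $n_1 \cdot n_2$ states that tracks the runs of both $\A_1$ and $\A_2$ simultaneously, and whose state carries enough information to evaluate both ``$x \in L_1$'' and ``$x \in L_2$''. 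Extending the alphabet to $\Sigma \times \{\acc,\rej\}$ (the transition ignores the annotation letter in the $\Sigma$-component) and taking the product with the standard $2$-state deterministic automaton for $\infty\acc$ gives the claimed $2 \cdot n_1 \cdot n_2$ states.

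Next I would put the acceptance condition on $\U$. Following the recipe in \Cref{dbw by dpw yes or refute}, write the parity conditions of $\A_1$ and $\A_2$ as Rabin conditions $\alpha_1,\alpha_2$, and also write a Rabin condition for the complement of $\A_2$ (i.e.\ for $\comp(L_2)$), call it $\bar\alpha_2$. Then $\NoDBWsep(L_1,L_2)$ decomposes as the union of $\{x\oplus y : x\in L_1 \wedge y\notin\infty\acc\}$ and $\{x\oplus y : x\in L_2 \wedge y\in\infty\acc\}$: the first is recognized by a Rabin condition obtained by taking pairs from $\alpha_1$ (lifted to the product) and conjoining ``only finitely many $\acc$'' via intersecting the good set with the $\rej$-copy and adding the $\acc$-copy to the bad set, exactly as for $\U_1$ before; the second similarly from $\alpha_2$ lifted to the product, conjoined with ``infinitely many $\acc$'' by restricting good sets to the $\acc$-copy. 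Since Rabin conditions are closed under union (just take the union of the pair-sets) and $\U_1,\U_2$ share the same state space here, $\U$ has $2 \cdot n_1 \cdot n_2$ states and recognizes $\NoDBWsep(L_1,L_2)$. For the realizability game, the environment (Refuter) plays $\acc/\rej$ and wins iff the play lands in $L(\U)$; this is a Rabin game, so by determinacy (\Cref{dbw:sep recognisability}) exactly one side wins with a finite-memory strategy.

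Now split on who wins. If the system (Prover) wins, its winning region is realized by a transducer of the same size as $\U$ (a realizable game with winning condition of Rabin-type for the \emph{environment}, i.e.\ Streett/co-Rabin for the system, still admits a winning strategy of memory bounded by the game graph; concretely one reads off the separator directly from the structure, annotating each reachable state of $\U_{12}$ by whether the winning strategy outputs $\acc$ or $\rej$), yielding a DBW $\A$ with $\leq 2 \cdot n_1 \cdot n_2$ states whose language separates $\zug{L_1,L_2}$ — this is where the statement genuinely goes beyond \Cref{dbw by dpw yes or refute}, since for pure recognizability one falls back on B\"uchi-typeness of $\A$, unavailable here. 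If instead the environment wins, its finite-memory winning strategy \emph{is} a DBW-sep-refuter for $\zug{L_1,L_2}$, again of size $\leq 2 \cdot n_1 \cdot n_2$ by the bound on $\U$.

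The main obstacle I anticipate is the positive case: arguing that the separator DBW really can be taken of size $2 \cdot n_1 \cdot n_2$ rather than incurring the usual Rabin-to-deterministic-B\"uchi blowup. The point to push on is that we are not asked to determinize an arbitrary Rabin condition — we are extracting a \emph{winning strategy} for the system in the realizability game for $\DBWsep(L_1,L_2)$, and that strategy, composed with the obvious B\"uchi reading of ``$\infty\acc$'', directly \emph{is} a DBW over the game arena (each state outputs a fixed $\acc/\rej$ label, and a play is accepting for the built automaton iff it produces infinitely many $\acc$). So one must verify: (i) the system's winning strategy can be taken positional on $\U_{12}$ (this needs that $\DBWsep$ restricted to consistent plays is a B\"uchi condition on the system's side once the $\acc/\rej$ choice is fixed — indeed it is, since given a fixed annotation function the winning condition ``$x\in L_1 \Rightarrow \infty\acc$ and $x\in L_2 \Rightarrow \neg\infty\acc$'' on runs of $\A_{12}$ is a Rabin condition, and a \emph{uniformly} winning memoryless annotation exists by the usual Rabin-game positional determinacy applied on the arena $\U_{12}$ with the roles set up so the system picks the annotation); and (ii) the resulting transducer, viewed as a DBW with accepting states = the $\acc$-labeled states, has language exactly a separator. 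Step (ii) is routine once (i) is in place; the care is all in setting up the arena so that positional determinacy of Rabin games delivers (i) with the stated state count.
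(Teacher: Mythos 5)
Your overall route is the paper's: build a $2\cdot n_1\cdot n_2$-state arena from the product of $\A_1$, $\A_2$, and the two-state tracker of the last annotation letter, express the relevant winning condition as a deterministic Rabin condition on that arena, and invoke memoryless determinacy of Rabin games \cite{EJ88} to read off either the separating DBW or the sep-refuter as a transducer of the same size. Your treatment of the refuter case is exactly the paper's: $\NoDBWsep(L_1,L_2)$ is already a union of two sets, each of the form (membership in $L_i$) $\wedge$ (B\"uchi/co-B\"uchi constraint on $y$), so the union-of-Rabin-pairs construction from \Cref{dbw by dpw yes or refute} applies verbatim. You also correctly identify that the positive case is where the theorem goes beyond \Cref{dbw by dpw yes or refute}, since B\"uchi-typeness is not available for separators.

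The gap is in that positive case, and it sits exactly where the paper locates its only technical challenge. You correctly see that you need the \emph{system's} objective $\DBWsep(L_1,L_2)$ to be a Rabin condition on the arena (so that positional determinacy applies to the system and the Streett-player memory blow-up is avoided), but you only assert this: you say that \enquote{given a fixed annotation function the winning condition $x\in L_1\Rightarrow\infty\acc$ and $x\in L_2\Rightarrow\neg\infty\acc$ \dots is a Rabin condition.} As written this is a conjunction of two implications, i.e., an intersection of two Rabin-expressible languages, and Rabin conditions are not closed under intersection --- prima facie this is a Streett condition. Your earlier fallback, that a game whose condition is \enquote{Streett/co-Rabin for the system} still admits a winning strategy with memory bounded by the game graph, is false in general and would not yield the $2\cdot n_1\cdot n_2$ bound. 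The missing one-line observation is that, because $\infty\acc$ and $\neg\infty\acc$ partition $A^\omega$, the conjunction collapses to a disjunction: $\DBWsep(L_1,L_2)=\{x\oplus y: (y\in\infty\acc \text{ and } x\notin L_2)\text{ or }(y\notin\infty\acc\text{ and } x\notin L_1)\}$. Each disjunct is Rabin on the arena (using a Rabin condition for $\comp(L_i)$), and the union of Rabin conditions is again Rabin, so both $\DBWsep(L_1,L_2)$ and $\NoDBWsep(L_1,L_2)$ are DRW-recognizable with $2\cdot n_1\cdot n_2$ states. With that rewriting your argument closes and coincides with the paper's proof.
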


\begin{proof}
We show that $\DBWsep(L_1,L_2)$ and $\NoDBWsep(L_1,L_2)$ can be recognised by DRWs with at most $2 \cdot n_1 \cdot n_2$ states. Then, by \cite{EJ88}, we can construct a DBW or a DBW-sep-refuter with at most $2 \cdot n_1 \cdot n_2$ states. The construction is similar to the one described in the proof of \Cref{dbw by dpw yes or refute}. The only technical challenge is the fact $\DBWsep(L_1,L_2)$ is defined as the intersection, rather than union, of two languages. For this, we observe that we can define $\DBWsep(L_1,L_2)$ also as $\{x \oplus y : (y \in \infty \acc \text{ and } x \notin L_2) \text{ or } (y \notin \infty \acc \text{ and } x \notin L_1)\}$. With this formulation we then can reuse the union construction as seen in \Cref{dbw by dpw yes or refute} to obtain DRWs with at most $2 \cdot n_1 \cdot n_2$ states. \hfill \qed
\end{proof}

As has been the case with DBW-recognizability, one can generate certificates from a DBW-sep-refuter. The proof is similar to that of \Cref{lem:not-dbw-cert}, with membership in $L_1$ replacing membership in $L$ and membership in $L_2$ replacing being disjoint from $L$. Formally, we have the following.
\begin{theorem}\label{lem:not-dbw-sep-cert}
Two $\omega$-regular languages $L_1, L_2 \subseteq \Sigma^\omega$ are not DBW-separable iff there exist three finite words $x \in \Sigma^*$ and $x_1, x_2 \in \Sigma^+$, such that
\ifarxiv
\[x \cdot (x_1 + x_2)^* \cdot x_1^\omega \subseteq L_1 \quad \text{ and } \quad x \cdot (x_1^* \cdot x_2)^\omega \subseteq L_2.\]
\else
$x \cdot (x_1 + x_2)^* \cdot x_1^\omega \subseteq L_1$ and $x \cdot (x_1^* \cdot x_2)^\omega \subseteq L_2$.
\fi
\end{theorem}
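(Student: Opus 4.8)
The plan is to mirror the proof of \Cref{lem:not-dbw-cert}, since the structure of the argument is essentially the same once we reinterpret the roles of $L$ and $\comp(L)$ as the roles of $L_1$ and $L_2$.

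For the forward direction, assume $\zug{L_1,L_2}$ is not DBW-separable. By \Cref{dbw:sep recognisability} there exists a DBW-sep-refuter $\R$ for $\zug{L_1,L_2}$, i.e.\ an $(A/\Sigma)$-transducer with $\init={\it env}$ realizing $\NoDBWsep(L_1,L_2)$. First I would observe that \Cref{lem: lan struct} applies verbatim to $\R$ (its statement and proof only use that $\R$ is an $(A/\Sigma)$-transducer that is responsive), so there is a state $s$, a path $p$ from $s_0$ to $s$, a $\rej^+$-cycle $p_1$ at $s$, and an $\acc$-cycle $p_2$ at $s$. Let $x$, $x_1$, $x_2$ be the output words generated by $\R$ along $p$, $p_1$, $p_2$ respectively. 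Then, exactly as in \Cref{lem:not-dbw-cert}: for any $w \in x\cdot(x_1+x_2)^*\cdot x_1^\omega$ there is an input sequence $a$ (obtained by replacing each occurrence of $x$ by the input along $p$, each $x_1$ by the input along $p_1$, each $x_2$ by the input along $p_2$) with $\R(a)=w$; since $p_1$ is a $\rej^+$-cycle, $a\in(\acc+\rej)^*\cdot\rej^\omega$, so $a\notin\infty\acc$. Because $\R$ realizes $\NoDBWsep(L_1,L_2)$, the pair $\R(a)\oplus a$ lies in $\NoDBWsep(L_1,L_2)$; since $a\notin\infty\acc$, the only way the disjunct $(x\in L_2 \wedge y\in\infty\acc)$ can fail forces the disjunct $(x\in L_1 \wedge y\notin\infty\acc)$, hence $\R(a)=w\in L_1$. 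Symmetrically, for $w\in x\cdot(x_1^*\cdot x_2)^\omega$ pick $a\in y\cdot(y_1^*\cdot y_2)^\omega$ with $\R(a)=w$; since $p_2$ is an $\acc$-cycle, $a\in(\rej^*\acc)^\omega$, so $a\in\infty\acc$, and then membership of $\R(a)\oplus a$ in $\NoDBWsep(L_1,L_2)$ forces $\R(a)=w\in L_2$. This gives $x\cdot(x_1+x_2)^*\cdot x_1^\omega\subseteq L_1$ and $x\cdot(x_1^*\cdot x_2)^\omega\subseteq L_2$.

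For the converse, suppose such $x,x_1,x_2$ exist and assume for contradiction that some DBW $\A=\zug{\Sigma,Q,q_0,\delta,\alpha}$ separates $\zug{L_1,L_2}$, i.e.\ $L_1\subseteq L(\A)$ and $L(\A)\cap L_2=\emptyset$. Here I would reuse Landweber's pumping argument from the second half of \Cref{lem:not-dbw-cert} verbatim: since $x\cdot x_1^\omega\in L_1\subseteq L(\A)$, its run is accepting, so it visits $\alpha$ while reading some $x_1$-block; inserting an $x_2$ and continuing, $x\cdot x_1^{i_1}\cdot x_2\cdot x_1^\omega\in L_1\subseteq L(\A)$ is again accepting, giving another $\alpha$-visit; iterating and using finiteness of $Q$, we find $j<k$ and a state $q$ repeated at the corresponding positions, whence the word $w=x\cdot x_1^{i_1}\cdot x_2\cdots x_2\cdot x_1^{i_j}\cdot(x_2\cdot x_1^{i_{j+1}}\cdots x_2\cdot x_1^{i_k})^\omega$ is accepted by $\A$. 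But $w\in x\cdot(x_1^*\cdot x_2)^\omega\subseteq L_2$, contradicting $L(\A)\cap L_2=\emptyset$.

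The only place that differs from \Cref{lem:not-dbw-cert} is the bookkeeping about which disjunct of $\NoDBWsep$ is forced, and I expect that to be the main (though minor) obstacle: one must check that the implications in $\DBWsep$ really dualize correctly, i.e.\ that a refuter witnessing the complement language genuinely pins $w$ into $L_1$ when the annotation has finitely many $\acc$'s and into $L_2$ when it has infinitely many. This is exactly where the asymmetry between separation (an intersection of implications) and recognizability (an iff) shows up, and it is handled cleanly by the reformulation $\NoDBWsep(L_1,L_2)=\{x\oplus y: (x\in L_1\wedge y\notin\infty\acc)\vee(x\in L_2\wedge y\in\infty\acc)\}$ already recorded in the text, together with the fact that our chosen inputs $a$ have a \emph{definite} $\infty\acc$-status (either eventually all $\rej$, or infinitely many $\acc$), which selects the correct disjunct. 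Everything else — the existence of the $p,p_1,p_2$ structure and the pumping argument — carries over unchanged.
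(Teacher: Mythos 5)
Your proposal is correct and follows essentially the same route as the paper, which itself only sketches this proof by saying it is obtained from the proof of \Cref{lem:not-dbw-cert} with membership in $L_1$ replacing membership in $L$ and membership in $L_2$ replacing disjointness from $L$; you have simply carried out that substitution in full, including the (correct) observation that \Cref{lem: lan struct} applies to any responsive $(A/\Sigma)$-transducer and the disjunct bookkeeping for $\NoDBWsep$. No gaps.
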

We refer to a triple $\zug{x,x_1,x_2}$ of words that satisfy the conditions in \Cref{lem:not-dbw-sep-cert} as a {\em certificate\/} to the non-DBW-separability of $\zug{L_1, L_2}$.
Observe that the same way we generated a no-DBW witness in \Cref{core lang}, we can extract, given a DBW-sep-refuter $\R$ for $\zug{L_1, L_2}$, languages  $L'_1 \subseteq L_1$ and $L'_2 \subseteq L_2$ that tighten $\zug{L_1,L_2}$ and are still not DBW-separable.

\subsection{Certificate-Guided Approximation}
In this section we describe a method for finding small approximating languages $I_\downarrow$ and $I_\uparrow$ such that $\zug{L \setminus I_\downarrow, \comp(L) \setminus I_\uparrow}$ is DBW-separable. If this method terminates we obtain an approximation for $L$ that is DBW-recognizable. As in {\em counterexample guided abstraction-refinement\/} (CEGAR) for model checking \cite{CGJLV03}, we use certificates for non-DBW-separability in order to suggest interesting approximating languages. Intuitively, while in CEGAR the refined system excludes the counterexample, here the approximation of $L$ excludes the certificate.

Consider a certificate $\zug{x, x_1, x_2}$ for the non-DBW-separability of $\zug{L_1, L_2}$. We suggest the following five approximations:
\[\begin{array}{lcl}
C_0 = x \cdot (x_1 + x_2)^\omega                                  & \rightsquigarrow \ & \zug{L_1 \setminus C_0, L_2 \setminus C_0} \\
C_1 = x \cdot (x_1 + x_2)^* \cdot x_1^\omega = L_1 \cap C_0 \quad & \rightsquigarrow \ & \zug{L_1 \setminus C_1, L_2} \\
C_2 = x \cdot (x_2^* \cdot x_1)^\omega \supset C_1                & \rightsquigarrow \ & \zug{L_1, L_2 \setminus C_2} \\
C_3 = x \cdot (x_1^* \cdot x_2)^\omega = L_2 \cap C_0             & \rightsquigarrow \ & \zug{L_1, L_2 \setminus C_3} \\
C_4 = x \cdot (x_1 + x_2)^* \cdot x_2^\omega \subset C_3          & \rightsquigarrow \ & \zug{L_1, L_2 \setminus C_4}
\end{array}\]

First, it is easy to verify that $\zug{x, x_1, x_2}$ is indeed not a certificate for the non-DBW-separability of the obtained candidate pairs $\zug{L_1',L_2'}$.
If $\zug{L_1',L_2'}$ is DBW-separable, we are done (yet may try to tighten the approximation). Otherwise, we can repeat the process with a certificate for the non-DBW-separability of $\zug{L_1',L_2'}$. As in CEGAR, some suggestions may be more interesting than others, in some cases the process terminates, in some it does not, and the user takes part directing the search.

\begin{example}
Consider again the language $L = (a+b)^* \cdot b^\omega$ and the certificate $\zug{x, x_1, x_2} = \zug{\epsilon, b, a}$. Trying to approximate $L$ by a language in DBW, we start with the pair $\zug{L,\comp(L)}$. Our five suggestions are then as follows.
\[\begin{array}{lcl}
C_0 = \Sigma^\omega                                               & \rightsquigarrow \ & \zug{L \setminus C_0, \comp(L) \setminus C_0} = \zug{\emptyset, \emptyset} \\
C_1 = (b+a)^* \cdot b^\omega  \quad                               & \rightsquigarrow \ & \zug{L \setminus C_1, \comp(L)} = \zug{\emptyset, \comp(L)} \\
C_2 = (a^* \cdot b)^\omega                                        & \rightsquigarrow \ & \zug{L, \comp(L) \setminus C_2} = \zug{L, (a+b)^* \cdot a^\omega} \\
C_3 = (b^* \cdot a)^\omega 							              & \rightsquigarrow \ & \zug{L, \comp(L) \setminus C_3} = \zug{L, \emptyset} \\
C_4 = (b+a)^* \cdot a^\omega                                      & \rightsquigarrow \ & \zug{L, \comp(L) \setminus C_4} = \zug{L, (a+b)^* \cdot (a \cdot a^* \cdot b \cdot b^*)^\omega}
\end{array}\]

Candidates $C_0$, $C_1$, and $C_3$ induce trivial approximations. Then, $C_2$ suggests to over-approximate $L$ by setting $I_\uparrow$ to $(a^* \cdot b)^\omega$, which we view as a nice solution, approximating ``eventually always $b$" by ``infinitely often $b$". Then, the pair derived from $C_4$ is not DBW-separable. We can try to approximate it. Note, however, that repeated approximations in the spirit of $C_4$ are going to only extend the prefix of $x$ in the certificates, and the process does not terminate.
\ifarxiv

Let us now consider the slightly different certificate $\zug{x, x_1, x_2} =  \zug{a, b, a}$ and the derived candidates:
\[\begin{array}{lcl}
C_0 = a \cdot \Sigma^\omega                                               & \rightsquigarrow & \zug{L \setminus C_0, \comp(L) \setminus C_0} = \zug{b \cdot L, b \cdot \comp(L)} \\
C_1 = a \cdot (b+a)^* \cdot b^\omega  \quad                               & \rightsquigarrow & \zug{L \setminus C_1, \comp(L)} = \zug{b \cdot L, \comp(L)} \\
C_2 = a \cdot (a^* \cdot b)^\omega                                        & \rightsquigarrow & \zug{L, \comp(L) \setminus C_2} \\ & & = \zug{L, b \cdot \comp(L) + a \cdot (a+b)^* \cdot a^\omega} \\
C_3 = a \cdot (b^* \cdot a)^\omega 							              & \rightsquigarrow & \zug{L, \comp(L) \setminus C_3} = \zug{L, b \cdot \comp(L)} \\
C_4 = a \cdot (b+a)^* \cdot a^\omega                                      & \rightsquigarrow & \zug{L, \comp(L) \setminus C_4} \\ & & = \zug{L, b \cdot \comp(L) + a \cdot (a+b)^* \cdot (a \cdot a^* \cdot b \cdot b^*)^\omega}
\end{array}\]

One can easily verify that $\zug{x, x_1, x_2} = \zug{b \cdot a, b, a}$ is a certificate showing that none of the suggested pairs are DBW-separable. In fact $\zug{x, x_1, x_2} = \zug{b^i \cdot a, b, a}$, for $i = 0,1,2,\dots$, describes an infinite sequence such that no refinement obtained after a finite number of steps is DBW-separable.
\else
In the full version of this article \cite{kupferman2021certifying}, we describe the process for the certificate $\zug{x, x_1, x_2} = \zug{a, b, a}$, which again might not terminate.
\fi \hfill $\blacksquare$
\end{example}

\section{Other Classes of Deterministic Automata}\label{sec:other}

In this section we generalise the idea of DBW-refuters to other classes of deterministic automata.
For this we take again the view that a deterministic automaton is a $\zug{\Sigma, A}$-transducer over a suitable annotation alphabet $A$.
We then characterize each class of deterministic automata by two languages over $A$:
\begin{itemize}
\item
The language $\Lacc \subseteq A^\omega$, describing when a run is accepting. For example, for DBWs, we have $A=\{\acc,\rej\}$ and $\Lacc = \infty \acc$.
\item
The language $\Llegal \subseteq A^\omega$, describing structural conditions on the run. For example, recall that a DWW is a DBW in which the states of each SCS are either all accepting or all rejecting, and so each run eventually get trapped in an accepting or rejecting SCS. Accordingly, the language of runs that satisfy the structural condition is $\Llegal= A^* \cdot (\acc^\omega + \rej^\omega)$.
\end{itemize}

We now formalize this intuition. Let $A$ be a finite set of annotations and let $\gamma=\zug{\Lacc, \Llegal}$, for $\Lacc, \Llegal \subseteq A^\omega$.
A  deterministic automaton $\A = \zug{\Sigma, Q, q_0, \delta,\alpha}$ is a deterministic $\gamma$ automaton (\dgamw, for short) if there is a function $\tau \colon Q \to A$ that maps each state to an annotation such that a run $r$ of $\A$ satisfies $\alpha$ iff $\tau(r) \in \Lacc$, and all runs $r$ satisfy the structural condition, thus  $\tau(r) \in \Llegal$.
We then say that a language $L$ is $\gamma$-recognizable if there a \dgamw \/ $\A$ such that $L = L(\A)$.

Before we continue to study $\gamma$-recognizability, let us demonstrate the $\gamma$-characterization of common deterministic automata.
We first start with classes $\gamma$ for which $\Llegal$ is trivial; i.e., $\Llegal = A^\omega$.
\begin{itemize}
    \item DBW: $A=\{\acc,\rej\}$ and $\Lacc =\infty \acc$.
    \item DCW: $A=\{\acc,\rej\}$ and $\Lacc =\neg \infty \acc$.
	\item DPW[$i,k$]: $A = \{i, \dots, k\}$ and $\Lacc= \{ y \in A^\omega : \max(\infi(y)) \text{ is odd}\}$.
	\item DELW[$\theta$]: $A=2^\mathbb{M}$ and $\Lacc = \{ y \in A^\omega : y \models \theta\}$.
\end{itemize}
Note that the characterizations for B\"uchi, co-B\"uchi, and parity are special cases of the characterization for DELW. In a similar way, we could define a language $\Lacc$ for DRW[$k$] and other common special cases of DELWs.
We continue to classes in the depth hierarchy, where $\gamma$ includes also a structural restriction:
\begin{itemize}
	\item DWW: The set $A$ and the language $\Lacc$ are as for DBW or DCW. In addition, $\Llegal=A^* \cdot (\acc^\omega + \rej^\omega)$.
	\item DWW[$j,k$], for  $j \in \{0,1\}$: The set $A$ and the language $\Lacc$ are  as for DPW[$j,k$]. In addition, $\Llegal = \{ y_0 \cdot y_1 \cdots \in A^\omega:$ for all $i \geq 0$, we have that $y_i \leq y_{i+1}\}$.
	\item {\em Bounded Languages}: A language $L$ is bounded if it is both safety and co-safety. Thus, every word $w \in \Sigma^\omega$ has a prefix $v \in \Sigma^*$ such that either for all $u \in \Sigma^\omega$ we have $v \cdot u \in L$, or for all $u \in \Sigma^\omega$ we have $v \cdot u \not \in L$ \cite{KV01b}. To capture this, we use $A = \{\acc, \rej, ?\}$, where $``?"$ is used for annotating states with both accepting and rejecting continuations. Then, $\Lacc = A^* \cdot \acc^\omega$, and $\Llegal = ?^* \cdot (\acc^\omega + \rej^\omega)$.
		\item
	{\em Deterministic $(m,n)$-Superparity Automata\/} \cite{PP04}: $A = \{(i,j) : 0 \leq i \leq m, 0 \leq j \leq n\}$, $\Lacc = \{y_m \oplus y_n \in A^\omega : \max(\infi(y_m)) + \max(y_n) \text{ is odd}\}$, and $\Llegal = \{ y_m \oplus (y_0 \cdot y_1 \cdots) \in A^\omega :  y_i \leq y_{i+1}, \mbox{ for all } i \geq 0\}$.
	\end{itemize}

Let $\Sigma$ be an alphabet, let $A$ be an annotation alphabet, and let $\gamma = \zug{\Lacc,$ $\Llegal}$, for $\Lacc, \Llegal \subseteq A^\omega$. We define the language $\Real(L, \gamma) \subseteq (\Sigma \times A)^\omega$ of words with correct annotations.
\[\Real(L, \gamma)       = \{x  \oplus y : y \in \Llegal \text{ and } (x \in L \text{ iff } y \in \Lacc)\}.\]
Note that the language $\DBWreal(L)$ can be viewed as a special case of our general framework. In particular, in cases $\Llegal=A^\omega$, we can remove the $y \in \Llegal$ conjunct from
$\Real(L, \gamma)$.
Note that $\comp (\Real(L,\gamma))$ is the language
\[\NoReal(L,\gamma)=\{x \oplus y : y \not \in \Llegal \mbox{ or } (x \in L \mbox{ iff } y \not \in \Lacc)\}.\]
A {\em $\gamma$-refuter for $L$\/} is then an $(A/\Sigma)$-transducer with $\init={\it env}$ that realizes $\NoReal(L,\gamma)$.
We can now state the ``\dgamw-generalization" of Proposition~\ref{dbw:recognisability}.

\begin{proposition}\label{general:recognisability}
Consider an $\omega$-regular language $L \subseteq \Sigma^\omega$, and a pair $\gamma=\zug{\Lacc, \Llegal}$, for $\omega$-regular languages $\Lacc, \Llegal \subseteq A^\omega$. Exactly one of the following holds:
\begin{enumerate}
		\item
		$L$ is in \textnormal{\dgamw}, in which case the language $\Real(L,\gamma)$ is $(\Sigma/A)$-realizable by the system, and a finite-memory winning strategy for the system induces a \textnormal{\dgamw} for $L$.
		\item
		$L$ is not in \textnormal{\dgamw}, in which case the language $\NoReal(L, \gamma)$ is $(A/\Sigma)$-re\-alizable by the environment,  and a finite-memory winning strategy for the environment induces a $\gamma$-refuter for $L$.
\end{enumerate}
\end{proposition}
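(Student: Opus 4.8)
The plan is to obtain \Cref{general:recognisability} as a direct instance of \Cref{determinacy}, applied to the language $\Real(L,\gamma)$. First I would observe that $\Real(L,\gamma)$ is $\omega$-regular: it is a Boolean combination of $\omega$-regular languages, since $L$, $\Lacc$, and $\Llegal$ are $\omega$-regular, and membership of $x$ in $L$ and of $y$ in $\Lacc$ or $\Llegal$ each refer to one of the two projections of a word in $(\Sigma \times A)^\omega$. Hence \Cref{determinacy} applies to $\Real(L,\gamma)$ and yields that exactly one of the following holds: (i) $\Real(L,\gamma)$ is $(\Sigma/A)$-realizable by the system, or (ii) $\comp(\Real(L,\gamma)) = \NoReal(L,\gamma)$ is $(A/\Sigma)$-realizable by the environment. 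By definition a transducer witnessing (ii) is precisely a $\gamma$-refuter for $L$, and a finite-memory winning strategy for the environment is exactly such a finite-state transducer; so it remains only to show that (i) holds iff $L$ is in \dgamw, together with the induced-object claims.

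For the easy direction, suppose $L = L(\A)$ for a \dgamw $\A = \zug{\Sigma, Q, q_0, \delta, \alpha}$ with witnessing labeling $\tau \colon Q \to A$. Viewing $\A$ as the $(\Sigma/A)$-transducer $\T_\A = \zug{\Sigma, A, {\it sys}, Q, q_0, \delta, \tau}$, for every $x \in \Sigma^\omega$ with run $r$ of $\A$ we have $\T_\A(x) = \tau(r)$. The defining properties of a \dgamw give $\tau(r) \in \Llegal$, and $r$ satisfies $\alpha$ iff $\tau(r) \in \Lacc$; since $L(\A)=L$, this means $x \in L$ iff $\tau(r) \in \Lacc$. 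Therefore $x \oplus \T_\A(x) \in \Real(L,\gamma)$, so $\T_\A$ realizes $\Real(L,\gamma)$ by the system, and when $\A$ has finitely many states this transducer is a finite-memory strategy.

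For the converse, let the system win via a finite-memory strategy, i.e.\ a $(\Sigma/A)$-transducer $\T = \zug{\Sigma, A, {\it sys}, S, s_0, \rho, \tau}$ (with all states reachable, WLOG) such that $x \oplus \T(x) \in \Real(L,\gamma)$ for every $x$; equivalently $\T(x) \in \Llegal$ and $x \in L \Leftrightarrow \T(x) \in \Lacc$. Since $\Lacc$ is $\omega$-regular, fix a deterministic parity automaton $\B = \zug{A, Q_\B, q_\B, \delta_\B, \alpha_\B}$ with $L(\B) = \Lacc$, and form the product $\A = \zug{\Sigma, S \times Q_\B, \zug{s_0, q_\B}, \delta, \alpha}$ where $\delta(\zug{s,q},\sigma) = \zug{\rho(s,\sigma), \delta_\B(q, \tau(s))}$ and $\alpha(\zug{s,q}) = \alpha_\B(q)$, and equip $\A$ with the labeling $\tau'(\zug{s,q}) = \tau(s)$. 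For $x \in \Sigma^\omega$, the run of $\A$ on $x$ has first component the run $r$ of $\T$ on $x$ and second component the run of $\B$ on $\tau(r) = \T(x)$; hence this run is accepting iff $\T(x) \in \Lacc$ iff $x \in L$, so $L(\A) = L$, while $\tau'$ of the run equals $\T(x) \in \Llegal$. Thus $\A$ is a \dgamw for $L$, and it is induced by the finite-memory strategy $\T$ together with the fixed automaton $\B$.

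Finally, combining the two directions with the dichotomy of \Cref{determinacy} gives that exactly one of the two items holds and that the finite-memory winning strategies are as described. I expect the only non-routine point to be the converse direction: a finite-memory strategy guarantees only the structural and acceptance behaviour encoded in $\Real(L,\gamma)$, so producing an \emph{honest} deterministic automaton requires exhibiting a genuine $\infi$-based acceptance condition, which is why one takes the product with a deterministic automaton for $\Lacc$; the point to check is that this product still respects the structural restriction, and it does, because the labeling $\tau'$ depends only on the $\T$-component and all runs of $\T$ lie in $\Llegal$ by realizability. Everything else is a transcription of the definitions.
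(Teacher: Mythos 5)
Your proposal is correct and follows the same route the paper (implicitly) takes: \Cref{general:recognisability} is obtained as an instance of \Cref{determinacy} applied to the $\omega$-regular language $\Real(L,\gamma)$, exactly as \Cref{dbw:recognisability} was, with the remaining content being the translation between finite-memory system strategies and \dgamw{}s. Your explicit product with a deterministic automaton for $\Lacc$ in the converse direction is a reasonable way to spell out the ``induces a \dgamw'' claim for arbitrary $\omega$-regular $\Lacc$, and it correctly preserves the structural condition since the labeling depends only on the transducer component.
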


Note that every DELW can be complemented by dualization, thus by changing its acceptance condition from $\theta$ to $\neg \theta$. In particular, DBW and DCW dualize each other. As we argue below, dualization is carried over to refutation. For example, the $(\{\acc,\rej\}/\Sigma)$-transducer $\R$ from \Cref{ref fg1} is both a DBW-refuter for $\neg \infty a$ and a DCW-refuter for $\infty a$. Formally, we have the following.
\begin{theorem}
\label{cer dual}
Consider an EL-condition $\theta$ over $\mathbb{M}$. Let $A=2^\mathbb{M}$. For every $(A/\Sigma)$-transducer $\R$ and language $L$, we have that $\R$ is a \textnormal{DELW}$[\theta]$-refuter for $L$ iff $\R$ is a \textnormal{DELW}$[\neg \theta]$-refuter for $\comp(L)$. In particular,
for every language $L$ and $(\{\acc,\rej\}/\Sigma)$-transducer $\R$, we have that $\R$ is a DBW-refuter for $L$ iff $\R$ is a DCW-refuter for $\comp(L)$.
\end{theorem}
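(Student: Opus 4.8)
The plan is to unwind the definitions of $\Real$ and $\NoReal$ for the two conditions $\theta$ and $\neg\theta$ and observe that they coincide once $L$ is swapped with $\comp(L)$. First I would recall that the common annotation alphabet is $A=2^{\mathbb{M}}$ and that for a DELW the legality component is trivial, $\Llegal=A^\omega$, so that $\Real(L,\mathrm{DELW}[\theta])=\{x\oplus y : (x\in L)\leftrightarrow (y\models\theta)\}$ and correspondingly $\NoReal(L,\mathrm{DELW}[\theta])=\{x\oplus y : (x\in L)\leftrightarrow (y\not\models\theta)\}$. The key observation is then purely propositional: for a fixed pair $(x,y)$, the biconditional $(x\in L)\leftrightarrow(y\not\models\theta)$ is logically equivalent to $(x\in\comp(L))\leftrightarrow(y\models\theta)$, which in turn is $(x\in\comp(L))\leftrightarrow(y\not\models\neg\theta)$. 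Hence $\NoReal(L,\mathrm{DELW}[\theta])=\NoReal(\comp(L),\mathrm{DELW}[\neg\theta])$ as languages over $(\Sigma\times A)^\omega$.

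Given that identity, the theorem is immediate: by definition an $(A/\Sigma)$-transducer $\R$ with $\init=\mathit{env}$ is a $\mathrm{DELW}[\theta]$-refuter for $L$ iff it realizes $\NoReal(L,\mathrm{DELW}[\theta])$, i.e.\ iff for every $y\in A^\omega$ we have $\R(y)\oplus y\in\NoReal(L,\mathrm{DELW}[\theta])$; since this set equals $\NoReal(\comp(L),\mathrm{DELW}[\neg\theta])$, the same transducer realizes the latter, which says exactly that $\R$ is a $\mathrm{DELW}[\neg\theta]$-refuter for $\comp(L)$. The ``in particular'' clause then falls out by instantiating $\mathbb{M}=\{\acc\}$ (or the two-element presentation with $\mathbb{M}=\{\rej\}$ for the co-B\"uchi side): a DBW is $\mathrm{DELW}[\acc]$ with $\Lacc=\infty\acc$, a DCW is $\mathrm{DELW}[\neg\rej]$ with $\Lacc=\neg\infty\acc$, and $\neg\infty\acc=\comp(\infty\acc)$, so the two conditions are each other's duals in the required sense and the general statement specializes directly. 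One should note in passing that the annotation alphabet is literally the same set $A=2^{\mathbb{M}}$ for both $\theta$ and $\neg\theta$, so no relabelling of $\R$ is needed --- this is what makes the statement a genuine identity of transducers rather than merely a correspondence.

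I do not expect a serious obstacle here; the proof is essentially a one-line manipulation of a biconditional together with the fact that complementation of an EL-automaton is dualization of the acceptance condition and leaves the mark-labelling $\tau$ and hence the alphabet $A$ untouched. The only point that requires a small amount of care is making sure $\Llegal$ is indeed trivial for all DELWs so that there is no legality obligation to track on the refuter side; this is stated in the excerpt's $\gamma$-characterization of DELW$[\theta]$, where $\Llegal=A^\omega$, so the $y\in\Llegal$ conjunct can be dropped from $\Real$ and the disjunct $y\notin\Llegal$ can be dropped from $\NoReal$, exactly as the paper already remarks. Writing it out, I would present the chain of equalities $\NoReal(L,\mathrm{DELW}[\theta])=\{x\oplus y:(x\in L)\leftrightarrow(y\not\models\theta)\}=\{x\oplus y:(x\in\comp(L))\leftrightarrow(y\models\theta)\}=\{x\oplus y:(x\in\comp(L))\leftrightarrow(y\not\models\neg\theta)\}=\NoReal(\comp(L),\mathrm{DELW}[\neg\theta])$ and then conclude by the definition of a $\gamma$-refuter.
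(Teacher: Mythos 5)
Your proposal is correct and follows essentially the same route as the paper: the paper likewise reduces the claim to the propositional equivalence of $(x\in L)\leftrightarrow(y\models\theta)$ and $(x\in\comp(L))\leftrightarrow(y\models\neg\theta)$, working with the correct-annotation language where you work with its complement $\NoReal$, which is the same one-line argument. Your additional remarks on the triviality of $\Llegal$ and on the annotation alphabet being literally unchanged are accurate but not needed beyond what the paper states.
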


\begin{proof}
For DELW[$\theta$]-recognizability of $L$, the language of correct annotations is $\{x  \oplus y :(x \in L \text{ iff } y \models \theta)\}$, which is equal to $\{x  \oplus y :(x \in \comp(L) \text{ iff } y \models \neg \theta)\}$, which is the language of correct annotations for DELW[$\neg \theta$]-recognizability of $\comp(L)$. \hfill \qed
\end{proof}

While dualization is nicely carried over to refutation, this is not the case for all expressiveness results. For example, while DWW=DBW$\cap$DCW, and in fact DBW and DCW are weak type (that is, when the language of a DBW is in DWW, an equivalent DWW can be defined on top of its structure, and similarly for DCW \cite{KMM06}), we describe \ifarxiv below \else in \cite{kupferman2021certifying} \fi a DWW-refuter that is neither a DBW- nor a DCW-refuter. Intuitively, this is possible as in DWW refutation, Prover loses when the input is not in  $A^* \cdot (\acc^\omega + \rej^\omega)$, whereas in DBW and DCW refutation, Refuter has to respond correctly also for these inputs.

\ifarxiv
\begin{example}
Let $\Sigma=\{a,b,c,d\}$, and $A=\{\acc,\rej\}$.
Consider the language $L=(a^+ \cdot b \cdot c^* \cdot d)^* \cdot a^\omega + (a \cdot b \cdot d)^\omega$.
Note that $L$ is in DCW, but not in DBW, and hence also not in DWW. The $(A/\Sigma)$-transducer $\R$ in Figure~\ref{dwwref} is a DWW-refuter for $L$.
To see this, recall that for DWWs, we have that $\Llegal=A^* \cdot (\acc^\omega + \rej^\omega)$, and so all input sequences $y \in A^\omega$ that satisfy $\Llegal$ eventually gets trapped in the $a^\omega$ loop, generating a rejecting run on a word in the language, or gets trapped in the $c^\omega$ loop, generating an accepting run on a word not in the language.

On the other hand, while $L$ is not in DBW, the transducer $\R$ is not a DBW-refuter for $L$. To see this, observe that the DBW $\A$ in the figure suggests a winning strategy for Prover in the game corresponding to DBW. Indeed, when Prover generates $(\rej \cdot \acc \cdot \rej)^\omega$, which is accepting, then by following $\R$, Refuter responds with $(a \cdot b \cdot d)^\omega$, which is in $L$, and so Prover wins. Note that, unsurprisingly, the input generated by Prover does not satisfy $\Llegal$. \hfill $\blacksquare$

\begin{figure}[t]
\begin{center}
\includegraphics[width=\textwidth]{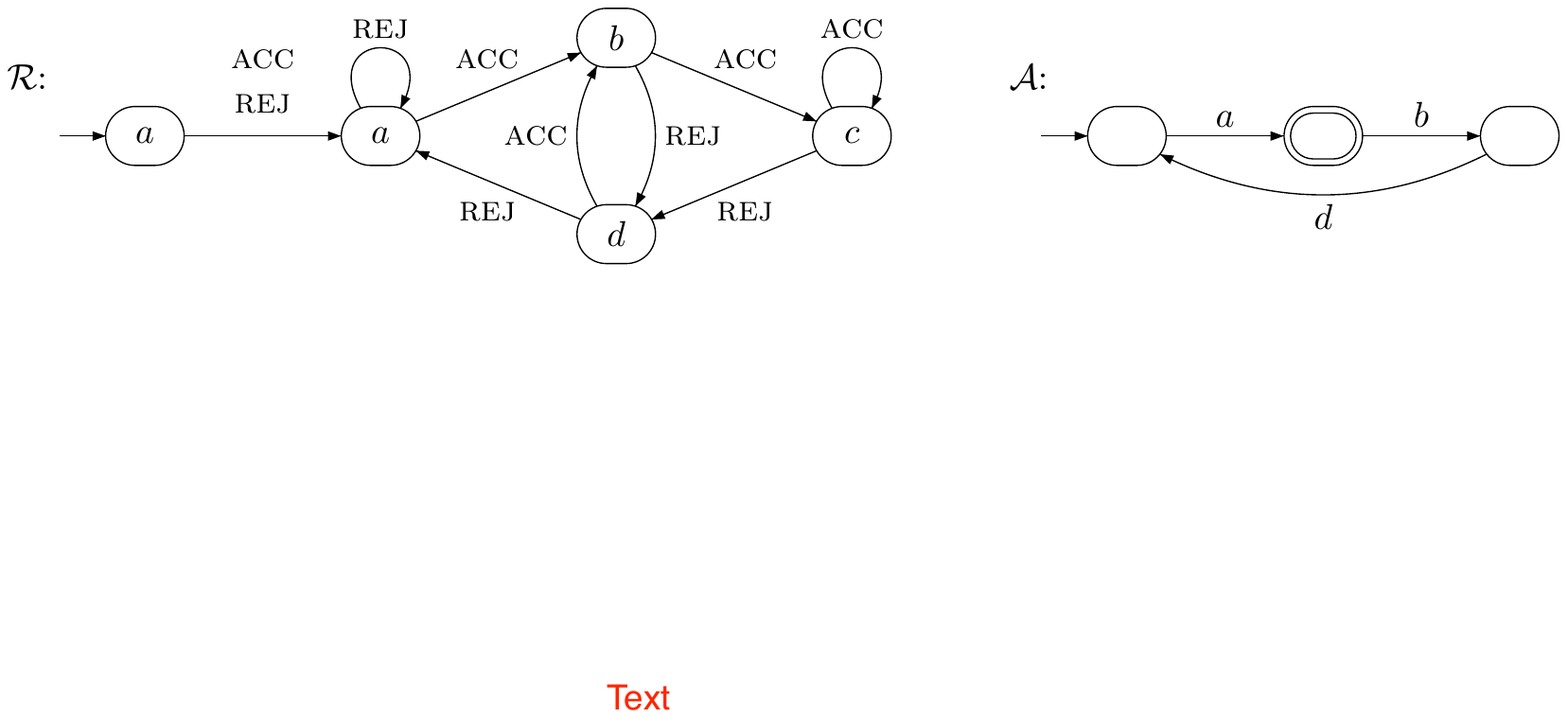}
\caption{The DWW-refuter $\R$ looses as a DBW-refuter when it plays against $\A$.}
\label{dwwref}
\end{center}
\end{figure} \end{example}
\fi

On the other hand, as every DWW is also a DBW and a DCW, every DBW-refuter or DCW-refuter is also a DWW-refuter.

\ifarxiv
\paragraph{Separability and Approximation.}
Consider a characterization $\gamma=\zug{\Lacc, \Llegal}$. Two languages $L_1, L_2 \subseteq \Sigma^\omega$ are \emph{$\gamma$-separable} if there exists a \dgamw $\A$ such that $L_1 \subseteq L(\A)$ and $L_2 \cap L(\A) = \emptyset$.
We define the corresponding languages of correct and incorrect annotations as follows.
\begin{itemize}
\item
$\Sep(L_1, L_2, \Lacc, \Llegal) =  \\
\{x  \oplus y : y \in \Llegal \text{ and }  ((x \in L_1 \text{ and } y \in \Lacc) \text{ or } (x \in L_2 \text{ and } y \notin \Lacc))\}$.
\item
$\NoSep(L_1, L_2, \Lacc, \Llegal) =  \comp(\Sep(L_1, L_2, \Lacc, \Llegal)) = \\
                                 \{x  \oplus y : y \notin \Llegal \text{ or }
	                               ((x \in L_1 \text{ and } y \notin \Lacc) \text{ or } (x \in L_2 \text{ and } y \in \Lacc))\}$.
\end{itemize}

Note that the language $\DBWsep(L_1, L_2)$ can be viewed as a special case of our general framework and as before in cases $\Llegal=A^\omega$, we can remove the $y \in \Llegal$ conjunct from $\Sep$. A {\em $\gamma$-sep-refuter for $L$\/} is an $(A/\Sigma)$-transducer with $\init={\it env}$ that realizes $\NoSep(L_1, L_2, \Lacc, \Llegal)$. By \Cref{determinacy}, exactly one of the following holds:

\begin{proposition}\label{general:separability}
Consider $\omega$-regular languages $L_1, L_2 \subseteq \Sigma^\omega$, and a characterization $\gamma=\zug{\Lacc, \Llegal}$, for $\omega$-regular languages $\Lacc, \Llegal \subseteq A^\omega$. Exactly one of the following holds:
\begin{enumerate}
		\item
		$\zug{L_1,L_2}$ are $\gamma$-separable, in which case the language $\Sep(L_1, L_2,\gamma)$ is $(\Sigma/A)$-realizable by the system, and a finite-memory winning strategy for the system induces a \textnormal{\dgamw} for some $L$ such that $L_1 \subseteq L$ and $L \cap L_2 = \emptyset$.
		\item
		$\zug{L_1,L_2}$ are not $\gamma$-separable, in which case the language $\NoSep(L_1, L_2, \gamma)$ is $(A/\Sigma)$-realizable by the environment, and a finite-memory winning strategy for the environment induces a $\gamma$-sep-refuter for $\zug{L_1,L_2}$.
\end{enumerate}
\end{proposition}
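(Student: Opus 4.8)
The plan is to obtain the statement from \Cref{determinacy} exactly as \Cref{general:recognisability} (the recognizability version) and \Cref{dbw:sep recognisability} (the DBW separation version) are obtained; this proposition is simply their common generalization. First I would check that $\Sep(L_1, L_2, \gamma)$ is $\omega$-regular: it is a Boolean combination of the cylindrifications over $(\Sigma \times A)^\omega$ of $L_1$, $L_2$ (languages over $\Sigma$) and $\Lacc$, $\Llegal$ (languages over $A$), all of which are $\omega$-regular, and $\omega$-regular languages are closed under these operations; moreover its complement is precisely $\NoSep(L_1, L_2, \gamma)$. Applying \Cref{determinacy} to $\Sep(L_1, L_2, \gamma)$ then gives that exactly one of the following holds, each witnessed by a finite-memory strategy: (a) $\Sep(L_1, L_2, \gamma)$ is $(\Sigma/A)$-realizable by the system, or (b) $\NoSep(L_1, L_2, \gamma)$ is $(A/\Sigma)$-realizable by the environment.

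Next I would match these two cases with $\gamma$-separability and its negation. Case (b) is immediate by definition: a finite-memory strategy realizing $\NoSep(L_1, L_2, \gamma)$ for the environment is an $(A/\Sigma)$-transducer with $\init = {\it env}$ realizing $\NoSep(L_1, L_2, \gamma)$, i.e., a $\gamma$-sep-refuter for $\zug{L_1, L_2}$. For case (a) I would show that a finite-memory winning strategy for the system induces a finite \dgamw $\A$ with $L_1 \subseteq L(\A)$ and $L(\A) \cap L_2 = \emptyset$. Combined with the trivial converse — any \dgamw separating $\zug{L_1, L_2}$, read as a $(\Sigma/A)$-transducer via its state labelling (with $\init = {\it sys}$), realizes $\Sep(L_1, L_2, \gamma)$ for the system — this yields that (a) is equivalent to $\gamma$-separability of $\zug{L_1, L_2}$, and the dichotomy of the proposition follows.

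For the construction in case (a), let $\T = \zug{\Sigma, A, {\it sys}, S, s_0, \rho, \tau}$ be a finite-memory strategy realizing $\Sep(L_1, L_2, \gamma)$, and fix a deterministic automaton $\B$ over $A$ with $L(\B) = \Lacc$ (possible since $\Lacc$ is $\omega$-regular). Let $\A$ be the product automaton that runs $\T$ on the input $x \in \Sigma^\omega$ while feeding $\T$'s output stream letter by letter into $\B$, with the first output letter $\tau(s_0)$ already consumed in $\A$'s initial state (this accounts for $\init = {\it sys}$, where $\tau(s_0)$ is emitted before any input is read); its state labelling is $\tau_\A(s, p) = \tau(s)$, and its acceptance condition is inherited from the $\B$-component. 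Then the run of $\A$ on any $x$ has $\tau_\A$-image $\T(x)$ and is accepting iff $\B$ accepts $\T(x)$, i.e., iff $\T(x) \in \Lacc$. Since $\T$ realizes $\Sep(L_1, L_2, \gamma)$, we have $\T(x) \in \Llegal$ for every $x$, so every run of $\A$ meets the structural condition and $\A$ is a genuine \dgamw; and for every $x$, $x \in L_1$ forces $\T(x) \in \Lacc$, hence $x \in L(\A)$, while $x \in L_2$ forces $\T(x) \notin \Lacc$, hence $x \notin L(\A)$. Thus $L(\A)$ separates $\zug{L_1, L_2}$.

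The proof is largely bookkeeping; the two points I expect to need care are (i) that the structural requirement $\tau_\A(r) \in \Llegal$ must hold for \emph{all} runs of $\A$ and not merely the accepting ones — this is exactly where I use that $\T$, realizing $\Sep$, emits a word in $\Llegal$ on every input and that every run of $\A$ comes from some input $x \in \Sigma^\omega$ — and (ii) the one-letter offset coming from $\init = {\it sys}$, handled by pre-composing $\B$'s first transition into $\A$'s initial state. For the concrete classes considered in the paper (DELW$[\theta]$, parity, weak, bounded, superparity), $\Lacc$ is directly expressible as an acceptance condition over the labelling $\tau$, so the auxiliary automaton $\B$ is unnecessary and $\A$ can be built on top of $\T$'s structure alone, exactly as in the positive direction of \Cref{dbw-sep by dpw yes or refute}.
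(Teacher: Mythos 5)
Your proposal takes essentially the same route as the paper: the paper obtains \Cref{general:separability} purely by invoking \Cref{determinacy} (exactly as it does for \Cref{dbw:recognisability} and \Cref{general:recognisability}), and your argument is that same reduction with the routine details filled in correctly --- the $\omega$-regularity of $\Sep(L_1,L_2,\gamma)$, the product of the system's finite-memory strategy with an automaton for \Lacc{} to extract a \dgamw{}, and the observation that realizing $\Sep$ forces every run (not just accepting ones) to satisfy \Llegal. No gaps; the only caveat is that, as you implicitly did, $\Sep$ should be read with implications (the complement of $\NoSep$) rather than with the conjunctions as literally typeset in the paper.
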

\else
It is easy to see that our results about \dgamw-recognizability can be extended to separability and approximation in the same way DBW-recognizability has been extended in Section~\ref{section:separability}.
We describe the details in the full version \cite{kupferman2021certifying}, as well as word-certificates for the non-\dgamw-recognizability and -separability of several well-known types of $\gamma$.
\fi

\ifarxiv

\section{Certifying \dgamw-Refutation}

In this section we extend the three-word certificates for non-DBW-recognizability to richer classes of deterministic automata. The idea is similar (and in fact a little tedious): each \dgamw-refuter embodies a structure (analogous to the one in Lemma~\ref{lem: lan struct}) from which we can extract finite words that constitute the corresponding certificate (analogous to the one in Theorem~\ref{lem:not-dbw-cert}). We describe here the details for classes in the Mostowski hierarchy and well as for classes of the depth-hierarchy. We also restrict ourselves to word-certificates for non-recognizability and do not show the word-certificates for non-separability which have an identical structure.

\subsection{Mostowski Hierarchy}

First, by Theorem~\ref{cer dual}, certificates for a class and its dual class are related. For example, dualizing Theorem~\ref{lem:not-dbw-cert}, we obtain certificates for non-DCW-recogniz\-ability as follows.
\begin{theorem}\label{lem:not-dcw-cert}
An $\omega$-regular language $L$ is not in DCW iff there exist three finite words $x \in \Sigma^*$ and $x_1, x_2 \in \Sigma^+$, such that
\[x \cdot (x_1 + x_2)^* \cdot x_1^\omega \cap L = \emptyset \qquad \text{ and } \qquad  x \cdot (x_1^* \cdot x_2)^\omega \subseteq L.\]
\end{theorem}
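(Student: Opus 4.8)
The plan is to obtain \Cref{lem:not-dcw-cert} as a direct corollary of \Cref{lem:not-dbw-cert} via the duality between DBW- and DCW-recognizability, rather than repeating the refuter-structure analysis from scratch. The key observation is that a language $L$ is in DCW iff $\comp(L)$ is in DBW, since DBW and DCW dualize each other (as noted before \Cref{cer dual}; formally, a DCW for $L$ is a DBW for $\comp(L)$ read with the Büchi condition, and vice versa). Hence $L \notin$ DCW iff $\comp(L) \notin$ DBW, and we may apply \Cref{lem:not-dbw-cert} to $\comp(L)$.

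Concretely, I would argue as follows. Suppose $L$ is not in DCW. Then $\comp(L)$ is not in DBW, so by \Cref{lem:not-dbw-cert} there are finite words $x \in \Sigma^*$ and $x_1, x_2 \in \Sigma^+$ with
\[x \cdot (x_1 + x_2)^* \cdot x_1^\omega \subseteq \comp(L) \quad \text{ and } \quad x \cdot (x_1^* \cdot x_2)^\omega \cap \comp(L) = \emptyset.\]
Rewriting these using $\comp(L) = \Sigma^\omega \setminus L$: the first inclusion says every word in $x \cdot (x_1 + x_2)^* \cdot x_1^\omega$ lies outside $L$, i.e. $x \cdot (x_1 + x_2)^* \cdot x_1^\omega \cap L = \emptyset$; the second says no word in $x \cdot (x_1^* \cdot x_2)^\omega$ lies outside $L$, i.e. $x \cdot (x_1^* \cdot x_2)^\omega \subseteq L$. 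These are exactly the two conditions in the statement. The converse direction is symmetric: given $x, x_1, x_2$ satisfying the displayed conditions of \Cref{lem:not-dcw-cert}, the same rewriting shows that $\zug{x,x_1,x_2}$ is a certificate for non-DBW-recognizability of $\comp(L)$, so by \Cref{lem:not-dbw-cert} $\comp(L) \notin$ DBW, hence $L \notin$ DCW.

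There is essentially no hard step here — the content is entirely in \Cref{lem:not-dbw-cert}, and this is a bookkeeping argument about complementation. The only point that deserves a sentence of care is the equivalence $L \in \text{DCW} \iff \comp(L) \in \text{DBW}$, which I would state explicitly (it is immediate from the definitions: a deterministic automaton with the Büchi acceptance condition $\alpha$ accepts exactly the complement of what it accepts under the co-Büchi condition with the same $\alpha$, and determinism makes complementation free). Alternatively, one could go through the refuter directly: by \Cref{general:recognisability} (or \Cref{dbw:recognisability} together with \Cref{cer dual}), a DCW-refuter for $L$ is precisely a DBW-refuter for $\comp(L)$, and then the extraction of words from its structure via \Cref{lem: lan struct} proceeds verbatim as in the proof of \Cref{lem:not-dbw-cert}, with the roles of ``$\in L$'' and ``$\notin L$'' swapped — but invoking \Cref{cer dual} and \Cref{lem:not-dbw-cert} as black boxes is cleaner and shorter.
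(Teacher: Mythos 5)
Your proposal is correct and follows the same route the paper takes: the paper derives \Cref{lem:not-dcw-cert} exactly by dualizing \Cref{lem:not-dbw-cert} via \Cref{cer dual}, i.e.\ by applying the DBW certificate theorem to $\comp(L)$ and rewriting the two conditions through complementation. Your spelled-out version of the bookkeeping (including the explicit note that $L \in \mathrm{DCW}$ iff $\comp(L) \in \mathrm{DBW}$ because determinism makes complementation free) is a faithful expansion of the one-line justification the paper gives.
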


Handling DPWs, we first define the analogue of a $\rej^+$-path, and then point to the desired structure and the certificate it induces.
Consider a DPW[$i,k$]-refuter $\R=\zug{\{i, \dots, k\},\Sigma,{\it env},S,s_0,\rho,\tau}$ with $i \in \{0,1\}$ and $i \leq k$. Let $\ell \in \{i, \dots, k\}$. We say that a path $s_1,\ldots,s_m$ in $\R$ is an {\em $\ell^+_\leq$-path\/} if its first transition is labelled $\ell$ and all its other transitions are labeled by colors in $\{i, \dots \ell\}$. Thus, $s_{2}=\rho(s_1, \ell)$ and, for all $1 \leq j < m$, we have that $s_{j+1}=\rho(s_j, \ell')$, for some $\ell' \leq \ell$.

\begin{lemma}\label{lem:flower}
Consider a \textnormal{DPW}$[i,k]$-refuter $\R=\zug{\{i, \dots, k\},\Sigma,{\it env},S,s_0,\rho,\tau}$ with $i \in \{0,1\}$ and $i \leq k$. There exists a state $s \in S$, a (possibly empty) path $p = s_0,s_1, \dots s_m$, and for each $\ell \in \{i, \dots, k\}$,  a $\ell^+_\leq$-cycle $p_\ell = s^\ell_{1} \dots s^\ell_{m_\ell}$, such that $s_m=s^\ell_1=s^\ell_{m_\ell}=s$.
\end{lemma}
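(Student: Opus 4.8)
The proof of Lemma~\ref{lem:flower} follows the same template as that of Lemma~\ref{lem: lan struct}, only now instead of distinguishing two kinds of transitions ($\rej$ versus $\acc$) we must accommodate the $k-i+1$ colors in the alphabet of $\R$. First I would pass to an ergodic component: let $C$ be a set of strongly connected states of $\R$ that is reachable from $s_0$ and from which only states in $C$ are reachable, and fix any $s \in C$. Let $p = s_0, s_1, \dots, s_m$ be a path witnessing reachability of $s$, so $s_m = s$; this is the (possibly empty) path in the statement. It remains to produce, for each color $\ell \in \{i, \dots, k\}$, an $\ell^+_\le$-cycle through $s$.

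The key point is that $\R$, being an $(A/\Sigma)$-transducer with $A = \{i,\dots,k\}$, is \emph{responsive}: from every state it can read every color. So from $s$ we may feed the input sequence $\ell \cdot \ell \cdot \ell \cdots$; since $\R$ is deterministic and finite-state, the resulting run reaches a state $t$ and then loops on a cycle all of whose transitions are labelled $\ell$. Prepending the initial $\ell$-transition from $s$ to the point where this $\ell$-cycle is entered, and then — using that $C$ is strongly connected — a path inside $C$ of transitions of \emph{arbitrary} colors (all of which are trivially $\le \ell$ is not needed; rather, one closes the cycle back to $s$ by any path in $C$, whose colors, being in $\{i,\dots,k\}$, need to be bounded by $\ell$) — here is the one subtlety. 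To make every non-first transition of the cycle have color $\le \ell$, I would instead argue as follows: from $s$, read $\ell$ once, reaching some $s'$; from $s'$, since $\R$ is responsive we may read $i^\omega$ (the \emph{minimal} color), whose transitions all have color $i \le \ell$, and this run eventually enters an $i$-labelled cycle; finally, since we are inside the ergodic strongly connected set $C$ and the automaton is responsive, we may close back to $s$ — but this closing path could use large colors. The honest fix is: take $C$ to be an ergodic component, feed $i^\omega$ from $s$ to land on a cycle, note this whole cycle stays in $C$, and then observe that from any state of $C$ we can return to $s$ reading only color $i$ as well, because reading $i^\omega$ from $s$ traverses a cycle that, by ergodicity and strong connectivity, must pass through a state from which $s$ is reachable; repeating the argument on that sub-structure yields an $i$-only cycle through $s$. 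For a general color $\ell$, prepend a single $\ell$-transition: read $\ell$ once from $s$ to reach $s'$, then apply the $i$-only argument inside the ergodic component reachable from $s'$ to get back to $s$ via transitions of color $i \le \ell$; the concatenation is an $\ell^+_\le$-cycle $p_\ell$ through $s$.

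I would then just check the definitions line up: the first transition of $p_\ell$ is the chosen $\ell$-transition, satisfying $s^\ell_2 = \rho(s^\ell_1,\ell)$ with $s^\ell_1 = s$, and every subsequent transition has color $i \le \ell$, so $\rho(s^\ell_j, \ell')$ with $\ell' \le \ell$; and $s^\ell_{m_\ell} = s^\ell_1 = s = s_m$, as required. The same $s$ works for all $\ell$ simultaneously because all the cycles are built through the one fixed ergodic state $s$.

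The main obstacle is the one flagged above: ensuring that the \emph{return} portion of each cycle uses only colors bounded by $\ell$, rather than arbitrary colors coming from $C$ being strongly connected. The clean way around it — and the step I would spend the most care on — is to base the construction on reading the minimal color $i$ forever from the relevant state and extracting an $i$-only cycle inside the resulting ergodic behaviour, so that the only ``large'' transition in $p_\ell$ is its designated first one. Everything else (reachability of an ergodic component, responsiveness of transducers, determinism forcing the $\ell$-labelled or $i$-labelled runs onto cycles) is routine and already implicit in the proof of Lemma~\ref{lem: lan struct}.
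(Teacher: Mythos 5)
You correctly isolate the crux---the return portion of each cycle must use only colors bounded by $\ell$---but the fix you propose does not work, and the failure is already in your first move: fixing $s$ in an ergodic component $C$ of the \emph{full} graph of $\R$. Since $\R$ is deterministic, the color-$i$ transitions form a functional graph, and only states lying on a cycle of that functional graph admit an $i$-only cycle through them; a state of $C$ need not be such a state. Concretely, take $A=\{0,1\}$ and states $\{s,t\}$ with $\rho(s,0)=\rho(s,1)=\rho(t,0)=t$ and $\rho(t,1)=s$: the whole graph is one ergodic component, yet there is no $0$-only cycle through $s$ (reading $0^\omega$ from $s$ loops at $t$ forever), and no path from $t$ back to $s$ uses only color $0$. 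So both your claim that ``from any state of $C$ we can return to $s$ reading only color $i$'' and the conclusion that ``repeating the argument yields an $i$-only cycle through $s$'' are false for your chosen $s$. Moreover, insisting that the return legs use only the minimal color $i$ is strictly stronger than what the lemma needs (colors $\leq\ell$ suffice) and is in general unachievable even after re-choosing $s$, since the $\ell$-successor of $s$ may lie outside the basin of whatever $i$-cycle contains $s$.

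The paper's proof avoids this by not fixing the anchor in the full graph at all. It defines $\R_{\leq j}$ as $\R$ restricted to transitions labelled at most $j$ and inducts on $j$: the anchor is found by descending through nested ergodic components of $\R_{\leq k}\supseteq \R_{\leq k-1}\supseteq\cdots\supseteq \R_{\leq i}$, with the base case reading $i^\omega$ to obtain an $i$-lasso. At level $j$ one takes a single $j$-labelled transition from the anchor and closes the cycle inside the ergodic component of $\R_{\leq j}$, so the return path uses only colors $\leq j$---exactly the freedom your construction gives up. Your argument becomes correct if you replace the single choice of $C$ by this recursion (equivalently, choose $s$ on an $i$-cycle inside an ergodic component of $\R_{\leq i}$ that is itself nested inside an ergodic component of each $\R_{\leq \ell}$) and relax the return legs from ``color $i$ only'' to ``colors at most $\ell$.''
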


\begin{proof}
Let $\R_{\leq j}$ denote the transducer that we obtain from $\R$ when we restrict $\delta$ to transitions labelled by at most $j$. Note that $\R$ is $\R_{\leq k}$. We proceed by induction on $j$ with $i \leq j \leq k$ and show that in the transducer $\R_{\leq j}$ for every state $s \in S$ there exists a state $s' \in S$, a (possibly empty) path $p = s_1, \dots s_m$ with $s = s_1$, and that for each $\ell \in \{i, \dots, j\}$ there exists a $\ell^+_\leq$-cycle $p_\ell = s^\ell_1,s^\ell_{2} \dots s^\ell_{m_\ell}$, such that $s_m=s^\ell_1=s^\ell_{m_\ell}=s'$. The base case for $j = i$ follows immediately from the fact that $\R_{\leq i}$ is responsive on $\{i\}$ and by reading $i^\omega$ we obtain a lasso with the required properties.

Let $j > i$ and let $s \in S$ be an arbitrary state. Further, let $s_{j} \in S$ be a reachable state from $s$ that belongs to an ergodic component in the graph of $\R_{\leq j}$ (that is, $s_{j} \in C$, for a set $C$ of strongly connected states that can reach only states in $C$). By induction hypothesis there exists $s' \in S$, a (possibly empty) path $p = s_j,s_{j+1}, \dots s_m$, and for each $\ell \in \{i, \dots, j-1\}$ there exists a $\ell^+_\leq$-cycle $p_\ell = s^\ell_1,s^\ell_{2} \dots s^\ell_{m_\ell}$, such that $s_m=s^\ell_1=s^\ell_{m_\ell}=s'$ for every $\ell \in \{i, \dots, j-1\}$. Since $\R_{\leq j}$ is responsive on $\{i, \dots, j\}$ we can take from $s'$ a transition labelled $\ell$ and since $C$ is ergodic we can find a path back to $s'$. Thus we obtain the missing $j^+_{\leq}$-cycle and by concatenating the path from $s$ to $s_j$ and the path $p$, we show that $s'$ can be reached from $s$.
\hfill \qed \end{proof}

\begin{theorem}\label{lem:not-dpw-0-cert}
Let $i \in \{0,1\}$ and $i \leq k$. An $\omega$-regular language $L$ is not in \textnormal{DPW}$[i,k]$ iff there exist finite words $x \in \Sigma^*$ and $x_i, \dots, x_k \in \Sigma^+$, such that for every even $i \leq \ell \leq k$, we have
\[x \cdot (x_i + \dots + x_k)^* \cdot ((x_{i} + x_{i+1} + \dots + x_{\ell - 1})^* \cdot x_\ell)^\omega \subseteq L,\]
and for every odd $i \leq \ell \leq k$, we have
\[x \cdot (x_i + \dots + x_k)^* \cdot ((x_{i} + x_{i+1} + \dots + x_{\ell - 1})^* \cdot x_\ell)^\omega \cap L = \emptyset.\]
\end{theorem}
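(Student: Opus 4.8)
The plan is to mimic the two-direction argument of Theorem~\ref{lem:not-dbw-cert}, now using the refined structure supplied by Lemma~\ref{lem:flower} for the ``only if'' direction and a Landweber-style back-and-forth argument for the ``if'' direction. For the forward direction, suppose $L$ is not in DPW$[i,k]$. By Proposition~\ref{general:recognisability} there is a DPW$[i,k]$-refuter $\R$, and by Lemma~\ref{lem:flower} there is a state $s$, a path $p$ from $s_0$ to $s$, and for every $\ell\in\{i,\dots,k\}$ an $\ell^+_\leq$-cycle $p_\ell$ based at $s$. Let $x$ be the output of $\R$ along $p$ and $x_\ell$ the output along $p_\ell$ (taking, as before, the output at the \emph{second} state of each segment since $\init=\mathit{env}$). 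I would then check that for any $\ell$, a word $w$ in $x\cdot(x_i+\dots+x_k)^*\cdot((x_i+\dots+x_{\ell-1})^*\cdot x_\ell)^\omega$ is $\R(a)$ for an input sequence $a$ that (i) starts with the colours read along $p$, (ii) continues with colours in $\{i,\dots,k\}$ during the $(x_i+\dots+x_k)^*$ part, and (iii) in the $\omega$-tail alternates between stretches of colours $\leq\ell-1$ and a colour exactly $\ell$ occurring infinitely often. Hence $\max(\infi(a))=\ell$, so $a\in\Lacc$ iff $\ell$ is odd; since $\R$ is a refuter, $\R(a)=w$ satisfies $w\notin L$ iff $a\in\Lacc$, i.e.\ $w\in L$ iff $\ell$ is even. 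This gives exactly the two displayed inclusions. The one subtlety here is that colours strictly below $\ell$ appearing in the prefix or in the $(x_i+\dots+x_k)^*$ block could in principle dominate; this is why Lemma~\ref{lem:flower} is phrased so that $p_\ell$ uses \emph{only} colours $\leq\ell$ and the distinguished colour $\ell$ on its first transition — I would lean on exactly that to pin down $\max(\infi(a))$.

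For the converse, assume such words $x,x_i,\dots,x_k$ exist and suppose for contradiction that $\A=\zug{\Sigma,Q,q_0,\delta,\alpha}$ is a DPW$[i,k]$ with $L(\A)=L$. The argument generalises Landweber's: build an infinite word by a nested pumping construction that, at ``level $\ell$'', forces $\A$ to visit its maximum colour $\ell$ infinitely often along a loop, while remaining inside the corresponding language $x\cdot(x_i+\dots+x_k)^*\cdot((x_i+\dots+x_{\ell-1})^*\cdot x_\ell)^\omega$. Concretely I would induct downward from $\ell=k$: using the inclusion for the largest relevant colour, find a finite prefix after which $\A$ has exhibited a $B$-free, $x_k$-reading loop achieving colour $k$ (if $k$ gives a subset-of-$L$ inclusion) or a loop achieving colour $k$ that witnesses rejection, then nest inside it loops for $\ell=k-1,k-2,\dots$, each time pumping $x_\ell$ finitely often to realise colour $\ell$ on a repeated state. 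Since $Q$ is finite, a pigeonhole argument on the iteration index yields a lasso whose infinite suffix lies in one of the two languages in the statement but whose run on $\A$ has a maximum recurring colour of the ``wrong'' parity, contradicting $L(\A)=L$.

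I expect the main obstacle to be the converse direction: the single-colour pumping of Landweber becomes a genuinely nested induction over the colours $i,\dots,k$, and one has to be careful that pumping to realise colour $\ell$ does not inadvertently reintroduce a larger colour, and that the ``legal'' structure (here $\Llegal=A^\omega$, so this is vacuous, but the bookkeeping of which $C_\ell$-language the constructed word lies in is still delicate). A clean way to organise this is to prove, by induction on $k-\ell$, the statement: \emph{for every reachable state $q$ there is a finite word $u_\ell\in(x_i+\dots+x_k)^*\cdot((x_i+\dots+x_{\ell-1})^*\cdot x_\ell)^*$ such that reading $u_\ell$ from $q$ returns to $q$ and the maximal colour seen is exactly $\ell$}; then concatenating $u_k,u_{k-1},\dots$ appropriately and pigeonholing gives the contradiction. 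The forward direction, by contrast, is essentially a bookkeeping exercise once Lemma~\ref{lem:flower} is in hand, so I would dispatch it quickly and spend the bulk of the write-up on the nested pumping in the converse. As with Theorem~\ref{lem:not-dbw-cert}, I would also note that this statement is a special case of \cite[Lemma 14]{Wag79}, the novelty being the game-based route through the refuter.
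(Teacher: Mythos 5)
Your forward direction is exactly the paper's: extract the path and the $\ell^+_\leq$-cycles from Lemma~\ref{lem:flower}, read off the outputs, and observe that for a word in the $\ell$-th language one can choose an input sequence $a$ with $\max(\infi(a))=\ell$, so that the refuter property pins down membership in $L$ by the parity of $\ell$. (One small misdirection: colours occurring in the prefix $x\cdot(x_i+\dots+x_k)^*$ cannot ``dominate'' since they occur only finitely often; the only point that needs the structure of the cycles is that the $\omega$-tail uses colours $\leq\ell$ with $\ell$ occurring infinitely often.)

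The converse, which you correctly identify as the hard part, is where your proposal has a genuine gap. The intermediate claim you propose --- \emph{for every reachable state $q$ there is a word $u_\ell$ that returns to $q$ and along which the maximal colour seen is exactly $\ell$} --- conflates two unrelated colourings: the index $\ell$ of the certificate word $x_\ell$ refers to the annotation alphabet of the hypothetical refuter, whereas the colours seen by $\A$ are assigned by its own (arbitrary) parity function $\alpha$; nothing forces $\A$ to exhibit colour exactly $\ell$ while reading $x_\ell$-blocks. Moreover, requiring a loop back to an \emph{arbitrary} reachable $q$ is too strong, and the downward induction from $\ell=k$ runs against the grain of the nesting, since it is the $\ell$-th language whose words contain the $(\ell-1)$-st blocks, not vice versa. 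The paper's argument goes \emph{upward}: with $n=|Q|$, set $w_0=x_0^n$ and $w_\ell=(w_{\ell-1}\cdot x_\ell)^n$, and fix any $w\in x\cdot(x_i+\dots+x_k)^*$. Since $w\cdot w_\ell^\omega$ lies in the $\ell$-th certificate language, its membership in $L$ is known, so the maximal colour $\alpha(p_\ell)$ recurring on its run has parity determined by $\ell$ (odd for even $\ell$, even for odd $\ell$); because each period $w_\ell$ contains blocks $w_{\ell-1}$ of length at least $n$, the pigeonhole principle forces a state repetition inside each such block, so every colour witnessed at level $\ell-1$ still recurs at level $\ell$, and the parity flip forces $\alpha(p_\ell)>\alpha(p_{\ell-1})$ strictly. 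This yields a chain $\alpha(p_k)>\dots>\alpha(p_0)$ with $\alpha(p_0)$ odd, hence $\alpha(p_k)>k$, contradicting the colour range of a DPW$[0,k]$ (and analogously for $i=1$). Your closing remark that the statement is a special case of \cite[Lemma 14]{Wag79} matches the paper's own remark.
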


\begin{proof}
	Assume first that $L$ is not in DPW[$i,k$]. Then, by \Cref{general:recognisability}, there exists a DPW[$i,k$]-refuter $\R$ for it. From this refuter we can extract via \Cref{lem:flower} a path $p$ and $\ell^+_\leq$-cycles. We then construct the postulated finite words in the exact same way as in the proof of \Cref{lem:not-dbw-cert}.


For the other direction, we first simplify the presentation by assuming $i = 0$. The proof for $i = 1$ is analogous. Assume by way of contradiction that there is a DPW$[0,k]$ $\A$ with $\lang(\A)=L$.
Let $\A=\zug{\Sigma,Q,q_0,\delta,\alpha}$.
Let $n = |Q|$ and consider the following sequence of words $w_0 = x_0^n$, $w_1 = (w_0 \cdot x_1)^n$, \dots, $w_k = (w_{k-1} \cdot x_k)^n$.
Let $q = \delta(q_0, w)$ be a state that is reached after reading $w \in x \cdot (x_i + x_{i+1} + \dots x_k)^*$. Since $w \cdot w_0^\omega \in L$, there must be a state $p_0$ that is visited infinitely often and $\alpha(p_0)$ is odd. Since $|w_0| \geq |Q|$, this state must have been visited while reading $w_0$. Now, consider $w \cdot w_1^\omega$. This word is rejected and by the same reasoning as before there must be some $p_1$ such that $\alpha(p_1)$ is even, it is visited while reading $w_1$, and for every $p_0$ that belongs to a $w_0$ subsequences we have $\alpha(p_1) > \alpha(p_0)$. We continue and obtain a sequence $\alpha(p_k) > \dots > \alpha(p_0)$ with $k$ strict inequalities. Since $\alpha(p_0)$ is odd, we have $\alpha(p_0) > 0$ and thus $\alpha(p_k) > k$, which contradicts the fact that $\A$ is a DPW$[0,k]$.
\hfill \qed \end{proof}

Note that, by \cite{NW98}, the ``flower''-structure that induces the certificate exists also in DPWs for $L$. Specifically, while
\Cref{lem:flower} shows that every DPW$[i,k]$-refuter contains a ``flower'' with $k-i+1$ petals, it is shown in \cite{NW98} that for every $\omega$-language $L$ not in DPW[$1,k+1$], there exists a DPW for $L$ that contains a flower with $k+1$ petals and this flower occurs in some accepting run.

\paragraph{Rabin and Streett acceptance.} Recall that for all $k \geq 0$, we have that DRW[$k$] = DPW[$0,2k$] . Hence, the certificates obtained through \Cref{lem:not-dpw-0-cert} carry over to the Rabin case. Further, in a deterministic generalized Rabin automaton (DGRW), the acceptance condition is of the form \[\alpha = \{\zug{B_1, G_{1,1}, \dots, G_{1,n_1}}, \dots \zug{B_k, G_{k,1}, \dots, G_{n,k_n}}\},\] and a run $r$ is accepting if there is $j \in \{1,\dots,k\}$,  such that $\infi(r) \cap B_j = \emptyset$ and $\infi(r) \cap G_{j, \ell} \neq \emptyset$ for every $1 \leq \ell \leq n_j$. Since degeneralization does not increase the number of Rabin pairs, we have that  DGRW[$k$] = DRW[$k$] = DPW[$0,2k$], and so again the certificates obtained through \Cref{lem:not-dpw-0-cert} are applicable. Nevertheless, a refuter for the DRW[$k$] may be more succinct than a DPW$[0,2k]$-refuter.

Finally, the Streett and generalized acceptance conditions are dual to Rabin and generalized Rabin, and certificates for them can be obtained dually.

\subsection{Depth-Hierarchy}

We continue to certificates for non-DWW[$i,k$]-recognizability.
Consider a DWW[$i,k$]-refuter $\R=\zug{\{i, \dots, k\},\Sigma,{\it env},S,s_0,\rho,\tau}$, with $i \in \{0,1\}$ and $i \leq k$. Let $\ell \in \{i, \dots, k\}$. We say that a path $s_1,\ldots,s_m$ in $\R$ is an $\ell^+$-path if all transitions are labelled by $\ell$. Thus, for all $1 \leq j < m$, we have that $s_{j+1}=\rho(s_j, \ell)$.

\begin{lemma}\label{lassosequence}
Consider a \textnormal{DWW}$[i,k]$-refuter $\R=\zug{\{i, \dots, k\},\Sigma,{\it env},S,s_0,\rho,\tau}$ with $i \in \{0,1\}$ and $i \leq k$. Let $s^{i-1}$ be an alias for $s_0$. Then there exists a sequence of states $s^i, s^{i+1}, \dots s^k \in S$, such that for every $j \in \{i, \dots, k\}$ there exists a (possibly empty) $j^+$-path $p^j = s^j_1,s^j_2, \dots s^j_{m_j}$, and a $j^+$-cycle $c^j = s^j_{m_j+1},s^j_{m_j+2} \dots s^j_{m_j+m'_j}$ such that $s^j_{m_j} = s^j_{m_j+1} = s^j_{m_j + m_j'} = s^j$ and $s^j_1 = s^{j-1}$.
\end{lemma}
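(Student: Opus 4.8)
The plan is to mimic the inductive argument used for Lemma~\ref{lem:flower} (the ``flower'' lemma for DPW$[i,k]$-refuters), but now tracking a \emph{nested sequence of lassos} rather than a single state with many petals hanging off it. The key structural difference is that in a DWW$[i,k]$-refuter the annotations along any legal run are monotonically non-decreasing, so once Refuter has committed to reading colour $j$ it can never go back to a smaller colour; consequently the petals cannot all share one hub, and instead we get a chain $s_0 = s^{i-1}, s^i, s^{i+1}, \dots, s^k$ in which each $s^{j}$ is reached from $s^{j-1}$ by a (possibly empty) $j^+$-path and carries a nonempty $j^+$-cycle.

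\medskip
\noindent\textbf{Main steps.} First I would set up the restricted transducers: for $i \le j \le k$, let $\R_{= j}$ denote the subgraph of $\R$ consisting only of transitions labelled exactly $j$ (this is the right restriction here, since in the weak case a $j^+$-path uses only colour $j$, unlike the $\ell^+_\le$-paths of Lemma~\ref{lem:flower}). The induction is on $j$ from $i-1$ up to $k$, with the hypothesis: from the current ``frontier'' state $s^{j-1}$ one can reach some state $s^j$ by a $j^+$-path and find a $j^+$-cycle at $s^j$. For the base step $j=i$: since $\R$ is responsive it can always read colour $i$, so starting from $s^{i-1}=s_0$ and reading $i^\omega$ we traverse finitely many states before repeating one; the first repeated state is $s^i$, the prefix to it is the $i^+$-path $p^i$, and the loop is the $i^+$-cycle $c^i$. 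For the inductive step $j > i$: from the state $s^{j-1}$ produced by the previous stage, again use responsiveness to read colour $j$ forever; the run $s^{j-1} \xrightarrow{j} \cdots$ eventually repeats a state, giving a $j^+$-path $p^j$ from $s^{j-1}$ to some $s^j$ and a $j^+$-cycle $c^j$ at $s^j$. Concatenating with the previously built paths and cycles yields the full sequence claimed in the statement. Finally I would observe that $p^j$ may be empty precisely when $s^{j-1}$ already lies on a $j^+$-cycle, which is why the lemma allows empty $p^j$.

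\medskip
\noindent\textbf{Where the subtlety lies.} The routine part is the pigeonhole extraction of a lasso from an infinite monochromatic run; the real care is needed in two places. One is to make sure we do \emph{not} need any ergodicity/SCC argument here (unlike Lemma~\ref{lem: lan struct} and Lemma~\ref{lem:flower}), because the weak structure means colours only increase, so once we are reading colour $j$ we will \emph{never} be forced out of the region of states reachable-under-colour-$j$, and hence an arbitrary reachable state already suffices as a hub --- we just need responsiveness on the singleton alphabet $\{j\}$, not strong connectivity. The second, genuinely the main obstacle, is verifying that the chain is \emph{consistent}: the cycle $c^j$ must sit at the \emph{same} state $s^j$ where the next path $p^{j+1}$ starts, i.e.\ $s^{j-1}$ as handed to stage $j+1$ is exactly $s^j$, and that all these finite paths/cycles fit together into a single walk in $\R$ realising, for each even $\ell$ a run that stabilises on colour $\ell$ (hence accepting-or-rejecting as the certificate later demands). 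Checking this amounts to being careful with indices --- aligning $s^j_{m_j} = s^j_{m_j+1} = s^j_{m_j+m'_j} = s^j$ and $s^j_1 = s^{j-1}$ exactly as written --- and confirming the endpoints match, which is bookkeeping rather than a conceptual difficulty but is the step most likely to contain an off-by-one error.
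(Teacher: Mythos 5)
Your proposal is correct and matches the paper's proof, which likewise builds the chain by reading $i^\omega$ from $s_0$ to extract the first lasso via pigeonhole and then reading $(i+1)^\omega, \dots, k^\omega$ from each successive hub, using only responsiveness and no ergodicity argument. The extra scaffolding you add (the restricted transducers $\R_{=j}$ and the remark about monotone annotations, which concerns Prover's legal inputs rather than a constraint on the refuter's transition function) is not needed but does not affect the argument.
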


\begin{proof}
Such a structure can be found by constructing a sequence of lassos. Start by reading $i^\omega$ from $s_0$ to construct an $i^+$-path $p^i$ and an $i^+$-cycle $c^i$. $s^i$ is then the last state of  $c^i$, respectively. Then, continue by reading $(i+1)^\omega$ from $s^i$ to find the next lasso and continue until all lassos are found.
\hfill \qed \end{proof}

\begin{theorem}\label{thm:not-dww-i-j-cert}
Let $i \in \{0,1\}$ and $i \leq k$. An $\omega$-regular language $L$ is not in \textnormal{DWW}$[i,k]$ iff there exist finite words $\hat{x}_{i}, \hat{x}_{i+1}, \dots, \hat{x}_k \in \Sigma^*$ and $x_i, x_{i+1}, \dots, x_k \in \Sigma^+$, such that for every even $i \leq \ell \leq k$, we have
\[\hat{x}_i \cdot x_i^* \cdot \hat{x}_{i+1} \cdot x_{i+1}^* \cdots \hat{x}_{\ell} \cdot x_{\ell}^\omega \subseteq L,\]
and for every odd $i \leq \ell \leq k$, we have
\[\hat{x}_i \cdot x_i^* \cdot \hat{x}_{i+1} \cdot x_{i+1}^* \cdots \hat{x}_{\ell } \cdot x_{\ell }^\omega \cap L = \emptyset.\]
\end{theorem}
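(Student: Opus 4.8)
The plan is to follow the same two-part template as \Cref{lem:not-dbw-cert} and \Cref{lem:not-dpw-0-cert}, reading the finite words off the ``staircase of lassos'' produced by \Cref{lassosequence} for one direction, and closing the converse by a pumping argument that uses the monotonicity of the colouring of a weak automaton.

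For the ``only if'' direction I would start, by \Cref{general:recognisability}, from a finite-state \textnormal{DWW}$[i,k]$-refuter $\R=\zug{\{i,\dots,k\},\Sigma,{\it env},S,s_0,\rho,\tau}$ and apply \Cref{lassosequence} to obtain the states $s^i,\dots,s^k$ together with the $j^+$-paths $p^j$ and $j^+$-cycles $c^j$. I would take $\hat x_j\in\Sigma^*$ and $x_j\in\Sigma^+$ to be the $\Sigma$-words that $\R$ outputs along $p^j$ and $c^j$ (a cycle has a transition, so $x_j$ is nonempty; $\hat x_j$ may be empty). Then, fixing $\ell$ and an arbitrary $w\in \hat x_i\cdot x_i^*\cdot\hat x_{i+1}\cdot x_{i+1}^*\cdots\hat x_\ell\cdot x_\ell^\omega$, I would reconstruct the input sequence $a$ read along $p^i,c^i,\dots,p^{\ell-1},c^{\ell-1},\dots,p^\ell,c^\ell,c^\ell,\dots$, so that $\R(a)=w$. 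The decisive observation is that $a$ is a block of $i$'s, then a block of $(i{+}1)$'s, \dots, then $\ell^\omega$, hence non-decreasing, so $a\in\Llegal$, and $\max(\infi(a))=\ell$, so $a\in\Lacc$ iff $\ell$ is odd; since $\R$ realizes $\NoReal(L,\gamma)$ and $a\in\Llegal$, we get $w\in L$ iff $\ell$ is even, which is exactly the claimed pair of statements as $w$ ranges over the whole language. The off-by-one in the {\it env}-initiated output and the bookkeeping of the block lengths are handled verbatim as in \Cref{lem:not-dbw-cert}.

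For the ``if'' direction I would argue by contradiction: given the family $\hat x_i,\dots,\hat x_k\in\Sigma^*$, $x_i,\dots,x_k\in\Sigma^+$ from the statement, suppose there were a \textnormal{DWW}$[i,k]$ $\A=\zug{\Sigma,Q,q_0,\delta,\alpha}$ with $L(\A)=L$, recalling that $\alpha\colon Q\to\{i,\dots,k\}$ is non-decreasing along every run, so each run stabilizes in a unique colour and accepts iff that colour is odd. Let $n=|Q|$ and, for $i\le\ell\le k$, put $u_\ell=\hat x_i\cdot x_i^{\,n}\cdot\hat x_{i+1}\cdot x_{i+1}^{\,n}\cdots\hat x_{\ell-1}\cdot x_{\ell-1}^{\,n}\cdot\hat x_\ell$, so $u_\ell\cdot x_\ell^\omega$ lies in the level-$\ell$ language and $u_{\ell+1}=u_\ell\cdot x_\ell^{\,n}\cdot\hat x_{\ell+1}$. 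Writing $c_\ell$ for the stabilization colour of the run on $u_\ell\cdot x_\ell^\omega$, the hypotheses give $c_\ell\equiv\ell+1\pmod 2$. A pigeonhole argument on $\A$ shows that already within at most $n$ iterations of $x_\ell$ after $u_\ell$ the run enters a cycle of colour $c_\ell$, so $\alpha(\delta(q_0,u_\ell x_\ell^{\,n}))\ge c_\ell$, hence $\alpha(\delta(q_0,u_{\ell+1}))\ge c_\ell$, hence $c_{\ell+1}\ge c_\ell$; since $c_{\ell+1}\not\equiv c_\ell\pmod 2$ this forces $c_{\ell+1}>c_\ell$. Starting from $c_i\ge i$ and $c_i\ne i$ (parity), induction gives $c_\ell\ge\ell+1$ for every $\ell$, so $c_k\ge k+1$, contradicting $\alpha(Q)\subseteq\{i,\dots,k\}$.

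I expect the forward direction to be essentially mechanical once \Cref{lassosequence} is available; its only subtlety is that the staircase shape of the words is precisely what forces the reconstructed input to lie in $\Llegal$, which is what lets the refuter's guarantee bite. The real content — and the step I would be most careful about — is the converse: turning the nested $\omega$-language inclusions into a genuinely strictly increasing chain $c_i<c_{i+1}<\dots<c_k$ of stabilization colours (this is exactly where weakness, i.e.\ monotonicity of $\alpha$, is used, and it is what distinguishes \textnormal{DWW}$[i,k]$ from \textnormal{DPW}$[i,k]$), and then noting that $k-i+1$ strictly increasing colours with the alternating parities dictated by $c_\ell\equiv\ell+1$ simply cannot fit into $\{i,\dots,k\}$.
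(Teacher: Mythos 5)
Your proposal is correct and follows essentially the same route as the paper's proof: the forward direction reads the words off the staircase of lassos from \Cref{lassosequence} (with the observation that the resulting annotation sequence is monotone, hence in $\Llegal$, so the refuter's guarantee applies), and the converse is the same pigeonhole-plus-monotonicity argument, producing a strictly increasing chain of $k-i+1$ stabilization colours with alternating parities that cannot fit into $\{i,\dots,k\}$. Your $c_\ell$ coincides with the paper's $\alpha(\delta(q_0,w_\ell))$, so the two write-ups differ only in bookkeeping.
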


\begin{proof}
Assume first that $L$ is not in DWW[$i,k$]. Then, by \Cref{general:recognisability}, there exists a DWW[$i,k$]-refuter $\R$ for it. From this refuter we can extract via \Cref{lassosequence} a sequence of states with the corresponding paths and cycles. We then obtain words in the same manner as in the proof of \Cref{lem:not-dbw-cert}.

For the remaining direction assume by way of contradiction that there is a DWW[$i,k$] $\A =\zug{\Sigma,Q,q_0,\delta,\alpha}$ with $\lang(\A)=L$.
We simplify the presentation by assuming $i = 0$. The proof for $i = 1$ is analogous.
Let $n = |Q|$ and consider the following sequence of words $w_0 = \hat{x}_0 \cdot x_0^n$, $w_1 = w_0 \cdot \hat{x}_1 \cdot x_1^n$, \dots, $w_k = w_{k-1} \cdot \hat{x}_k \cdot x_k^n$. Since $w_0 \cdot x_0^\omega \in L$ and $w_0$ has more letters than $\A$ has states, we have $\alpha(\delta(q_0, w_0))$ is odd. By the same argument we have due to $w_1 \cdot x_1^\omega \notin L$ that $\alpha(\delta(q_0, w_1))$ is even and since $w_0$ is a prefix of $w_1$ we also have $\alpha(\delta(q_0, w_1)) > \alpha(\delta(q_0, w_0))$. Continuing in this manner we obtain a chain of length $\alpha(\delta(q_0, w_k)) > \alpha(\delta(q_0, w_{k-1})) > \dots > \alpha(\delta(q_0, w_0))$ with $k$ strict inequalities. Since the smallest element is odd, we have $\alpha(\delta(q_0, w_0)) > 0$ and thus $\alpha(\delta(q_0, w_k)) > k$ which contradicts $\A$ being a DWW$[0,k]$.
\hfill \qed \end{proof}

We continue with general DWWs.

\begin{lemma}\label{twosnakes}
Consider a \textnormal{DWW}-refuter $\R=\zug{\{\acc, \rej\},\Sigma,{\it env},S,s_0,\rho,\tau}$. There exist two states $s^1, s^2 \in S$, (possibly empty) paths $p_0 = s_0,s_1, \dots s_{m_0}$, $p_1 = s_{m_0+1},\dots,s_{m_0 + m_1}$, and $p_2 = s_{m_0 + m_1 + 1},\dots,s_{m_0 + m_1 + m_2}$, a $\rej^+$-cycle $c^1 = s^1_1,s^1_{2} \dots s^1_{l_1}$, and a $\acc^+$-cycle $c^{2} = s^{2}_1,s^{2}_{2} \dots s^{2}_{l_2}$, such that $s_{m_0} = s_{m_0 + 1} = s_{m_0 + m_1 + m_2} = s^{1}_1 = s^{1}_{l_1}$ and $s_{m_0 + m_1} = s_{m_0 + m_1 + 1} = s^{2}_1 = s^{2}_{l_2}$.
\end{lemma}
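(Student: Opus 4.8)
The plan is to observe that this structure is present in \emph{every} responsive $(A/\Sigma)$-transducer with $A=\{\acc,\rej\}$; the DWW-refuter property of $\R$ plays no role in the lemma itself (it is only needed in the theorem that will invoke it). What we build is a double lasso inside a bottom strongly connected component, generalizing the single-state lasso of \Cref{lem: lan struct} to two base points $s^1$ and $s^2$ linked by paths.

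First I would fix a reachable ergodic component $C$ of the graph of $\R$, i.e.\ a set of strongly connected states, reachable from $s_0$, from which only states in $C$ are reachable; such a $C$ exists because the graph is finite. Pick $t \in C$ together with a path $p_0'$ from $s_0$ to $t$. Since $\R$ is responsive (its transition function is total on $\{\acc,\rej\}$) and $C$ is closed under transitions, reading $\rej^\omega$ from $t$ produces a run $t=u_0,u_1,u_2,\dots$ that stays in $C$; by finiteness $u_a=u_b$ for some $a<b$. Put $s^1:=u_a$: then $u_a,\dots,u_b$ is a $\rej^+$-cycle $c^1$ at $s^1$, and prepending $p_0'$ to $u_0,\dots,u_a$ yields a path $p_0$ from $s_0$ to $s^1$. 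Symmetrically, reading $\acc^\omega$ from $s^1\in C$ stays in $C$ and yields a state $s^2\in C$ carrying an $\acc^+$-cycle $c^2$, together with a path $p_1$ from $s^1$ to $s^2$. Finally, since $s^1,s^2\in C$ and $C$ is strongly connected, there is a path $p_2$ from $s^2$ back to $s^1$ inside $C$. Assembling $p_0$, $p_1$, $p_2$ and attaching $c^1$ at $s^1$ and $c^2$ at $s^2$ gives the asserted structure, with the degenerate case $s^1=s^2$ allowed (e.g.\ when $C$ is a single state with an $\acc$- and a $\rej$-self-loop).

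I do not expect a genuine obstacle here; the care-points are purely bookkeeping. One must extract $p_1$ and $c^2$ from the \emph{same} run of $\acc^\omega$ read from $s^1$, so that $p_1$ really terminates at the base point of $c^2$, and similarly $p_2$ must be chosen within $C$ so that it terminates at the base point $s^1$ of $c^1$ --- this is exactly where ergodicity (strong connectivity of $C$) is used, and it is also what fails if one naively takes $s^1$ and $s^2$ outside a common bottom component. It should be noted that the labels along $p_0,p_1,p_2$ are unconstrained by the statement, so the fact that $p_1$ happens to be $\acc$-labelled is immaterial.
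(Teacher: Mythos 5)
Your proposal is correct and follows essentially the same route as the paper: pick a reachable ergodic SCC, read $\rej^\omega$ and $\acc^\omega$ there to obtain the two lassos, and use strong connectivity of the component to supply the connecting paths $p_0$, $p_1$, $p_2$. The only (immaterial) difference is that you anchor the $\acc^\omega$-run at $s^1$ rather than reading both runs from a single common state, which slightly streamlines the bookkeeping.
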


\begin{proof}
 Let $s \in S$ be state in an ergodic SCC of the graph of $\R$. Then the $\acc^+$- and $\rej^+$-cycle are obtained from the lassos formed by reading from $s$ the words $\acc^\omega$ and $\rej^\omega$, respectively. Since $s$ belongs to an ergodic SCC, there exist paths connecting the first states of these cycles.
\hfill \qed \end{proof}

We now obtain in the same way as before from \Cref{general:recognisability} and \Cref{twosnakes}, the desired certificate:

\begin{theorem}\label{thm:not-dww-cert}
An $\omega$-regular language $L$ is not in \textnormal{DWW} iff there exist five finite words $x, x_2, x_4 \in \Sigma^*$ and $x_1, x_3 \in \Sigma^+$, such that
\[x \cdot (x_1 + x_2 \cdot x_3^* \cdot x_4)^* \cdot x_1^\omega \subseteq L ~ \text{ and } ~ x \cdot (x_1 + x_2 \cdot x_3^* \cdot x_4)^* \cdot x_2 \cdot x_3^\omega \cap L = \emptyset.\]
\end{theorem}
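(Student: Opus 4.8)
The plan is to follow the proof of \Cref{lem:not-dbw-cert}, now instantiated with the DWW-characterization $\gamma=\zug{\Lacc,\Llegal}$ where $\Lacc=\infty\acc$ and $\Llegal=A^*\cdot(\acc^\omega+\rej^\omega)$, together with \Cref{general:recognisability} and the structural \Cref{twosnakes}. For the ``only if'' direction, assume $L$ is not in DWW; then by \Cref{general:recognisability} there is a DWW-refuter $\R$ for $L$, and by \Cref{twosnakes} it contains states $s^1,s^2$, a path $p_0$ from $s_0$ to $s^1$, a path $p_1$ from $s^1$ to $s^2$, a path $p_2$ from $s^2$ back to $s^1$, a $\rej^+$-cycle $c^1$ at $s^1$, and an $\acc^+$-cycle $c^2$ at $s^2$. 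Let $x,x_2,x_4\in\Sigma^*$ and $x_1,x_3\in\Sigma^+$ be the words that $\R$ outputs along $p_0,p_1,p_2,c^1,c^2$, respectively (since the environment initiates, each output starts at the second state of the segment; $x_1,x_3$ are nonempty because $c^1,c^2$ contain a transition). Reading $p_0$ from $s_0$ outputs $x$ and lands in $s^1$, and from $s^1$ every finite run back to $s^1$ assembled from $c^1$ and from $p_1\cdot(c^2)^*\cdot p_2$ outputs a word in $(x_1+x_2\cdot x_3^*\cdot x_4)^*$.

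Given $w\in x\cdot(x_1+x_2\cdot x_3^*\cdot x_4)^*\cdot x_1^\omega$, fix a decomposition of $w$ matching this expression and let $y\in A^\omega$ be the input that drives $\R$ through $p_0$, the chosen blocks, and then $c^1$ forever, so $\R(y)=w$. As $c^1$ is a $\rej^+$-cycle, $y$ ends in $\rej^\omega$, hence $y\in\Llegal$ and $y\notin\infty\acc=\Lacc$. Since $\R$ realizes $\NoReal(L,\gamma)$ we have $\R(y)\oplus y\in\NoReal(L,\gamma)$, and because $y\in\Llegal$ this forces $w\in L$ iff $y\notin\Lacc$, i.e.\ $w\in L$. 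Symmetrically, for $w\in x\cdot(x_1+x_2\cdot x_3^*\cdot x_4)^*\cdot x_2\cdot x_3^\omega$ the input $y$ driving $\R$ through $p_0$, the chosen blocks, $p_1$, and then $c^2$ forever ends in $\acc^\omega$, so $y\in\Llegal\cap\infty\acc$, which forces $w\notin L$. This yields the two stated relations.

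For the ``if'' direction I adapt Landweber's argument. Assume towards a contradiction that $\A=\zug{\Sigma,Q,q_0,\delta,\alpha}$ is a DWW with $L(\A)=L$. For any run $r$ the set $\infi(r)$ is a strongly connected set, hence lies inside a unique SCC $C(r)$, which by weakness of $\A$ is either contained in $\alpha$ (call it \emph{accepting}) or disjoint from $\alpha$ (\emph{rejecting}); moreover $r$ is accepting iff $C(r)$ is accepting, and $r$ eventually stays within $C(r)$. Starting from $v_0=x$ I build finite prefixes $v_1,v_2,\dots$ and SCCs $C_1,C_2,\dots$. Reading $x_1^\omega$ after $v_0$ gives a word in $x\cdot(x_1+x_2 x_3^* x_4)^*\cdot x_1^\omega\subseteq L$, so taking $v_1=v_0 x_1^{n_1}$ with $n_1$ large places $q_1=\delta(q_0,v_1)$ in an accepting SCC $C_1$. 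Reading $x_2 x_3^\omega$ after $v_1$ gives a word in $x\cdot(x_1+x_2 x_3^* x_4)^*\cdot x_2\cdot x_3^\omega$, hence not in $L$, so taking $v_2=v_1 x_2 x_3^{n_2}$ with $n_2$ large places $q_2$ in a rejecting SCC $C_2$ reachable from $C_1$; since $C_1,C_2$ have opposite status they are distinct, so $C_2$ lies strictly below $C_1$ in the reachability order on SCCs. Reading $x_4 x_1^\omega$ after $v_2$ then gives a word in $x\cdot(x_1+x_2 x_3^* x_4)^*\cdot x_1^\omega\subseteq L$ (the factor $x_2 x_3^{n_2} x_4$ being one copy of $x_2 x_3^* x_4$), yielding an accepting SCC $C_3$ strictly below $C_2$, and so on: alternately appending $x_2 x_3^{n}$ and $x_4 x_1^{n}$ keeps each $v_i$ of the form $x$ followed by a word in $(x_1+x_2 x_3^* x_4)^*$ followed by either a power of $x_1$ or $x_2$ followed by a power of $x_3$, and the construction never stops, producing an infinite strictly descending chain $C_1\succ C_2\succ C_3\succ\cdots$ in the finite SCC structure of $\A$ — a contradiction. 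I expect this last direction to be the main obstacle: one must check that each new prefix $v_i$ really stays inside the claimed regular expression (where the freedom of $x_3^*$ within a block is used) and that each step strictly decreases the SCC, which is exactly where weakness of $\A$ is essential so that ``accepting SCC'' and ``rejecting SCC'' are well defined.
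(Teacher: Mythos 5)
Your proof is correct. The \enquote{only if} direction follows exactly the paper's route: extract the two-cycle \enquote{flower} of \Cref{twosnakes} from the refuter guaranteed by \Cref{general:recognisability}, read off $x,x_1,x_2,x_3,x_4$ from $p_0,c^1,p_1,c^2,p_2$, and use the fact that the corresponding annotation sequences end in $\rej^\omega$ or $\acc^\omega$ (hence lie in $\Llegal$) to force membership or non-membership in $L$. The \enquote{if} direction is where you add genuine value: the paper only says the certificate is obtained \enquote{in the same way as before} and never spells out why the certificate excludes DWW-recognizability; moreover, the two earlier templates do not transfer verbatim --- the Landweber pumping argument of \Cref{lem:not-dbw-cert} produces a single repeated state with an $\alpha$-visit, and the color-counting argument of \Cref{thm:not-dww-i-j-cert} relies on the explicit bound $k$ on the number of colors, which is unavailable for general DWW. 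Your replacement --- building prefixes $v_0, v_1, v_2,\dots$ that alternately force accepting and rejecting terminal SCCs and observing that consecutive SCCs are reachable-and-distinct, hence form an infinite strictly descending chain in the finite condensation of $\A$ --- is exactly the right invariant for weakness, and your bookkeeping that each $v_i$ stays inside $x\cdot(x_1+x_2x_3^*x_4)^*$ followed by the appropriate tail (absorbing $x_2x_3^{n}x_4$ as one block of the star) is sound. One could alternatively phrase your descent as a color argument on the weak-DPW presentation, but the SCC formulation is cleaner and is what the paper's terse pointer presumably intends.
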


Recall that DWW=DBW$\cap$DCW, so one would define a DWW certificate by disjuncting the certificates for DBW and DCW in \Cref{lem:not-dbw-cert,lem:not-dcw-cert}. \Cref{thm:not-dww-cert}, however, suggests a different certificate, and it is interesting to relate it to the ones for DBW and DCW.  Also note that while the DBW, DCW, and DPW certificates are covered by \cite[Lemma 14]{Wag79}, this is not the case for the DWW certificate in \Cref{thm:not-dww-cert}.

Recall that at the bottom of the depth hierarchy we have safety and co-safety languages, whose intersection is the set of bounded languages.

\begin{theorem}\label{thm:not-bounded}
An $\omega$-regular language $L$ is not a bounded language iff there exist six finite words $\hat{x}_{0}, \hat{x}_{1}, \hat{x}_2 \in \Sigma^*$ and $x_0, x_{1}, x_2 \in \Sigma^+$, such that
\[\hat{x}_0 \cdot x_0^* \cdot \hat{x}_{1} \cdot x_{1}^\omega \subseteq L \qquad \text{ and } \qquad \hat{x}_0 \cdot x_0^* \cdot \hat{x}_{2} \cdot x_{2}^\omega \cap L = \emptyset .\]
\end{theorem}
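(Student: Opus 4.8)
The plan is to reuse the two-step template of the earlier certificate theorems, instantiated for the characterization $\gamma=\zug{\Lacc,\Llegal}$ of bounded languages, where $A=\{\acc,\rej,?\}$, $\Lacc=A^*\cdot\acc^\omega$, and $\Llegal=?^*\cdot(\acc^\omega+\rej^\omega)$. The ``only if'' direction extracts a structural lasso-pattern from a $\gamma$-refuter and reads off the six words; the ``if'' direction derives a contradiction from boundedness of $L$.

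For the ``only if'' direction, assume $L$ is not bounded. By \Cref{general:recognisability} there is a $\gamma$-refuter $\R=\zug{A,\Sigma,{\it env},S,s_0,\rho,\tau}$ realizing $\NoReal(L,\gamma)$. The crucial feature of $\Llegal=?^*\cdot(\acc^\omega+\rej^\omega)$ is that a legal annotation is a finite $?$-block followed by a pure $\acc^\omega$- or $\rej^\omega$-tail; so the structure to extract is a ``$?$-lasso reaching a state $s^*$'' and, from $s^*$, two independent lassos, one fed $\rej^\omega$ and one fed $\acc^\omega$ (an easy analogue of \Cref{twosnakes}). Concretely: since $\R$ is responsive, reading $?^\omega$ from $s_0$ produces a (possibly empty) $?^+$-path $p_0$ from $s_0$ to some state $s^*$ together with a nontrivial $?^+$-cycle $c_0$ at $s^*$; then reading $\rej^\omega$ (resp.\ $\acc^\omega$) from $s^*$ produces a (possibly empty) $\rej^+$-path $p_1$ (resp.\ $\acc^+$-path $p_2$) leading to a nontrivial $\rej^+$-cycle $c_1$ (resp.\ $\acc^+$-cycle $c_2$). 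Let $\hat x_0, x_0, \hat x_1, x_1, \hat x_2, x_2$ be the words $\R$ outputs along $p_0, c_0, p_1, c_1, p_2, c_2$; as in \Cref{lem:not-dbw-cert}, the first letter of each is $\tau$ of the second state of that segment, since $\R$ is environment-initiated, so $x_0,x_1,x_2\in\Sigma^+$ and $\hat x_0,\hat x_1,\hat x_2\in\Sigma^*$.

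Then, for every $j\geq 0$, the word $\hat x_0\cdot x_0^j\cdot\hat x_1\cdot x_1^\omega$ equals $\R(y)$ for the input $y=?^{|p_0|}(?^{|c_0|})^j\,\rej^{|p_1|}(\rej^{|c_1|})^\omega\in ?^*\rej^\omega\subseteq\Llegal$, with $y\notin\Lacc$; since $\R$ realizes $\NoReal(L,\gamma)$ and $y\in\Llegal$, membership of $\R(y)\oplus y$ in $\NoReal(L,\gamma)$ forces $\R(y)\in L$, so $\hat x_0\cdot x_0^*\cdot\hat x_1\cdot x_1^\omega\subseteq L$. Symmetrically, $\hat x_0\cdot x_0^j\cdot\hat x_2\cdot x_2^\omega=\R(y')$ for $y'=?^{m}\acc^\omega\in\Llegal$ with $y'\in\Lacc$, which forces $\R(y')\notin L$, so $\hat x_0\cdot x_0^*\cdot\hat x_2\cdot x_2^\omega\cap L=\emptyset$; this is the desired certificate. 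For the ``if'' direction, suppose towards a contradiction that $L$ is bounded, and let $T=\{v\in\Sigma^*: v\cdot\Sigma^\omega\cap L\neq\emptyset\text{ and }v\cdot\Sigma^\omega\not\subseteq L\}$ be the prefix-closed tree of undetermined prefixes. An infinite branch of $T$ would be a word no prefix of which determines membership, contradicting the definition of boundedness; hence by K\"onig's lemma $T$ is finite, so there is $k$ such that membership of any word in $L$ depends only on its first $k$ letters. Choosing $j$ with $|\hat x_0|+j\cdot|x_0|\geq k$ (possible since $|x_0|\geq 1$), the words $\hat x_0 x_0^j\hat x_1 x_1^\omega$ and $\hat x_0 x_0^j\hat x_2 x_2^\omega$ agree on their first $k$ letters, hence one lies in $L$ iff the other does; but the first belongs to $\hat x_0\cdot x_0^*\cdot\hat x_1\cdot x_1^\omega\subseteq L$ and the second to $\hat x_0\cdot x_0^*\cdot\hat x_2\cdot x_2^\omega$, which is disjoint from $L$ --- a contradiction. (Alternatively, one can argue in the style of \Cref{thm:not-dww-i-j-cert}: from a \dgamw\ $\A=\zug{\Sigma,Q,q_0,\delta,\alpha}$ for $L$ one shows, using that the $\acc$- and $\rej$-labelled parts of $\A$ are closed under $\delta$, that every $\delta(q_0,\hat x_0 x_0^j)$ is labelled $?$; since $Q$ is finite two of them coincide, yielding a run with annotation $?^\omega\notin\Llegal$, contradicting that $\A$ is a \dgamw.)

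The step I expect to be most delicate is bookkeeping rather than conceptual: correctly translating the extracted paths and cycles into the words $\hat x_i, x_i$ under the environment-initiated output convention, and verifying that the inputs $y$ realizing the certificate words indeed lie in $\Llegal=?^*(\acc^\omega+\rej^\omega)$ and on the intended side of $\Lacc$, so that $\NoReal(L,\gamma)$ yields the claimed membership. Both the structural extraction and the K\"onig's-lemma argument are otherwise routine given \Cref{general:recognisability} and the characterization of bounded languages.
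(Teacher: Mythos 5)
Your proof is correct. The ``only if'' direction is essentially the paper's: extract from the $\gamma$-refuter a $?$-lasso followed by a $\rej$-lasso and an $\acc$-lasso rooted at its knot, read off the six output words, and use the inputs $?^*\rej^\omega$ and $?^*\acc^\omega$ (both in $\Llegal$, on opposite sides of $\Lacc$) to force the two inclusions via $\NoReal(L,\gamma)$; this matches the paper's three-lasso construction. Where you genuinely diverge is the ``if'' direction. The paper argues on a \dgamw{} $\A$ for $L$: it case-splits on whether $\hat{x}_0\cdot x_0^\omega\in L$, uses the pigeonhole principle to find a repeated (hence $\acc$- or $\rej$-labelled) state after $|Q|$ iterations of $x_0$, and then uses the structural closure of the $\acc$/$\rej$ regions to accept (or reject) the wrong continuation. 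Your primary argument instead works directly from the semantic definition of boundedness: K\"onig's lemma applied to the prefix-closed tree of undetermined prefixes yields a \emph{uniform} bound $k$ such that membership depends only on the first $k$ letters, after which the two certificate words sharing the prefix $\hat{x}_0 x_0^j$ of length $\geq k$ immediately contradict each other. This is more elementary (no automaton, no case split, and it applies even to non-regular bounded languages), at the cost of invoking the uniformization step, which the automaton-based argument gets for free from finiteness of $Q$. Your parenthetical alternative---showing all states $\delta(q_0,\hat{x}_0 x_0^j)$ must be labelled $?$ and deriving an illegal $?$-cycle---is also sound and is arguably cleaner than the paper's case split, though your phrase ``annotation $?^\omega$'' deserves the one-line justification that intermediate states on the cycle must also be $?$ (an $\acc$- or $\rej$-labelled intermediate state would, by closure, force the label of the next $\delta(q_0,\hat{x}_0 x_0^j)$ as well). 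None of these are gaps; all three routes are valid.
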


\begin{proof}
Assume first that $L$ is not bounded. Then, by \Cref{general:recognisability}, there exists a $\zug{\Lacc^{\text{bounded}}, \Llegal^{\text{bounded}}}$-refuter $\R$ for it. From this refuter we can extract three lassos: a $?$-labeled lasso from which we obtain $\hat{x}_{0}$ and $x_0$; a $\rej$-labeled lasso starting at the entry-point of the first lasso from which we obtain $\hat{x}_{1}$ and $x_1$; and a $\acc$-labeled lasso starting at the entry-point of the first lasso from which we obtain $\hat{x}_{2}$ and $x_2$.

For the other direction assume by way of contradiction that there is a deterministic $\zug{\Lacc^{\text{bounded}}, \Llegal^{\text{bounded}}}$-automaton $\A =\zug{\Sigma,Q,q_0,\delta, \tau, \gamma}$ with $\lang(\A)=L$. Assume that $\hat{x}_{0} \cdot x_0^\omega \in L$. Thus after reading $|Q|$ letters one state has been repeated and by the constraint it must be accepting. Thus $\hat{x}_{0} \cdot x_0^{|Q|} \cdot \hat{x}_{2} \cdot x_2^\omega \in L$ which is a contradiction. The other case is analogous.
\hfill \qed \end{proof}

\fi

\section{Discussion and Directions for Future Research}

The automation of decision procedures makes certification essential. We suggest to use the winning strategy of the refuter in expressiveness games as a certificate to inexpressibility. We show that beyond this {\em state-based certificate}, the strategy induces a {\em word-based certificate}, generated from words traversed along a ``flower structure" the strategy contains, as well as a {\em language-based certificate}, consisting of languages that under- and over-approximate the language in question and that are not separable by automata in the desired class.

While our work considers {\em expressive power}, one can use similar ideas in order to question the {\em size\/} of automata needed to recognize a given language. For example, in the case of a regular language $L$ of finite words, the Myhill-Nerode characterization \cite{Myh57,Ner58} suggests to refute the existence of deterministic finite word automata (DFW) with $n$ states for $L$ by providing $n+1$ prefixes that are not right-congruent. Using our approach, one can alternatively consider the winning strategy of Refuter in a game in which the set of annotations includes also the state space, and $\Llegal$ ensures consistency of the transition relation. Even more interesting is refutation of size in the setting of automata on infinite words. Indeed, there, minimization is NP-complete \cite{Sch10}, and there are interesting connections between polynomial certificates and possible membership in co-NP, as well as connections between size of certificates and succinctness of the different classes of automata.

Finally, while the approximation scheme we studied is based on suggested over- and under-approximating languages, it is interesting to study approximations that are based on more flexible distance measures \cite{DFT19,GGS17}.

\bibliographystyle{splncs04}
\bibliography{../ok}

\vfill

{\small\medskip\noindent{\bf Open Access} This chapter is licensed under the terms of the Creative Commons\break Attribution 4.0 International License (\url{http://creativecommons.org/licenses/by/4.0/}), which permits use, sharing, adaptation, distribution and reproduction in any medium or format, as long as you give appropriate credit to the original author(s) and the source, provide a link to the Creative Commons license and indicate if changes were made.}

{\small \spaceskip .28em plus .1em minus .1em The images or other third party material in this chapter are included in the chapter's Creative Commons license, unless indicated otherwise in a credit line to the material.~If material is not included in the chapter's Creative Commons license and your intended\break use is not permitted by statutory regulation or exceeds the permitted use, you will need to obtain permission directly from the copyright holder.}

\medskip\noindent\includegraphics{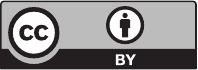}

\end{document}